\DeclareMathOperator*\cov{Cov}
\DeclareMathOperator*\var{Var}
\numberwithin{equation}{section}
\theoremstyle{plain}                
\newtheorem{theorem}{Theorem}[section]
\newtheorem{lemma}[theorem]{Lemma}
\newtheorem{proposition}[theorem]{Proposition}
\newtheorem{corollary}[theorem]{Corollary}
\theoremstyle{definition}           
\newtheorem{definition}[theorem]{Definition}
\newtheorem{example}[theorem]{Example}
\newtheorem{assumption}[theorem]{Assumption}
\theoremstyle{remark}
\newtheorem{remark}[theorem]{Remark}
\newcommand{\argmax}{\operatorname{argmax}}
\DeclareMathOperator*\esssup{esssup}
\newcommand{\tot}{\tfrac{1}{2}} 
\newcommand{\abs}[1]{\left| #1 \right|} 
\newcommand{\set}[1]{\left\{#1\right\}} 
\newcommand{\sets}[2]{\set{#1\,:\,#2}} 
\newcommand{\sq}[2]{( #2 )_{#1\in\N}}  
\newcommand{\norm}[1]{{||#1||}} 
\newcommand{\prf}[1]{ ( #1 )_{t\in [0,T]}}
\providecommand{\R}{} \renewcommand{\R}{{\mathbb R}}
\newcommand{\N}{{\mathbb N}}
\newcommand{\PP}{{\mathbb P}}
\newcommand{\QQ}{{\mathbb Q}}
\newcommand{\EE}{{\mathbb E}}
\newcommand{\FF}{{\mathcal F}}
\newcommand{\GG}{{\mathcal G}}
\newcommand{\HH}{{\mathcal H}}
\newcommand{\MM}{{\mathcal M}}
\newcommand{\EN}{{\mathcal E}}
\renewcommand{\AA}{{\mathcal A}}
\newcommand{\ba}{{\mathrm{ba}}}
\newcommand{\eps}{\varepsilon}
\newcommand{\ld}{\lambda}
\newcommand{\el}{{\mathbb L}} 
\newcommand{\lzer}{\el^0}
\newcommand{\lone}{\el^1}
\newcommand{\ltwo}{\el^2}
\newcommand{\linf}{\el^{\infty}}
\newcommand{\remove}[1]{\st{#1}}
\renewcommand{\remove}[1]{}
\newcounter{notenum}
\newcommand{\define}[1]{{\em #1}}
\newcommand{\ab}[1]{\langle #1 \rangle}
\newcommand{\bab}[1]{\big\langle #1 \big\rangle}
\newcommand{\Bab}[1]{\Big\langle #1 \Big\rangle}
\newcommand{\bS}{\mathbf{S}}
\newcommand{\bT}{\boldsymbol{\Theta}}
\newcommand{\bt}{\boldsymbol{\vartheta}}
\newcommand{\bvt}{\boldsymbol{\vartheta}}
\newcommand{\bB}{\boldsymbol{B}}
\renewcommand{\ba}{\boldsymbol{\alpha}}
\newcommand{\urrw}[2]{R^{(w)}(#2;#1)}
\newcommand{\urrwg}[1]{\urrw{\gamma}{#1}}
\newcommand{\rrw}[3]{R^{(w)}(#2;#1|#3)}
\newcommand{\rrb}[3]{R^{(b)}(#2;#1|#3)}
\newcommand{\rrwg}[1]{\rrw{\gamma}{#1}{\EN}}
\newcommand{\turrwg}[1]{\urrw{\gamma}{#1}}
\newcommand{\trrw}[3]{R_t^{(w)}(#2;#1|#3)}
\newcommand{\trrwg}[1]{\trrw{\gamma}{#1}{\EN}}
\newcommand{\Trrw}[3]{R_T^{(w)}(#2;#1|#3)}
\newcommand{\ullw}[2]{L^{(w)}(#2;#1)}
\newcommand{\Tullw}[2]{L_T^{(w)}(#2;#1)}
\newcommand{\ullwg}[1]{\ullw{\gamma}{#1}}
\newcommand{\llw}[3]{L^{(w)}(#2;#1|#3)}
\newcommand{\llwg}[1]{\llw{\gamma}{#1}{\EN}}
\newcommand{\tullw}[2]{L_t^{(w)}(#2;#1)}
\newcommand{\tllw}[3]{L_t^{(w)}(#2;#1|#3)}
\newcommand{\tllwg}[1]{\tllw{\gamma}{#1}{\EN}}
\newcommand{\ufun}[3]{u_{#1}(#2|#3)}
\newcommand{\ug}[1]{\ufun{\gamma}{#1}{\EN}}
\newcommand{\uone}[1]{\ufun{\gamma_1}{#1}{\EN_1}}
\newcommand{\utwo}[1]{\ufun{\gamma_2}{#1}{\EN_2}}
\newcommand{\ui}[1]{\ufun{\gamma_i}{#1}{\EN_i}}
\newcommand{\bu}{\boldsymbol{u}}
\newcommand{\bU}{\boldsymbol{U}}
\newcommand{\Pna}{{\mathcal P}^{NA}}
\newcommand{\Pu}{{\mathcal P}^{U}}
\newcommand{\Pui}{{\mathcal P}^{U}_i}
\newcommand{\bpr}{\boldsymbol{p}}
\newcommand{\afun}[2]{\AA_{#1}(#2)}
\newcommand{\ag}{\afun{\gamma}{\EN}}
\newcommand{\aone}{\afun{\gamma_1}{\EN_1}}
\newcommand{\atwo}{\afun{\gamma_2}{\EN_2}}
\newcommand{\cafun}[2]{\AA^{\circ}_{#1}(#2)}
\newcommand{\cag}{\cafun{\gamma}{\EN}}
\newcommand{\caone}{\cafun{\gamma_1}{\EN_1}}
\newcommand{\catwo}{\cafun{\gamma_2}{\EN_2}}
\newcommand{\nuw}[3]{\nu^{(w)}(#2;#1|#3)}
\newcommand{\tnuw}[3]{\nu_t^{(w)}(#2;#1|#3)}
\newcommand{\tnuwg}[1]{\tnuw{\gamma}{#1}{\EN}}
\newcommand{\tnub}[3]{\nu_t^{(b)}(#2;#1|#3)}
\newcommand{\tnubg}[1]{\tnub{\gamma}{#1}{\EN}}
\newcommand{\nub}[3]{\nu^{(b)}(#2;#1|#3)}
\newcommand{\nuwone}[1]{\nuw{\gamma_1}{#1}{\EN_1}}
\newcommand{\nubtwo}[1]{\nub{\gamma_2}{#1}{\EN_2}}
\newcommand{\nubone}[1]{\nub{\gamma_1}{#1}{\EN_1}}
\newcommand{\nuwi}[1]{\nuw{\gamma_i}{#1}{\EN_i}}
\newcommand{\nubi}[1]{\nub{\gamma_i}{#1}{\EN_i}}
\newcommand{\nuwg}[1]{\nuw{\gamma}{#1}{\EN}}
\newcommand{\nubg}[1]{\nub{\gamma}{#1}{\EN}}
\newcommand{\tgamma}{\tilde{\gamma}}
\newcommand{\unuw}[2]{\nu^{\left(w\right)}\left(#2;#1\right)}
\newcommand{\unub}[2]{\nu^{\left(b\right)}\left(#2;#1\right)}
\newcommand{\unuwone}[1]{\unuw{\gamma_1}{#1}}
\newcommand{\unuwtwo}[1]{\unuw{\gamma_2}{#1}}
\newcommand{\unubone}[1]{\unub{\gamma_1}{#1}}
\newcommand{\unubtwo}[1]{\unub{\gamma_2}{#1}}
\newcommand{\unuwg}[1]{\unuw{\gamma}{#1}}
\newcommand{\unubg}[1]{\unub{\gamma}{#1}}
\newcommand{\unuwb}[1]{\unuw{\bar{\gamma}}{#1}}
\newcommand{\rinf}{{\mathcal R}^{\infty}}
\newcommand{\agset}{\mathcal{G}_{\mathcal{E}_{1},\mathcal{E}_{2}}}
\newcommand{\cagset}{\mathcal{G}^{\circ}_{\mathcal{E}_{1},\mathcal{E}_{2}}}
\renewcommand{\agset}{\mathcal{G}}
\renewcommand{\cagset}{\mathcal{G}^{\circ}}
\newcommand{\QB}{\QQ^{(B)}}
\newcommand{\QZ}{\QQ^{(0)}}
\newcommand{\hq}[1]{\QQ^{(#1)}}
\newcommand{\hqi}[1]{\QQ_i^{(#1)}}
\newcommand{\hqone}[1]{\QQ_1^{(#1)}}
\newcommand{\hqtwo}[1]{\QQ_2^{(#1)}}
\DeclareMathOperator{\Var}{Var}
\newcommand{\bD}{\boldsymbol{\Delta}}
\newcommand{\bdelta}{\boldsymbol{\delta}}
\newcommand{\eR}{\overline{\R}}
\newcommand{\bz}{\boldsymbol{0}}
\begin{document}

\begin{center}
  \LARGE\bf On Agents' Agreement and Partial-Equilibrium Pricing in
  Incomplete Markets\footnote{Both authors were supported in part by
    the National Science Foundation under award number DMS-0706947
    during the preparation of this work. Any opinions, findings and
    conclusions or recommendations expressed in this material are
    those of the authors and do not necessarily reflect those of the
    National Science Foundation.

    The authors would like to thank Hans F{\"o}llmer, Lorenzo
    Garlappi, Stathis Tompaidis, Thaleia Zariphopoulou, and the
    participants of the 11th International Congress on Insurance:
    Mathematics and Economics, IME, Piraeus, Greece, July 2007 for
    fruitful discussions and good advice.}
\end{center}
\ \\[-1.5ex]
\begin{center} \today \end{center}
\ \\[0.5ex]

\begin{center}
\begin{minipage}{0.4\textwidth}
\begin{center}
{\bf\large Michail Anthropelos}\\
Department of Mathematics\\
University of Texas at Austin\\
1 University Station, C1200\\
Austin, TX 78712, USA\\
{\tt manthropelos@math.utexas.edu}\\
\end{center}
\end{minipage}
\begin{minipage}{0.4\textwidth}
\begin{center}
{\bf\large Gordan \v Zitkovi\' c}\\
Department of Mathematics\\
University of Texas at Austin\\
1 University Station, C1200\\
Austin, TX  78712, USA\\
{\tt gordanz@math.utexas.edu}\\
\end{center}
\end{minipage}
\end{center}

\ \\[1ex]

\begin{quote}
\noindent{\bf Abstract.} We consider two risk-averse financial agents who negotiate the price
  of an illiquid indivisible contingent claim in an incomplete
  semimartingale market environment. Under the assumption that the
  agents are exponential utility maximizers with non-traded random
  endowments, we provide necessary and sufficient conditions for
  negotiation to be successful, i.e., for the trade to occur.  We also
  study the asymptotic case where the size of the claim is small
  compared to the random endowments and we give a full
  characterization in this case. Finally, we study a partial-equilibrium
  problem for a bundle of divisible claims and establish existence and
  uniqueness. A number of technical results on conditional
  indifference prices is provided.
 \end{quote}

 \ \\[-0.1ex]

\noindent{\bf Key words and phrases.}
exponential utility,
incomplete markets,
indifference prices,
conditional indifference prices,
partial equilibrium,
random endowment,
risk-aversion,
semimartingales.

\ \\[-6ex]

\noindent{\bf 2000 Mathematics Subject Classification. } Primary: 91B70;
\ \ Secondary: 91B30, 60G35.

\ \\[3ex]

\section{Introduction}

\subsection{A description of the problem}
In an ideal complete market, each contingent claim can be perfectly
replicated and, thus, a rational agent is indifferent between the
(random) claim itself and its (deterministic) replication price.
Abundant empirical evidence shows that the real financial markets are
far from complete; only a small portion of contingent claims can be
replicated in the market to a satisfactory degree. A non-specific abstract
notion of rationality is no longer sufficient to single out a unique
``fair price'' of
any contingent claim. This effect is demonstrable  in over-the-counter
transactions where (typically) two agents negotiate a price of a
single, indivisible, not-perfectly-replicable contingent claim. The
final outcome of such a negotiation eventually hinges upon two
idiosyncratic factors - the agents' attitude towards risk and their
negotiation skills. The focus of the present paper is the former. We
ask the following question: {\em Under what conditions on the claim
whose price is being negotiated, the liquid-market
environment and the agents' risk attitudes will
a mutually beneficial agreement be feasible?}

Our modelling choices are informed by simplicity, but we steer clear of
oversimplification.  In particular, we assume that the two agents are
expected utility maximizers in the von Neumann-Morgenstern sense, with
a common investment horizon $T$.  For simplicity and analytic
tractability we assume that both agents' utility functions are
exponential, possibly with different risk-aversion parameters.  An
important feature which is not present in a major part of the past
work on the subject is the presence of random endowments - the agents
are assumed to hold an illiquid portfolio and the risk assessment of
any contingent claim will depend heavily upon its (co-)relation with
this illiquid portfolio.  In addition to the illiquid random
endowments, both agents have access to a liquid incomplete financial
market modelled by a general locally-bounded semimartingale. Also, we
assume that all pay-offs are already discounted in time-$0$ terms;
this way we can freely compare values corresponding to different
points in time. Mathematical-finance literature abounds with
information on the utility-maximization problem for a variety of
utility concepts (see, for instance, \citet{KarLehShrXu90},
\citet{KraSch99}, \citet{Sch01}, \citet{CviSchWan01}, \citet{Owe02},
\citet{OweZit06}).

Under the conditions described above, the two agents meet at time
$0$ when one of the agents (the seller)
offers a contingent claim with time-$T$
payoff $B$ to the other one (the buyer)
in exchange for a lump-sum payment $p$ at
time $t=0$. Our central question, posed above, can now be made more
precise and split into two separate components:
\begin{itemize}
\item[1.] Is there a number $p\in\R$ such that the exchange of
  the contingent claim $B$ for a lump sum $p$ is (strictly) beneficial for
 both agents?
\item[2.] If more than one such $p$ exists, can we determine the exact
  outcome of the negotiation?
\end{itemize}
The net gain $B-p$ will be beneficial for the buyer if he/she can find a
trading strategy such that the resulting wealth at time $T$ gives rise
to a higher expected utility than the one he/she would be able to
obtain without $B-p$. A similar criterion applies to the seller.
In case the answer to question 1.~is positive, we say that the agents
are {\em in agreement}.

While we give a fairly complete answer to  question 1., we only
touch upon the issues involved in question 2. In fact, it is not possible to
give a definitive  answer to this question without a precise model of the
negotiation process (see, for instance, \citet{BazNea92}). A partial answer is
possible,
however, when
the indivisibility assumption is dropped (see Section
\ref{sec:pepq}).

\subsection{Our results and how they relate to existing research} Our
results are naturally split into 4 parts which correspond to Sections
3, 4, 5 and Appendix A in the present paper:

\medskip

\noindent{\em 1.~Abstract agreement: } We start with the study of the class
$\cagset$ of all (appropriately regular) contingent claims $B$ for which the
  agents are in agreement. It is, perhaps, surprising that unless
  non-replicable random endowments are present, no contingent claims
  will lead to agreement, even for agents with  different
  risk-aversion coefficients.  When the random endowments are indeed
  present, we give a necessary and sufficient condition for the set
  $\cagset$ to be non-empty.
  This characterization is closely related
  to the notion of {\em optimal risk sharing} which was first studied
  in the context of  insurance/reinsurance negotiation (see, for instance,
  \citet{BuhJew79}, \citet{DanSca07}) and recently developed for in more general
  settings  (see \citet{FolSch04}, \citet{BarElk04},
  \citet{JouSchTou06} and \citet{FilKup08}).

\medskip

\noindent{\em 2.~Agreement for a specific claim - residual risk and
 approximation: }
Next, we consider a question which is in a sense dual to the one
tackled in the previous part: is there a criterion for an agreement
about a {\em given} claim $B$?  We propose two approaches: one through
the notion of residual risk and the other based on  asymptotic
approximation of conditional indifference prices for small quantities.

 Residual risk
(introduced in \citet{MusZar04a}) of a
random liability is defined as the difference between the
liability's payoff and the terminal
value of the optimal risk-monitoring strategy
at  maturity. We establish the following criterion, made precise in
the body of the paper:
a claim is mutually agreeable if and only if it reduces the  residual
 risk for  both agents.

 The other approach provides an explicit criterion in the asymptotic
 case when the size of the contingent claim $B$ is small compared to
 the size of the agents' random endowments. It is possible to
 phrase the agreement problem in terms of a relationship between the
 buyer's and the seller's conditional indifference price for the
 claim, so it is not unusual that an asymptotic study of these
 quantities plays a major role. More precisely, we establish a rather
 general Taylor-type approximation of the conditional exponential
 indifference price for locally bounded semimartingales on
 left-continuous filtrations. These approximations are  then used to
 give simple asymptotic criteria for agreeability, as well as
 the asymptotic size
 of the interval of mutually-agreeable prices.
Since it is not possible to obtain closed-form
representations of indifference prices in general market models, such
asymptotic results can be very useful even beyond the agreement
problem.

Asymptotic techniques  are not new in utility maximization problems. In
\citet{KraSir07}, a first order approximation of the optimal
hedging strategy in a semimartingale market for general utilities
(defined on the positive real line) is provided. This generalizes
the results of \citet{Hen02} and \citet{HenHob02}. For
exponential utility, the first derivative of the indifference
price for a vector of claims is given in \citet{IlhJonSir05}. By
imposing the assumption of left-continuity on the filtration,
 we generalize
their result (as well as the asymptotic approximation in  \citet{SirZar05})
by providing a second order approximation of the
price for a vector of claims.

\medskip

\noindent{\em 3.~Partial equilibrium prices: }
In the third part of the paper we look into the following, related,
question: {\em if the agents are allowed to choose not only the price of
the claim, but also the quantity traded, can the market clearing
(partial equilibrium)
conditions be used to compute these two quantities?}
We consider bundles of several contingent claims and prove
existence and uniqueness of equilibrium price-quantities, as
well as a formula for the partial equilibrium price.
The existence results of various types of competitive equilibria are a staple
of quantitative economics literature, and have recently made their way
into mathematical finance (see, among others,
\citet{DanLeV00}, \citet{HeaKu04}, \citet{Zit06},
\citet{BurRus07} and  \citet{FilKup08a}). Our incomplete
partial-equilibrium setting is, however, new
and not covered by any of the existing results.
As we already mentioned above, it is only in the present setting that we
can say something about question 2., i.e., about
the realized price $p$ of the offered contingent
claim $B$.

\medskip

\noindent{\em 4.~Conditional indifference prices:}
Our structural results rely heavily on the notion of conditional
indifference prices. Utility-indifference prices were first introduced
in \citet{HodNeu89}, and then further investigated
and developed by a large number of authors (see, for instance,
\citet{KarKou96}, \citet{Dav97}, \citet{Fri00}, \citet{RouElk00},
\citet{Zar01a}, \citet{HugKraSch05}, \citet{ManSch05},
\citet{KloSch07}).  The special case (pertinent to the present paper)
of exponential indifference prices was studied, e.g.,  in
\citet{Fri00}, \citet{RouElk00}, \citet{Zar01a},
\citet{DelGraRheSamSchStr02} and  \citet{ManSch05}.

In the presence of an illiquid random endowment, we talk about a
conditional indifference price (also known as relative indifference
price in \citet{MusZar03} and \citet{Sto06}). In the exponential world,
some of its properties can be obtained by a simple change of measure
which, effectively, removes the conditionality. Other properties,
however, cannot be dealt with in that manner. The goal of the last
part of this work is to establish certain properties
 of conditional indifference prices in
a general semimartingale market setting. We show, for instance, a
rather
unexpected fact that conditional indifference prices (unlike their
unconditional versions) do not have
to be monotone in the risk-aversion parameter.

\subsection{The structure of the paper} In Section \ref{sec:model}, we
describe the market model and introduce necessary notation. The notion
of agreement is introduced and our main abstract results are proven in
Section \ref{sec:agree}. In Section \ref{sec:approximation} we study
the small-quantity asymptotics of conditional indifference prices, and
use it to provide an agreement criterion. An example in the Brownian
setting is also presented. The topics  of Section \ref{sec:pepq} are
existence and uniqueness of partial equilibrium price-quantities for
vectors of contingent claims.  In Appendix \ref{sec:cond-price} we
state some properties of conditional indifference prices, and in
Appendix \ref{sec:res-risk} we give an outline of some known
results on residual risk.

\bigskip

\section{Some modelling and notational preliminaries}
\label{sec:model}
\subsection{The financial market}

Our model of the financial market is based on a filtered probability
space $(\Omega ,\mathcal{F},\mathbb{F },\PP)$, $\mathbb{F}=\left(
  \mathcal{F}_{t}\right) _{t\in [0,T]}$, $T>0$, which satisfies the
usual conditions of right-continuity and completeness. There are
$d+1$ traded assets ($d\in\mathbb{N}$), whose discounted price
processes are modelled by an $\mathbb{R}^{d+1}$-valued locally
bounded
semimartingale 
$(S^{(0)}_t;\bS_t)_{t\in [0,T]} =(S^{(0)}_{t} ;
S^{(1)}_{t},\dots,S^{(d)}_{t}) _{t\in[ 0,T ] }$. The first asset
$S^{(0)}_{t}$ plays the role of a num\'{e}raire security or a
discount factor.  Operationally, we simply set $S^{(0)}_t\equiv
1$, for all $t\in [0,T]$, a.s.

\subsection{Agent behaviour}
Placing ourselves in the von Neumann-Morgenstern framework, we
assume that each market participant evaluates the risk of an
uncertain position $X$ at time $T$ according to the expected
utility $\EE^{\PP}[U(X+\EN)]$, where $U$ is a utility function and
$\EN$ is the \define{random endowment} (accumulated illiquid wealth)
and $\PP$ is a subjective probability measure.
For technical reasons, we restrict our attention to
$\EN\in\linf(\FF)$ and the class of exponential utilities
\[ U(x)=-\exp(-\gamma x),\ x\in \mathbb{R}\] where the constant
$\gamma\in (0,\infty)$ 
is the (absolute) risk aversion coefficient.

\subsection{Admissible strategies and the absence of arbitrage}

A financial agent invests in the market by choosing a portfolio
strategy $\bvt$ in  an admissibility class $\bT$, to be specified
below.  The resulting \define{gains process} $\prf{G^{\bt}_t}$ is
simply the stochastic integral $G^{\bt}_t=(\bvt\cdot
\bS)_t=\int_0^t \bvt_u\, d\bS_u$.  Due to the exponential nature
of the utility functions considered here, we follow the setup
introduced in \citet{ManSch05} or \citet{DelGraRheSamSchStr02}.
Before we give a precise description of the aforementioned set
$\bT$, we need to introduce several concepts related to the
no-arbitrage requirement. We start with the set $\MM_a$
of \define{absolutely continuous local martingale measures}, where
\[\MM_a=\sets{\QQ\ll\PP}{ \bS\text{ is a local martingale under }
  \QQ}\]The set $\MM_e$
 of all elements $\QQ$ of
$\MM_a$ which additionally satisfy $\QQ\sim\PP$ is called \define{the
  set of equivalent local martingale measures}. For a probability
measure $\QQ$ on $(\Omega ,\mathcal{F})$, we define
\[\mathcal{H}(\QQ|\PP)
=\begin{cases}
\EE^{\PP}\left[ \frac{d\QQ}{d\PP}\ln \left( \frac{d\QQ}{d\PP}\right) \right] &
\QQ\ll\PP,\\
+\infty, & \text{otherwise}
\end{cases}
\]
The (extended) positive number $\HH(\QQ|\PP)$ is called the
\define{relative entropy} of the probability measure $\QQ$ with respect to probability measure $\PP$.
For details on the notion of relative entropy we refer the
interested reader to \citet{GraRhe02} or \citet{Fri00}.  We set
\[\MM_{e,f}
=\sets{\QQ\in\MM_e}{\HH(\QQ|\PP)<\infty}\] and enforce the
following assumption
\begin{assumption}
\label{ass:NA}
$\MM_{e,f}\not=\emptyset$.
\end{assumption}
Assumption \ref{ass:NA} trivially implies that
$\MM_e\not=\emptyset$ which, in turn, guarantees that no arbitrage
opportunities exist in the market (a stronger statement of ``no
free lunch with vanishing risk'' will hold, as well). The additional requirement in Assumption \ref{ass:NA}
is common in the literature and it
ensures that the choice of the exponential function for the
utility leads to a well-defined behavior for utility-maximizing
agents (see, among others, \citet{DelGraRheSamSchStr02},
\citet{Fri00}, \citet{Bec01a} and \citet{ManSch05}).

Having introduced the required families of probability measures, we
turn back to definition of the class $\bT$ of \define{admissible strategies}:
\begin{equation}\label{strategies}
\bT =\left\{ \mathbf{\bvt }\in L\left(
\bS\right) :\left(
\mathbf{\bvt }\cdot \bS\right) \text{ is a }\QQ \text{-martingale, }%
\forall\,\QQ\in \mathcal{M}_{e,f}\right\}
\end{equation}
where $L(\bS)$ is the set of all predictable $(d+1)$-dimensional
$\bS$-integrable processes on $[0,T]$. More information about the
set $\bT$ of admissible strategies is given in \citet{ManSch05}
(see also remarks on the set $\bT_2$ in
\citet{DelGraRheSamSchStr02}).

We remind the reader that $\lzer(\FF)$ denotes the set of all
($\PP$-a.s. equivalent classes of) $\FF$-measurable random
variables. A random variable $B\in \lzer(\FF)$ is said to be
\define{replicable} if there exists a constant $c$ and an
admissible strategy $\mathbf{\bvt }\in \bT $ such that $B=c+\left(
\bvt \cdot \bS\right) _{T}$ a.s.; the set of replicable random
variables will be denoted by $\mathcal{R}$.
More generally, we introduce the following equivalence relation
between random variables in $\lzer(\FF)$:
\begin{definition}\label{equiv.class}
We call two random variables $B,C\in \lzer(\FF)$ \emph{risk
equivalent} or \define{equal up to replicability} and write $B\sim C$,
if the difference $B-C$ is
replicable.
\end{definition}

It is clear that the relation $\sim $ is an equivalence relation
on $\lzer(\FF)$ (since $\bT $ is a vector space).
We note that the zero equivalence class coincides with the set
$\mathcal{R}$ of the replicable random variables. For future
reference, we let $\rinf={\mathcal R}\cap \linf(\FF)$ denote the
set of all (essentially) bounded replicable random variables.

\subsection{Some special probability measures}
\label{sse:special-measures} The expectation operator under a
probability measure $\QQ$ is denoted by $\EE^{\QQ}[\cdot] $, where
the superscript $\QQ$ is omitted in the case of the (subjective)
measure $\PP$. Also, for a random vector
$\mathbf{B}=(B_{1},B_{2},...,B_{n})$,
$\EE^{\QQ}[\mathbf{B}]$ stands for the vector
$(\EE^{\QQ}[B_{1}],\EE^{\QQ}[B_{2}],...,\EE^{\QQ}[B_{n}])\in\R^n$.

For a random variable $B\in \lzer(\FF)$ with
$\EE[\exp(B)]<\infty$, the probability measure whose Radon-Nikodym
derivative with respect to $\PP$ is given by $\tfrac{\exp(B)}{\EE[
  \exp(B)]} $, is denoted by $\PP_B$.
Furthermore, $\QZ$ denotes the probability measure in $\MM_a$ with
the minimal relative entropy with respect to $\PP$ i.e., the
probability measure for which $\HH(\QZ|\PP)\leq \HH(\QQ|\PP)$ for
all $\QQ\in\MM_a$. It is a consequence of Assumption \ref{ass:NA}
that the probability measure $\QZ$ exists,  is unique and belongs to
$\mathcal{M}_{e,f}$ (see \citet{Fri00}, page 43, Theorem 2.2).
Similarly, for every $B$ such that $\EE[\exp(B)]<\infty$, there
exists a unique probability measure $\QB \label{mentropymeasureB}
\in\MM_a$ such that $\HH(\QB|\PP_B)\leq \HH(\QQ|\PP_B)$ for all
$\QQ\in\MM_a$ (see \citet{DelGraRheSamSchStr02}, page 103).

\bigskip

\section{A notion of agreement between financial
agents}
\label{sec:agree}
\subsection{Utility maximization and indirect utility}  Given their
risk profiles, financial agents  trade in the financial market
with the goal of maximizing expected utility. More precisely, an
agent with initial wealth $x\in\R$, risk-aversion coefficient $\gamma$ and random
endowment $\EN\in\linf$   will choose a portfolio process
$\bvt\in\bT$ so as to maximize the expected utility $\EE[
-\exp(-\gamma( (x+\bvt\cdot\bS)_T+\EN))]$.
The value function $\ug{x}$ of the corresponding optimization
problem is given by
\begin{equation}
   \label{equ:value-function}
   \begin{split}
  \ug{x}=\underset{\bvt \in \bT }{\sup} \EE\Big[
  -\exp\left(-\gamma\big( x+( \bvt \cdot S)_{T}+\EN \big) \right)
  \Big],\ x\in\R.
   \end{split}
\end{equation}
Overloading the notation slightly, for any random variable $B\in
\linf(\FF)$ (interpreted as a contingent payoff with maturity $T$)
we define the
\define{indirect utility} of
  $B$   by $\ug{B}=\ufun{\gamma}{0}{\EN+B}$, i.e.
\begin{equation}\label{equ:value-function-B}
\ug{B}=\underset{\bvt \in \bT }{\sup }\EE \Big[ -\exp(-\gamma
\left( \left( \bvt \cdot S\right)_{T}+\EN +B\right)) \Big].
\end{equation}
\begin{remark}
Thanks to the choice of the exponential utility, the case where
the agents have different subjective probability measures, say
$\PP_1\neq\PP_2$, is also covered. Indeed, if we assume that
$\PP_1\approx\PP_2$ and $\ln(\frac{d\PP_1}{d\PP_2})\in\linf$, we
can reduce the analysis to the case of two
 agents with the same subjective measure, say $\PP_2$,
by adding $\gamma_1\ln(\frac{d\PP_1}{d\PP_2})$ to first
agent's random endowment.
\end{remark}
\subsection{A preference relation and a notion of acceptability}
The indirect utility $\ug{\cdot}$ induces a preference
relation $\preceq_{\gamma,\EN}$, on  $\linf(\FF)$;
for $B_1,B_2\in\linf(\FF)$, we set
\[ B_1 \preceq_{\gamma,\EN} B_2\ \text{ if }\  \ug{B_1} \leq \ug{B_2}.\]
In words, the payoff $B_2$ is  preferable
to the payoff $B_1$ for the agent with random endowment $\EN$ and
risk aversion coefficient $\gamma$, if the total payoff $\EN +B_2$
yields more indirect utility than the payoff $\EN+B_1$.

The set of all payoffs $B\in\linf(\FF)$ such that $0
\preceq_{\gamma,\EN} B$ is called the
$(\gamma,\EN)$-\define{acceptance set}, and we denote it by $\ag$.
Equivalently, we have
\begin{equation}
\label{equ:AA}
   \begin{split}
  \ag=\Big\{B\in \linf(\FF)\,:\,&
\underset{\bvt \in \bT }{\sup }\EE
\Big[
-\exp(-\gamma \left( \left( \bvt \cdot S\right)_{T}+\EN\right))
\Big]\leq\\\leq &\underset{\bvt\in \bT }{\sup }\EE
\Big[
-\exp(-\gamma \left( \left( \bvt \cdot S\right)_{T}+\EN+B\right))
\Big]
\Big\} .
   \end{split}
\end{equation}
Closely related to the relation $\preceq_{\gamma,\EN}$ is its strict version
$\prec_{\gamma,\EN}$ defined by
\[ B_1 \prec_{\gamma,\EN} B_2\ \text{ if }\  \ug{B_1} < \ug{B_2},\
B_1,B_2\in \linf(\FF).\]
The \define{strict $(\gamma,\EN)$-acceptance set} $\cag$, is
 defined
by $\cag=\sets{B\in \linf}{\ug{0}<
  \ug{B}}$, and the representation analogous to (\ref{equ:AA}) (with
$\leq$ replaced by $<$) holds.

It is a consequence of the choice of the exponential utility
function that the addition of any constant initial wealth $x\in\R$
to random endowment $\EN$ does not influence the acceptance sets
$\ag$ and $\cag$. More generally, we have the following simple
proposition, the proof of which is standard.
We remind the reader that a set $\mathcal{A}$
is called \emph{monotone} if $B\geq C$, a.s.~and $C\in\mathcal{A}$ imply
$B\in\mathcal{A}$.

\begin{proposition}\label{pro:agree-set}
For every $\EN\in\linf$ and $\gamma\in (0,\infty)$, the sets $\ag$
and $\cag$ are convex, monotone and
\begin{equation}\label{equ:AA_1=AA_2}
\EN_1\sim\EN_2 \text{ implies }
\afun{\gamma}{\EN_1}=\afun{\gamma}{\EN_2} \text{ and
}\cafun{\gamma}{\EN_1}=\cafun{\gamma}{\EN_2}
\end{equation}
\end{proposition}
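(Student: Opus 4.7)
The plan is to prove the three properties (monotonicity, convexity, replicability-invariance) in order, for $\ag$ first and then noting that the argument for $\cag$ is identical with strict inequalities.

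\textbf{Monotonicity.} For a fixed strategy $\bvt \in \bT$ and fixed $\EN$, the map $B \mapsto \EE[-\exp(-\gamma((\bvt \cdot S)_T + \EN + B))]$ is pointwise monotone increasing in $B$, since $x \mapsto -\exp(-\gamma x)$ is increasing. Taking the supremum over $\bvt \in \bT$ preserves monotonicity, so $B \geq C$ a.s.\ gives $\ug{B} \geq \ug{C}$. Thus $C \in \ag$ forces $\ug{B} \geq \ug{C} \geq \ug{0}$, proving monotonicity of $\ag$, and the same reasoning with strict inequalities settles $\cag$.

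\textbf{Convexity.} The core observation is that the integrand $(B, \bvt) \mapsto -\exp(-\gamma((\bvt \cdot S)_T + \EN + B))$ is jointly concave in $(B, \bvt)$, since $-\exp(-\gamma \cdot)$ is concave and $(\bvt \cdot S)_T + \EN + B$ is affine in $(B, \bvt)$. Integrating against $\PP$ preserves concavity, so $f(B, \bvt) := \EE[-\exp(-\gamma((\bvt \cdot S)_T + \EN + B))]$ is jointly concave on $\linf(\FF) \times \bT$. For $B_1, B_2 \in \linf(\FF)$ and $\lambda \in [0,1]$, choose arbitrary $\bvt_1, \bvt_2 \in \bT$; the strategy $\lambda \bvt_1 + (1-\lambda) \bvt_2$ still lies in $\bT$ (a vector space), so
\[
\ug{\lambda B_1 + (1-\lambda)B_2} \geq f(\lambda B_1 + (1-\lambda) B_2, \lambda \bvt_1 + (1-\lambda) \bvt_2) \geq \lambda f(B_1, \bvt_1) + (1-\lambda) f(B_2, \bvt_2).
\]
Taking the supremum over $\bvt_1$ and $\bvt_2$ independently gives $\ug{\lambda B_1 + (1-\lambda)B_2} \geq \lambda \ug{B_1} + (1-\lambda) \ug{B_2}$, i.e., $\ug{\cdot}$ is concave. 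Convexity of $\ag$ (and $\cag$ with strict inequality preserved under nontrivial convex combinations) follows immediately.

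\textbf{Invariance under $\sim$.} Suppose $\EN_1 \sim \EN_2$, i.e., $\EN_1 - \EN_2 = c + (\bvt^0 \cdot S)_T$ for some $c \in \R$ and $\bvt^0 \in \bT$. For any $B \in \linf(\FF)$ the substitution $\bvt \mapsto \bvt - \bvt^0$ yields, using that $\bT$ is a vector space so the reparametrization is a bijection of $\bT$ onto itself,
\[
\ufun{\gamma}{0}{\EN_1 + B} = e^{-\gamma c} \sup_{\bvt \in \bT} \EE\bigl[-\exp(-\gamma((\bvt \cdot S)_T + \EN_2 + B))\bigr] = e^{-\gamma c}\, \ufun{\gamma}{0}{\EN_2 + B}.
\]
Applied with $B = 0$ as well, the common multiplicative factor $e^{-\gamma c} > 0$ cancels when comparing $\ug{B}$ to $\ug{0}$, so the conditions $\ufun{\gamma}{0}{\EN_1} \leq \ufun{\gamma}{0}{\EN_1 + B}$ and $\ufun{\gamma}{0}{\EN_2} \leq \ufun{\gamma}{0}{\EN_2 + B}$ are equivalent; the same holds with strict inequalities.

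The only nontrivial step is convexity, and even that reduces to joint concavity of the integrand combined with the fact that $\bT$ is closed under linear combinations — both immediate from the definitions — so no real obstacle arises.
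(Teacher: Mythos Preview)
Your proof is correct and is exactly the standard argument the paper has in mind; the paper in fact omits the proof entirely, stating only that it ``is standard.'' Your three steps---monotonicity from the monotonicity of $-\exp(-\gamma\cdot)$, convexity from joint concavity of the integrand together with the vector-space structure of $\bT$, and replicability-invariance via the bijective shift $\bvt\mapsto\bvt+\bvt^0$ and the multiplicative factorization $e^{-\gamma c}$---are precisely what is expected here.
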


\subsection{Conditional indifference prices}
The acceptance set $\ag$ can be used to introduce the notion of a
conditional indifference price. The
\define{conditional writer's indifference price} $\nuwg{B}$
of the contingent claim $B\in
\linf(\FF)$ is defined by
\begin{equation}
   \label{equ:nuw-def}
   \begin{split}
 \nuwg{B}=\inf \sets{p\in\R}{p-B\in \ag}.
   \end{split}
\end{equation}
i.e., $\nuwg{B}$ is the minimum amount that the agent with
preference relation $\preceq_{\gamma,\EN}$ will be willing to sell
the claim with payoff $B$ for.
Similarly, the
\define{conditional buyer's indifference price} $\nubg{B}$ is
defined by
\begin{equation}
   \label{equ:nub-def}
   \begin{split}
 \nubg{B}=\sup \sets{p\in\R}{B-p\in \ag}.
   \end{split}
\end{equation}
i.e., $\nubg{B}$ is the maximum amount that the agent with
preference relation $\preceq_{\gamma,\EN}$ will offer for a
contingent claim with payoff $B$.

In the special case where $\EN\sim 0$, the corresponding prices
are called \textit{unconditional indifference prices} (or, simply,
indifference prices) and are denoted by $\unuwg{B}$ and
$\unubg{B}$. A compendium of relevant properties of both conditional and unconditional indifference
prices is given in Appendix A.

The notion of the indifference price has been studied by many
authors (see, among others, \citet{HodNeu89},
\citet{RouElk00},\citet{Hen02} and \citet{MusZar04a}). The definition
of the conditional indifference price under exponential utility
was given in \citet{Bec01a} for general semimartingale model, in
\citet{MusZar05} for a binomial case model and in \citet{Sto06} for
a diffusion model (where the price is called relative indifference
price). A discussion of the conditional indifference price under
general utility functions is given in \citet{OweZit06}.

\subsection{Agreement}

The present paper deals with the interaction between two
  financial agents,  with risk aversion coefficients
$\gamma_1$ and $\gamma_2$  and random
endowments $\EN_1,\EN_2\in \linf(\FF)$.


\begin{definition}
\label{def:agreement}
A contingent claim $B\in \linf(\FF)$ is said to be
\begin{enumerate}
\item \define{mutually agreeable} if there exists a number
$p\in\R$ such that $p-B\in \aone$ and $B-p\in\atwo$. \item
\define{strictly mutually agreeable} if there exists a number
$p\in\R$ such that $p-B\in \caone$ and $B-p\in\catwo$.
\end{enumerate}
If a claim $B$ is (strictly) mutually agreeable, the set of all
$p\in \R$ such that the conditions in (1) (or (2) in the strict
case) above hold is called the \define{set of (strictly) mutually
agreeable prices} for $B$.
\end{definition}

A discussion related to our notion of
 mutually agreeability
is given in \citet{JouSchTou06}, subsection 3.6, for cash
invariant monetary utility functions, but without the presence of
a financial market.

Using the conditional writer's and buyer's indifference prices,
$\nuwone{\cdot}$ and $\nubtwo{\cdot}$
defined above, we can give a simple characterization of the set of
mutually-agreeable prices.
\begin{proposition}
\label{pro:agr-prices}
A claim $B\in\linf(\FF)$ is mutually
agreeable if and only if
\begin{equation} \label{agreementcondition}
\nuwone{B}\leq \nubtwo{B}.\end{equation}
In that case, the set of
mutually-agreeable prices for $B$ is given by
\[[\nuwone{B},\nubtwo{B}].\]
\end{proposition}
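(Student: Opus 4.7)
The plan is to reduce the mutual agreeability condition to the non-emptiness of the intersection of two half-lines of prices and then identify those half-lines with $[\nuwone{B},\infty)$ and $(-\infty,\nubtwo{B}]$ via the definitions of the two indifference prices together with the monotonicity statement of Proposition~\ref{pro:agree-set}.

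First I would consider the set $P_s(B) := \sets{p\in\R}{p-B\in\aone}$ of prices at which the seller (agent~1) is willing to part with $B$. By monotonicity of $\aone$ (Proposition~\ref{pro:agree-set}), if $p\in P_s(B)$ and $p'\geq p$, then $p'-B\geq p-B$ a.s., so $p'-B\in\aone$; hence $P_s(B)$ is upward-closed. Combined with $\nuwone{B}=\inf P_s(B)$, this forces $(\nuwone{B},\infty)\subseteq P_s(B)\subseteq[\nuwone{B},\infty)$. The symmetric argument applied to $P_b(B):=\sets{p\in\R}{B-p\in\atwo}$ shows it is downward-closed with $(-\infty,\nubtwo{B})\subseteq P_b(B)\subseteq(-\infty,\nubtwo{B}]$.

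Next I would establish that both inclusions at the endpoints are equalities, i.e.\ $\nuwone{B}\in P_s(B)$ and $\nubtwo{B}\in P_b(B)$. The key observation is the cash-additivity of exponential utility: a direct computation from \eqref{equ:value-function-B} gives
\begin{equation*}
\ufun{\gamma_1}{0}{\EN_1+p-B}=\exp(-\gamma_1 p)\,\ufun{\gamma_1}{0}{\EN_1-B},
\end{equation*}
so $p\mapsto\ufun{\gamma_1}{0}{\EN_1+p-B}$ is continuous (indeed, a strictly increasing exponential). The defining condition $\ufun{\gamma_1}{0}{\EN_1}\leq\ufun{\gamma_1}{0}{\EN_1+p-B}$ of $P_s(B)$ is therefore a closed condition on $p$, forcing $P_s(B)=[\nuwone{B},\infty)$. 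The analogous computation for agent~2 yields $P_b(B)=(-\infty,\nubtwo{B}]$.

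Finally, by Definition~\ref{def:agreement}, the claim $B$ is mutually agreeable exactly when $P_s(B)\cap P_b(B)\neq\emptyset$, and the set of mutually agreeable prices is precisely this intersection. Using the descriptions obtained above, the intersection equals $[\nuwone{B},\nubtwo{B}]$, which is non-empty if and only if $\nuwone{B}\leq\nubtwo{B}$; this delivers both parts of the statement. The only mildly delicate step is the closedness at the two endpoints, but it is handled cleanly by the explicit scaling identity above; without the exponential structure, one would only obtain half-open intervals and a strict version of the inequality.
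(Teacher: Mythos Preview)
Your proof is correct. The paper does not supply an explicit proof of this proposition, presenting it instead as a ``simple characterization'' that follows directly from the definitions; your argument is precisely the natural one that the paper leaves implicit, and the use of the exponential scaling identity to close the intervals at their endpoints is exactly what is needed.
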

\begin{remark}\
\label{rem:agr-prices}
\begin{enumerate}
\item A version of Proposition \ref{pro:agr-prices} for strict
  mutually-agreeable prices with strict inequality in
  (\ref{agreementcondition}) and the interval
  $[\nuwone{B},\nubtwo{B}]$ replaced by
  its interior $(\nuwone{B},\nubtwo{B})$
  holds.
\item \label{ite:ag-prices-in-NA} For a contingent claim
$B\in\linf(\FF)\setminus\rinf$,
 each (strictly) mutually agreeable price $p$ of
  $B$ satisfies $p\in (\inf_{\QQ\in\MM_a} \EE^{\QQ}[B],
\sup_{\QQ\in\MM_a} \EE^{\QQ}[B])$ (see, e.g., \citet{OweZit06},
Proposition 7.2) i.e., every mutually agreeable price is an
arbitrage-free price. Trivially, every claim $B\in\rinf$ is
mutually agreeable and the mutually agreeable price is unique and
equal to the unique arbitrage-free price.
\end{enumerate}
\end{remark}

\subsection{The set of all mutually agreeable claims}
It will be important in the sequel to introduce separate notation
for the set of all (strictly) mutually agreeable claims:
\begin{equation}
   \label{equ:ag.set}
   \nonumber
   \begin{split}
     \agset & =\sets{B\in\linf}{ B \text{ is mutually agreeable}},
     \text{ and}, \\
     \cagset & =\sets{B\in\linf}{ B \text{ is strictly mutually
         agreeable}}.
   \end{split}
\end{equation}
 We
remind the reader that $\rinf$ is the set of all replicable claims
in $\linf(\FF)$.
\begin{proposition}\ \label{pro:first-prop-G}
\begin{enumerate}
\item $\agset$ is convex and $\sigma(\linf,\lone)$-closed.
\item $\agset\cap (-\agset)=\rinf$, $\cagset\cap
  (-\cagset)=\emptyset$,
\item $\agset=\linf$ if and only if $\rinf=\linf$.
\end{enumerate}
\end{proposition}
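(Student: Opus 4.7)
My plan is to reduce everything to the characterization
\[\agset=\set{B\in\linf : \nuwone{B}\le \nubtwo{B}},\qquad \cagset=\set{B\in\linf : \nuwone{B}< \nubtwo{B}}\]
supplied by Proposition \ref{pro:agr-prices} (and its strict analogue in Remark \ref{rem:agr-prices}), and then to invoke the convex-analytic properties of conditional indifference prices collected in Appendix \ref{sec:cond-price}.

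For (1), $\nuwg{\cdot}$ is convex and $\nubg{\cdot}$ is concave, so $B\mapsto \nuwone{B}-\nubtwo{B}$ is convex on $\linf$ and $\agset$, as its $0$-sublevel set, is convex. For $\sigma(\linf,\lone)$-closedness I would invoke the dual representations of exponential conditional indifference prices: $\nuwone{\cdot}$ is a supremum over $\QQ\in\MM_{e,f}$ of $\sigma(\linf,\lone)$-continuous affine functionals $B\mapsto\EE^{\QQ}[B]+c(\QQ)$ (and symmetrically $\nubtwo{\cdot}$ is an infimum of such functionals), hence $B\mapsto\nuwone{B}-\nubtwo{B}$ is $\sigma(\linf,\lone)$-lower semicontinuous, and its $0$-sublevel set is weak$^*$-closed.

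For (2), the inclusion $\rinf\subseteq \agset\cap(-\agset)$ is immediate: for $B\in\rinf$ with replication cost $c$, Appendix \ref{sec:cond-price} gives $\nuwi{B}=\nubi{B}=c$ for both $i=1,2$, so the choice $p=c$ makes $B$ and $-B$ simultaneously mutually agreeable. For the converse, the key structural observation is that $\nuwg{B}\ge \nubg{B}$ always. This follows directly from the definitions: if $p-B\in\ag$ and $B-q\in\ag$, then by convexity of $\ag$ the constant $\tfrac{p-q}{2}$ lies in $\ag$, and since $\ag\cap\R=[0,\infty)$ for the exponential utility, we obtain $p\ge q$. Applying Proposition \ref{pro:agr-prices} to $B$ and to $-B$ (using the identity $\nuwg{-B}=-\nubg{B}$) and combining with $\nuwi{\cdot}\ge\nubi{\cdot}$ yields the chain
\[\nuwone{B}\le \nubtwo{B}\le \nuwtwo{B}\le \nubone{B}\le \nuwone{B},\]
so all four prices coincide; in particular $\nuwi{B}=\nubi{B}$ for $i=1,2$. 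The standard characterization recorded in Appendix \ref{sec:cond-price} (the agreement of the writer's and the buyer's conditional indifference prices forces replicability, via the change-of-measure reduction to the unconditional case) then gives $B\in\rinf$. The statement $\cagset\cap(-\cagset)=\emptyset$ follows from the same chain, which becomes strict and contradictory.

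For (3), one direction is trivial: if $\rinf=\linf$, every $B$ has a unique arbitrage-free price that serves as a mutually agreeable price. Conversely, $\agset=\linf$ forces $\agset\cap(-\agset)=\linf$, and (2) then yields $\rinf=\linf$. The main obstacle in the whole argument is the implication $\nuwg{B}=\nubg{B}\Rightarrow B\in\rinf$ for the conditional case, but this is precisely the strict-duality fact that Appendix \ref{sec:cond-price} is designed to provide; everything else is bookkeeping on top of Proposition \ref{pro:agr-prices}.
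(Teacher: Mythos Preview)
Your proposal is correct and follows essentially the same route as the paper. Part (1) is identical in spirit (the paper cites Proposition \ref{pro:risk-measures} and Corollary \ref{cor:nu-sc} for convexity/concavity and weak-$*$ semicontinuity, which is precisely your dual-representation argument). Part (3) is identical.

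The only organizational difference is in part (2). The paper works directly with witnessing constants: from $p-B\in\aone$, $B-p\in\atwo$, $\hat p+B\in\aone$, $-B-\hat p\in\atwo$ it averages to get $\tfrac12(p+\hat p)\in\aone\cap\atwo\cap\R$, forcing $p=-\hat p$ and hence $\nubone{B}\ge p\ge\nuwone{B}$; it then cites Corollary \ref{cor:non-agreement} (a forward reference, but non-circular) to conclude $B\in\rinf$. Your chain $\nuwone{B}\le\nubtwo{B}\le\nuwtwo{B}\le\nubone{B}\le\nuwone{B}$ arrives at the same inequality more symmetrically, and your appeal to Appendix \ref{sec:cond-price} is justified by Proposition \ref{pro:alpha} with $\alpha=-1$ (since $\nuwg{B}=\nubg{B}$ is exactly $\nuwg{-B}=-\nuwg{B}$). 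For $\cagset\cap(-\cagset)=\emptyset$ the paper instead observes $\cagset\cap(-\cagset)\subseteq\agset\cap(-\agset)=\rinf$ while $\cagset\cap\rinf=\emptyset$; your strict-chain contradiction is an equally short alternative.
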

\begin{proof}\
\begin{enumerate}
\item The convexity of $\agset$ follows from the convexity of
  $\nuwone{\cdot}$ and concavity of $\nubtwo{\cdot}$
(see Proposition \ref{pro:risk-measures}). As for the closedness,
it will be enough to note that $\nuwone{\cdot}:\linf\to\R$ is
lower semi-continuous and $\nubtwo{\cdot}:\linf\to\R$ is upper
semi-continuous with respect to the weak-* topology
$\sigma(\linf,\lone)$ (see Corollary \ref{cor:nu-sc}). \item
Trivially, $\rinf\subseteq\agset\cap (-\agset)$. For a claim $B\in
\agset \cap (-\agset)$, there exists $p,\hat{p}\in \R$ such that
$p-B \in \aone$  and $B-p\in \atwo$ , as well as $\hat{p}+B \in
\aone$ and $-B-\hat{p}\in \atwo$. It follows, by convexity of
$\aone$ that $ \tot (p+\hat{p})=\frac{1}{2} ( p-B+\hat{p}+B )\in
\aone$, i.e., $\uone{0} \leq \uone{\tot (p+\hat{p})}$. The strict
monotonicity of the value-function $\uone{.}$ for deterministic
arguments implies that $\tot(p+\hat{p})\geq 0$. Applying the same
line of reasoning to $\atwo$ and the value function $\utwo{.}$, we
get that $-\tot(p+\hat{p}) \geq 0$, and, consequently,
$p=-\hat{p}$. Using the definitions (\ref{equ:nuw-def}) and
(\ref{equ:nub-def}) of the conditional indifference prices we
easily get that $ \nubone{B}\geq p \geq \nuwone{B}$, which,
according to Corollary \ref{cor:non-agreement}, implies that
$B\in\rinf$.

To prove the second claim, it suffices to note that
$\cagset\cap\rinf=\emptyset$. Indeed,
$\nuwone{B}=\nubtwo{B}=\EE^{\QQ}[B]$ for $B\in \rinf$ and all
$\QQ\in\MM_a$. \item If $\agset=\linf$ then $\linf\subseteq
\agset\cap (-\agset)$ so $\linf=\rinf$, by (2) above. Conversely,
if $\linf=\rinf$ then $\linf= \agset\cap (-\agset)\subseteq
\agset$.
\end{enumerate}
\end{proof}
\begin{remark}
The weak-* topology $\sigma(\linf, \lone)$ in Proposition
\ref{pro:first-prop-G} can be replaced by an even weaker one,
namely the coarsest topology $\tau$ on $\linf$ which makes the
expectation mappings $\EE^{\QQ}[\cdot]:\linf\to\R$ continuous for
each $\QQ\in\MM_{e,f}$.
\end{remark}

\subsection{No agreement without random endowments}
The following, at first glance surprising, result states that mere
difference in risk-aversion is not enough for two exponential agents
to agree on a price of {\em any} contingent claim. Qualitatively
different random endowments are needed.
\begin{proposition}
[Non-agreement with replicable random endowments] Suppose that
\label{pro:non-agreement}
$\EN_1\sim\EN_2\sim 0$. Then $\agset=\rinf$ and
$\cagset=\emptyset$.
\end{proposition}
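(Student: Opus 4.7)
The plan is to use the replicability of the endowments to collapse conditional indifference prices to their unconditional counterparts and then exploit the ``marginal-pricing'' lower bound $\unuw{\gamma}{F}\ge \EE^{\QZ}[F]$, which is applied uniformly to both agents because $\QZ$ is intrinsic to the market and does not depend on the risk-aversion parameter.

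First, since $\EN_1\sim \EN_2\sim 0$, the invariance \eqref{equ:AA_1=AA_2} of Proposition~\ref{pro:agree-set} gives $\afun{\gamma_i}{\EN_i}=\afun{\gamma_i}{0}$ for $i=1,2$, so from \eqref{equ:nuw-def}--\eqref{equ:nub-def} the conditional indifference prices coincide with the unconditional ones: $\nuwone{B}=\unuwone{B}$ and $\nubtwo{B}=\unubtwo{B}$. Combined with the elementary identity $\unubtwo{B}=-\unuwtwo{-B}$ (immediate from the definitions), Proposition~\ref{pro:agr-prices} rephrases $B\in\agset$ as
\begin{equation*}
 \unuwone{B}+\unuwtwo{-B}\ \le\ 0,
\end{equation*}
and Remark~\ref{rem:agr-prices}(1) rephrases $B\in\cagset$ as the same inequality in strict form.

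The analytic core is the standard lower bound
\begin{equation*}
 \unuw{\gamma}{F}\ \ge\ \EE^{\QZ}[F]\qquad\text{for all } F\in\linf(\FF),\ \gamma\in(0,\infty),
\end{equation*}
with equality if and only if $F\in\rinf$ (cf.\ Appendix~\ref{sec:cond-price}; the ``$\ge$'' direction is a one-line consequence of the dual representation of $\unuw{\gamma}{\cdot}$, in which $\QQ=\QZ$ is always admissible and contributes zero penalty). Because $\QZ$ is common to both agents, I can add the instances $(F,\gamma)=(B,\gamma_1)$ and $(F,\gamma)=(-B,\gamma_2)$ to obtain $\unuwone{B}+\unuwtwo{-B}\ge 0$. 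For any $B\in\agset$ the two inequalities squeeze to equality, forcing in particular $\unuwone{B}=\EE^{\QZ}[B]$; the equality clause of the bound then yields $B\in\rinf$. The reverse inclusion $\rinf\subseteq\agset$ is contained in Remark~\ref{rem:agr-prices}(2), giving $\agset=\rinf$. For $\cagset$, strict inequality $\unuwone{B}+\unuwtwo{-B}<0$ would directly contradict the established $\ge 0$ bound, so $\cagset=\emptyset$.

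The single non-routine step is the equality clause of the marginal-pricing bound: $\unuw{\gamma}{F}=\EE^{\QZ}[F]$ must imply $F\in\rinf$. This is where market incompleteness and strict risk-aversion interact---intuitively, $\QZ$ attains the supremum in the dual representation only when every admissible perturbation to $\QQ\in\MM_{e,f}$ leaves $\EE^{\QQ}[F]$ unchanged, which by a standard replication argument is equivalent to $F\in\rinf$. I would cite this directly from Appendix~\ref{sec:cond-price} rather than re-deriving it inside the proof of the proposition.
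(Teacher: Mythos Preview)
Your proof is correct and follows essentially the same route as the paper's: reduce to unconditional prices via $\EN_i\sim 0$, use the lower bound $\unuw{\gamma}{F}\ge \EE^{\QZ}[F]$ (the paper phrases this as the chain $\unuwone{B}\ge \EE^{\QZ}[B]\ge \unubtwo{B}$ rather than your sum form, but these are equivalent), and then invoke the equality characterization to force $B\in\rinf$. One small remark: your heuristic explanation of the equality clause via ``perturbations of $\QQ$'' is not the mechanism the paper actually uses---the paper derives it from the strict monotonicity of $\gamma\mapsto\unuwg{B}$ for $B\notin\rinf$ (Proposition~\ref{pro:monotone}) combined with the $\gamma\to 0$ limit \eqref{equ:uncon-asym}---but since you correctly indicate you would cite Appendix~\ref{sec:cond-price} for this fact, the proof stands as written.
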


\begin{proof} The limiting relationships in (\ref{equ:uncon-asym})
and the monotonicity properties of the indifference prices (see
Proposition \ref{pro:monotone}) imply that
\begin{equation}\label{equ:non-agreement}
\nuwone{B}=\unuwone{B} \geq \EE^{\QZ}[B] \geq
\unubtwo{B}=\nubtwo{B},\forall B\in\linf.
\end{equation}
Therefore, the strict inequality $\nuwone{B} < \nubtwo{B}$ -
needed for the strong agreement - cannot hold. Consequently,
$\cagset=\emptyset$.

If $B\in\agset$, (\ref{equ:non-agreement}) implies that
$\nuwone{B}=\unuwone{B}=\EE^{\QZ}[B]=\lim_{\gamma\to 0}
\unuw{\gamma}{B}$. Therefore, the function $\gamma\mapsto
\unuwg{B}$ can not be strictly increasing on $(0,\infty)$, so, by
Proposition \ref{pro:monotone}, we must have $B\in\rinf$. Hence,
$\agset\subseteq \rinf$, and part (2) of Proposition
\ref{pro:first-prop-G} implies that $\agset=\rinf$.
\end{proof}

Our following result, Corollary \ref{cor:non-agreement}, follows
directly from Proposition \ref{pro:non-agreement} and the fact
that the conditional indifference price becomes unconditional if
the measure $\PP$ is changed to $\PP_{-\gamma\EN}$ (see
Proposition \ref{pro:first-cond} and the discussion at the
beginning of subsection \ref{subsec:Conditional}).
\begin{corollary}\label{cor:non-agreement}
Suppose that $\EN_1\sim\EN_2$. Then for every $\gamma>0$ and
$B\in\linf$, we have
\[\begin{cases}
\nuw{\gamma}{B}{\EN_1}>\nub{\gamma}{B}{\EN_2} &
\text{for }B\notin\rinf,\\
\nuw{\gamma}{B}{\EN_1}=\nub{\gamma}{B}{\EN_2} & \text{otherwise.}
\end{cases}
\]
\end{corollary}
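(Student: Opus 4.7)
My plan is to reduce the conditional statement to the unconditional one proven in Proposition \ref{pro:non-agreement} via a change of measure, exactly as the post-statement remark suggests. Because of Proposition \ref{pro:agree-set}, both $\ag$ and the conditional indifference prices depend on $\EN$ only through its $\sim$-equivalence class, so I may pick a single representative $\EN\in\linf$ with $\EN_{1}\sim\EN_{2}\sim\EN$ and it suffices to show that for every $\gamma>0$ and every $B\in\linf$,
\[
\nuw{\gamma}{B}{\EN}>\nub{\gamma}{B}{\EN}\ \text{ if }\ B\notin\rinf,\qquad \nuw{\gamma}{B}{\EN}=\nub{\gamma}{B}{\EN}\ \text{ if }\ B\in\rinf.
\]

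The key step is to introduce the auxiliary measure $\hat\PP=\PP_{-\gamma\EN}$, whose density with respect to $\PP$ is proportional to $\exp(-\gamma\EN)$. Because $\EN\in\linf$, the Radon--Nikodym derivatives $d\hat\PP/d\PP$ and $d\PP/d\hat\PP$ are both bounded, so $\hat\PP\sim\PP$, the class $\bT$ of admissible strategies and the set $\rinf$ of replicable claims are unchanged, and Assumption \ref{ass:NA} transfers to $\hat\PP$ (bounded density implies $\MM_{e,f}(\hat\PP)=\MM_{e,f}(\PP)\neq\emptyset$). By Proposition \ref{pro:first-cond}, the conditional writer's and buyer's indifference prices under $\PP$ with endowment $\EN$ coincide, respectively, with the unconditional writer's and buyer's indifference prices under $\hat\PP$:
\[
\nuw{\gamma}{B}{\EN}=\unuwg{B}_{\hat\PP},\qquad \nub{\gamma}{B}{\EN}=\unubg{B}_{\hat\PP}.
\]

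Now apply Proposition \ref{pro:non-agreement} under $\hat\PP$ with both agents having the same risk-aversion coefficient $\gamma$ and zero random endowment. Its proof yields the sandwich
\[
\unuwg{B}_{\hat\PP}\;\geq\;\EE^{\hQ^{(0)}_{\hat\PP}}[B]\;\geq\;\unubg{B}_{\hat\PP},
\]
and moreover the equality in either inequality forces $\unuwg{B}_{\hat\PP}=\lim_{\gamma\downarrow 0}\unuw{\gamma}{B}_{\hat\PP}$, which by the strict monotonicity in Proposition \ref{pro:monotone} (applied under $\hat\PP$) cannot hold unless $B\in\rinf$. Translating this dichotomy back to the original measure via the identities in the previous paragraph gives the corollary.

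The only delicate point is checking that everything passes cleanly through the change of measure: the admissibility class, local boundedness of $\bS$, finiteness of the dual entropy, and the identification of $\rinf$ are all robust because $\hat\PP$ has a bounded (two-sided) density with respect to $\PP$. Once those routine verifications are in hand, the rest is a direct appeal to Propositions \ref{pro:non-agreement}, \ref{pro:first-cond}, \ref{pro:agree-set} and \ref{pro:monotone}.
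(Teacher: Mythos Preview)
Your proposal is correct and follows essentially the same route as the paper: reduce to a single representative endowment using the $\sim$-invariance of conditional prices, pass to the measure $\PP_{-\gamma\EN}$ so that the conditional prices become unconditional, and then invoke Proposition~\ref{pro:non-agreement} (together with Proposition~\ref{pro:monotone}) under the new measure. The only minor quibble is that the change-of-measure identity you attribute to Proposition~\ref{pro:first-cond} is really stated in the discussion opening subsection~\ref{subsec:Conditional}; otherwise your argument matches the paper's one-line justification, with the routine verifications about bounded densities spelled out.
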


\subsection{Agreement with random endowments}
Proposition \ref{pro:non-agreement} states that the absence of random
endowments is a sufficient condition for the lack of (strict)
agreement. Is it also necessary?
Given the result of Proposition \ref{pro:agr-prices},
the question of the existence of non-replicable mutually agreeable claims
leads to the following optimization problem with
 value function $\Sigma : (0,\infty)^2\times (\linf)^2
\to [0,+\infty]$, where
\begin{equation}
\label{equ:min-diff}
\Sigma(\gamma_1,\gamma_2,\EN_1,\EN_2)
=\sup_{B\in\linf} \big(  \nubtwo{B}- \nuwone{B} \big).
\end{equation}
It follows directly from the Definition \ref{def:agreement} of the
set $\GG$ that the following result holds:
\begin{proposition}
\label{pro:char-agree}
For $\EN_1,\EN_2\in\linf$, $\gamma_1,\gamma_2\in (0,\infty)$
and
$\Sigma=\Sigma (\gamma_1,\gamma_2,\EN_1,\EN_2)$, the following two
statements are equivalent
\begin{enumerate}
\item[(a)] $\cagset\not=\emptyset$, and
\item[(b)] $\Sigma>0$.
\end{enumerate}
\end{proposition}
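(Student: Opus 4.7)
The statement is essentially a tautological reformulation, so the plan is short: unpack the definition of $\Sigma$, then invoke the strict version of Proposition \ref{pro:agr-prices} (as noted in Remark \ref{rem:agr-prices}) which characterizes $\cagset$ via the inequality $\nuwone{B} < \nubtwo{B}$.

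For the implication $(a)\Rightarrow(b)$, I would pick any $B\in\cagset$. By the strict version of Proposition \ref{pro:agr-prices}, such a $B$ satisfies $\nuwone{B} < \nubtwo{B}$, so the quantity $\nubtwo{B}-\nuwone{B}$ is strictly positive. Since $\Sigma$ is the supremum over $\linf$ of exactly this quantity, we get $\Sigma \geq \nubtwo{B}-\nuwone{B} > 0$.

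For the converse $(b)\Rightarrow(a)$, the assumption $\Sigma > 0$ means that the supremum is strictly positive, so by definition of supremum there exists some $B\in\linf$ with $\nubtwo{B}-\nuwone{B} > 0$, i.e., $\nuwone{B} < \nubtwo{B}$. Applying the strict version of Proposition \ref{pro:agr-prices} in the opposite direction yields $B\in\cagset$, so $\cagset\neq\emptyset$.

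There is no real obstacle here; the only subtlety to flag is that one must use the \emph{strict} agreement characterization from Remark \ref{rem:agr-prices}(1) rather than the non-strict version in Proposition \ref{pro:agr-prices} itself, since both $(a)$ and the positivity in $(b)$ are strict conditions. Note also that $\Sigma$ is well-defined as an element of $[0,+\infty]$ because the choice $B\equiv 0$ gives $\nubtwo{0}-\nuwone{0}=0$, so the supremum is at least $0$.
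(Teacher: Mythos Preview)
Your proof is correct and mirrors the paper's own treatment: the paper simply states that the result ``follows directly from the Definition \ref{def:agreement} of the set $\GG$'' without giving any further argument, and your unpacking via the strict version of Proposition \ref{pro:agr-prices} is exactly the content of that direct observation.
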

\begin{remark}
  The optimization problem above permits an interpretation in terms of
  the so-called \emph{optimal risk-sharing problem}. In case where the
  agents do not have access to a financial market, this problem
  has recently been addressed
  by many authors (see, e.g., \citet{BarElk04} for the exponential
  utility case, \citet{JouSchTou06} for monetary utility functionals
  and \citet{BarSca08} for concave preference functionals). When a
  financial market is present, the problem of
  optimal risk sharing when both agents have exponential utility has
  been studied in \citet{BarElk05}, where the authors focus on the form
  of the optimal structure.
\end{remark}
Before we proceed, we need to define several terms.
\begin{definition}\
\begin{enumerate}
\item The sum $\mathcal{E}=\mathcal{E}_{1}+%
  \mathcal{E}_{2}$ of the random endowments of the agents is called the
  \define{aggregate endowment}.
\item A pair $(B_1,B_2)$ in $(\linf)^2$ is called an
  \define{allocation}, while an allocation $(B_1,B_2)$ such that
  $B_1+B_2=\EN$ is called a \define{feasible allocation}; the set of
  all feasible allocations will be denoted by $F(\EN)$
\item For an allocation $(B_1,B_2)$, the sum
  $\unubone{B_1}+\unubtwo{B_2}$, denoted by $\sigma(B_1,B_2)$, is
  called the \define{score} of $(B_1,B_2)$. The difference
  $\sigma(B_1,B_2)-\sigma(\EN_1,\EN_2)$ is called the \define{excess
    score} (where, for simplicity, the parameters $\gamma_1$ and $\gamma_2$ are omitted from the notation).
\end{enumerate}
\end{definition}

By (\ref{equ:con-to-uncon}), the expression
$\nubtwo{B}-\nuwone{B}$ appearing in (\ref{equ:min-diff}) above
can be rewritten as
\begin{equation}
   \label{equ:rewrit}
   \begin{split}
     \nubtwo{B}-\nuwone{B} & =\nubtwo{B}+\nubone{-B}
= (\unubtwo{B+\EN_2}-\unubtwo{\EN_2})\\ & +
     (\unubone{-B+\EN_1}-\unubone{\EN_1}) = \sigma(\EN_1-B,
     \EN_2+B)- \sigma(\EN_1,\EN_2).
   \end{split}
\end{equation}
So that
\[ \Sigma(\gamma_1,\gamma_2,\EN_1,\EN_2)= \sup
\sets{\sigma(B_1,B_2)}{(B_1,B_2)\in F(\EN)}-\sigma(\EN_1,\EN_2).
\]
In words, $\Sigma$ is the maximized the excess score. If we
think of the aggregate endowment $\EN$ as the total wealth of our
two-agent economy, the solution of (\ref{equ:min-diff}) (if it
exists) will provide a redistribution of wealth so as to maximize
the (improvement in) the score.
Even though there is no direct economic reason why the sum of
individual indifference prices should be maximized, Proposition
\ref{pro:score-improve} - which is a mere restatement of the discussion
above - explains why the score is a useful concept.

\begin{proposition}
  \label{pro:score-improve}
  For each $B\in\linf$, the following two statements are equivalent:
  \begin{enumerate}
  \item $B\in\cagset$, and
  \item $\sigma(\EN_1-B,\EN_2+B)> \sigma(\EN_1,\EN_2)$.
  \end{enumerate}
\end{proposition}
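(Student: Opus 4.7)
The plan is to observe that the proposition follows immediately by combining the strict version of Proposition \ref{pro:agr-prices} (stated in Remark \ref{rem:agr-prices}(1)) with the identity already derived in display (\ref{equ:rewrit}), so essentially no new work is required.

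More precisely, I would first recall that by the strict version of Proposition \ref{pro:agr-prices}, membership $B\in\cagset$ is equivalent to the strict inequality $\nuwone{B}<\nubtwo{B}$, i.e.\ to
\[
\nubtwo{B}-\nuwone{B}>0.
\]
Next, I would invoke the chain of equalities in (\ref{equ:rewrit}), which was itself obtained from the translation relation (\ref{equ:con-to-uncon}) between conditional and unconditional indifference prices together with the antisymmetry $\nubone{-B}=-\nuwone{B}$. That chain gives
\[
\nubtwo{B}-\nuwone{B}=\sigma(\EN_1-B,\EN_2+B)-\sigma(\EN_1,\EN_2).
\]
Substituting this equality into the strict inequality characterizing $\cagset$ yields the equivalence of (1) and (2).

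There is no real obstacle here: the two halves of the argument — (a) the strict-inequality characterization of strict mutual agreeability, and (b) the algebraic rewrite of $\nubtwo{B}-\nuwone{B}$ as an excess score — have both already been established, so the proof amounts to transcribing them and concluding. If anything, the only thing worth double-checking is that the strict version of Proposition \ref{pro:agr-prices} really does hold with strict inequality in both directions (as asserted in Remark \ref{rem:agr-prices}(1)), which follows from the definitions of $\caone$, $\catwo$, $\nuwone{\cdot}$ and $\nubtwo{\cdot}$ together with the lower/upper semicontinuity of the indifference-price functionals.
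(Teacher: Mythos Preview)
Your proposal is correct and matches the paper's own treatment: the paper does not give a separate proof of this proposition but explicitly describes it as ``a mere restatement of the discussion above,'' i.e., of the identity (\ref{equ:rewrit}) combined with the strict-inequality characterization of $\cagset$ from Remark \ref{rem:agr-prices}(1). Your write-up is exactly this, so nothing further is needed.
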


The following proposition (compare to Theorem 2.3 in
\citet{BarElk05}) characterizes the score-optimal allocation.

\begin{proposition}
  For any $\EN_1,\EN_2\in\linf$ and $\gamma_1,\gamma_2>0$ there
  exists $B^*\in\linf$ such that
  \[ \sigma(\EN_1-B^*,\EN_2+B^*)\geq \sigma(B_1,B_2),\text{ for all }
  (B_1,B_2)\in F(\EN).\] Moreover, $B^*$ is unique up to
  replicability and
  \[ B^* \sim \frac{\gamma_1 \EN_1-\gamma_2 \EN_2}{\gamma_1+\gamma_2}.\]
\end{proposition}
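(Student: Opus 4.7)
The plan is guess-and-verify using the standard dual representation of the unconditional exponential indifference price. Let $\gamma^*:=\gamma_1\gamma_2/(\gamma_1+\gamma_2)$, so that $1/\gamma_1+1/\gamma_2=1/\gamma^*$, and take the candidate
\[B^*:=\frac{\gamma_1\EN_1-\gamma_2\EN_2}{\gamma_1+\gamma_2},\qquad B_1^*:=\EN_1-B^*=\frac{\gamma_2}{\gamma_1+\gamma_2}\EN,\qquad B_2^*:=\EN_2+B^*=\frac{\gamma_1}{\gamma_1+\gamma_2}\EN.\]
Each $B_i^*$ lies in $\linf$, so $(B_1^*,B_2^*)\in F(\EN)$ and existence will be obtained once optimality is verified. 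The algebraic identity powering the whole proof is $\gamma_1 B_1^*=\gamma_2 B_2^*=\gamma^*\EN$.

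The verification uses the standard variational formula (which should be recorded in Appendix~\ref{sec:cond-price}): for every $\gamma>0$ and every $B\in\linf$,
\[\unub{\gamma}{B}=\inf_{\QQ\in\MM_{e,f}}\Bigl[\EE^\QQ[B]+\tfrac{1}{\gamma}\bigl(\HH(\QQ|\PP)-\HH(\QZ|\PP)\bigr)\Bigr],\]
the infimum being attained uniquely at the measure $\hq{-\gamma B}$. Applying this bound to both agents at the \emph{same} test measure $\QQ\in\MM_{e,f}$ and summing, the constraint $B_1+B_2=\EN$ together with $1/\gamma_1+1/\gamma_2=1/\gamma^*$ yields, for every $(B_1,B_2)\in F(\EN)$,
\[\sigma(B_1,B_2)\leq \EE^\QQ[\EN]+\tfrac{1}{\gamma^*}\bigl(\HH(\QQ|\PP)-\HH(\QZ|\PP)\bigr);\]
minimizing the right-hand side over $\QQ$ gives $\sup_{F(\EN)}\sigma\leq\unub{\gamma^*}{\EN}$. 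Because $\gamma_1 B_1^*=\gamma_2 B_2^*=\gamma^*\EN$, the three entropy-minimization problems characterizing $\hq{-\gamma_1 B_1^*}$, $\hq{-\gamma_2 B_2^*}$, and $\QQ^*:=\hq{-\gamma^*\EN}$ coincide, so $\QQ^*$ simultaneously attains the dual bound for both agents at $(B_1^*,B_2^*)$. Summing the two equalities gives $\sigma(B_1^*,B_2^*)=\unub{\gamma^*}{\EN}$, proving that the upper bound is tight and $(B_1^*,B_2^*)$ is optimal.

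For uniqueness up to replicability, suppose $(\tilde{B}_1,\tilde{B}_2)\in F(\EN)$ is another maximizer, and form the midpoint allocation $\bar{B}_i:=\tot(B_i^*+\tilde{B}_i)\in F(\EN)$. Concavity of each $\unub{\gamma_i}{\cdot}$ gives
\[\sigma(\bar{B}_1,\bar{B}_2)\geq \tot\sigma(B_1^*,B_2^*)+\tot\sigma(\tilde{B}_1,\tilde{B}_2)=\sigma(B_1^*,B_2^*),\]
and since $(B_1^*,B_2^*)$ is optimal, equality must hold throughout; because both concavity slacks are nonnegative, each must individually vanish. By strict concavity of $\unub{\gamma_i}{\cdot}$ modulo $\rinf$---a consequence of the strict convexity of relative entropy together with uniqueness of the dual minimizer---we conclude $\tilde{B}_i-B_i^*\in\rinf$, and therefore $\tilde{B}=\EN_1-\tilde{B}_1\sim B^*$. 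The only step I expect to require real care is this strict-concavity-mod-$\rinf$ assertion; once the dual representation and the uniqueness of $\hq{-\gamma B}$ from Appendix~\ref{sec:cond-price} are available, it reduces to a routine application of strict convexity of the entropy functional on $\MM_{e,f}$.
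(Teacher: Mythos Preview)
Your proof is correct and follows essentially the same route as the paper: both establish the bound $\sigma(B_1,B_2)\leq \unub{\gamma^*}{\EN}$ (equivalently $\unuwone{B-\EN_1}+\unuwtwo{-B-\EN_2}\geq\unuw{\gamma^*}{-\EN}$) via the dual entropy representation, and characterize the equality case through uniqueness of the entropy-minimizing measure. The paper packages this step as Lemma~\ref{lem:B-1-2} and invokes it directly, whereas you inline the dual argument and handle uniqueness with a midpoint construction plus strict concavity modulo $\rinf$; that last assertion is exactly the content of Lemma~\ref{lem:B-1-2}(b) (specialized to $\gamma_1=\gamma_2$), so the two arguments coincide in substance.
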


\begin{proof}
  By (\ref{equ:rewrit}), it suffices to show that
  \begin{equation}
    \label{equ:rewritt}
    \begin{split}
      \unuwtwo{-B^*-\EN_2}+ \unuwone{B^*-\EN_1}\leq
      \unuwtwo{-B-\EN_2}+ \unuwone{B-\EN_1}
    \end{split}
\end{equation}
for all $B\in\linf$, which is, in turn, a consequence of Lemma
\ref{lem:B-1-2}. Indeed, it states that
\[
\unuwone{B-\EN_1} + \unuwtwo{-B-\EN_2} \geq \unuw{\tgamma}{ -
  \EN_1-\EN_2 },\]
with equality if and only if
\[ \tfrac{\gamma_1}{\tgamma} (B-\EN_1)\sim \tfrac{\gamma_2}{\tgamma}
(-B-\EN_2),\text{ i.e., } B \sim
\frac{\gamma_1\EN_1-\gamma_2\EN_2}{\gamma_1+\gamma_2 } .\]
\end{proof}

\begin{corollary}
\label{cor:agreement}
The following statements are equivalent:
\begin{enumerate}
\item $\cagset=\emptyset$,
\item $B^*= \frac{\gamma_1\EN_1-\gamma_2\EN_2}{\gamma_1+\gamma_2 }$ is
  replicable, and
\item $ \frac{\gamma_1}{\gamma_2} \EN_1\sim \EN_2$.
\end{enumerate}
\end{corollary}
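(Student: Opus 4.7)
The strategy is to reduce everything to the score-maximization result of the preceding proposition and to the characterization of strict agreement via excess score given in Proposition \ref{pro:score-improve}. I would begin by disposing of the easy equivalence (2) $\Leftrightarrow$ (3): since the set $\rinf$ of replicable claims is a vector space (because $\bT$ is), the relation $\sim$ is preserved under multiplication by positive constants, so $B^* \in \rinf$ is the same as $\gamma_1\EN_1 - \gamma_2\EN_2 \sim 0$, which is the same as $\tfrac{\gamma_1}{\gamma_2}\EN_1 \sim \EN_2$. No real work is needed here.

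For the substantive equivalence (1) $\Leftrightarrow$ (2), my plan is to exploit the obvious bijection between $B \in \linf$ and feasible allocations $(\EN_1 - B, \EN_2 + B) \in F(\EN)$. Proposition \ref{pro:score-improve} rephrases $\cagset \neq \emptyset$ as the existence of some $B\in\linf$ with $\sigma(\EN_1 - B, \EN_2 + B) > \sigma(\EN_1, \EN_2)$, i.e., as the feasibility of a strict improvement over the allocation corresponding to $B=0$. Hence $\cagset = \emptyset$ is equivalent to the statement that the allocation $(\EN_1, \EN_2)$ itself is score-maximal in $F(\EN)$.

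Having set this up, the finish comes from the uniqueness-up-to-replicability clause of the previous proposition. For (2) $\Rightarrow$ (1): if $B^* \in \rinf$, then $B^* \sim 0$ and so by (\ref{equ:AA_1=AA_2}) (or, more directly, by the cash-invariance/replicability properties of $\unub{\gamma_i}{\cdot}$ recorded in the appendix) $\sigma(\EN_1 - B^*, \EN_2 + B^*) = \sigma(\EN_1,\EN_2)$; combining this with the global maximality of $B^*$ gives $\sigma(\EN_1-B,\EN_2+B) \leq \sigma(\EN_1,\EN_2)$ for every $B\in\linf$, so no $B$ can lie in $\cagset$. For (1) $\Rightarrow$ (2): if $\cagset = \emptyset$, the bound $\sigma(\EN_1 - B^*, \EN_2 + B^*) \leq \sigma(\EN_1, \EN_2)$ combined with the maximality inequality $\sigma(\EN_1 - B^*, \EN_2 + B^*) \geq \sigma(\EN_1, \EN_2)$ (which holds because $(\EN_1,\EN_2) \in F(\EN)$) forces equality; thus $B=0$ is a maximizer, and uniqueness up to $\sim$ yields $B^* \sim 0$.

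The only point requiring care is the invocation of uniqueness: I must make sure that ``$0$ attains the same score as $B^*$'' really does imply $B^* \sim 0$ via the preceding proposition, and that the passage from $B^* \sim 0$ back to the equality of scores uses nothing more than invariance of the unconditional indifference price under the addition of a replicable claim. Both are direct consequences of results already established, so no genuine obstacle is expected; the whole argument is essentially a repackaging of the score-optimal allocation formula.
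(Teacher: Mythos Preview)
Your proposal is correct and follows exactly the route the paper intends: the corollary is stated without proof in the paper, and your argument supplies precisely the missing steps by combining Proposition~\ref{pro:score-improve} (strict agreement $\Leftrightarrow$ strict excess score) with the existence and uniqueness-up-to-replicability of the score-optimal $B^*$ from the preceding proposition. The only minor remark is that the paper also has Proposition~\ref{pro:char-agree} available, which gives the equivalence $\cagset\neq\emptyset \Leftrightarrow \Sigma>0$ directly; using it would let you skip the sentence translating ``$\cagset=\emptyset$'' into ``$(\EN_1,\EN_2)$ is score-maximal,'' but this is purely cosmetic.
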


\begin{remark}
We can relate the existence of mutually agreeable
non-replicable claims with the well-known notion of \emph{Pareto
optimality}. More precisely, an allocation
$(B_1,B_2)\in{F}(\EN)$ is called Pareto optimal if
$\nexists (C_1,C_2)\in F(\EN)$ s.t.
$B_i\preceq_{\gamma_i,\EN_i}C_i$ for $i=1,2$ and
$B_i\prec_{\gamma_i,\EN_i}C_i$ for at least one $i=1,2$. It
follows from Corollary \ref{cor:agreement}, that the
condition $\frac{\gamma_1}{\gamma_2} \EN_1\sim \EN_2$, implies
that the allocation $(\EN_1,\EN_2)$ is the unique (up to
replicability) Pareto optimal one. If
$\frac{\gamma_1}{\gamma_2} \EN_1\not\sim \EN_2$, a transaction
involving the optimal claim $B^*$ will lead to
 a Pareto optimal allocation.
\end{remark}

\bigskip

\section{Residual risk and an approximation of the indifference price}
\label{sec:approximation}
\subsection{Agreement and the residual risk process}
The notion of {\em residual risk} for the indifference
price valuation was defined in \citet{MusZar04a} for the setting of
our Example \ref{exa:first}., in \citet{StoZar06} for the
stochastic volatility model and in \citet{MusZar06a} for a
binomial-type model. This notion can be used to give a characterization of
contingent claims in $\agset$ (see Propositions
\ref{pro:agree-res-risk} and \ref{pro:agree-res-risk-qv} below).
Before we state this characterization, we give a short
introduction to residual risk in a static setting (see Appendix
\ref{sec:res-risk} for definitions and some properties in the
dynamic setting).
\subsubsection{Residual risk in a static setting}
Let $\gamma>0$ be a risk-aversion coefficient, and let $B\in\linf$
be a contingent claim. It can be shown that the optimization
problem with the value function
$\ufun{\gamma}{x}{B-\unuwg{B}}$, introduced in
(\ref{equ:value-function}) admits an essentially unique
maximizer $\bvt^{(B)}\in \bT$ (see Theorem \ref{thm:dual-rep} in Appendix A,
\citet{DelGraRheSamSchStr02}, page 105, Theorem 2.2 and \citet{KabStr02}, page 127, Theorem 2.1). The
corresponding wealth process
\[ X^{(B)}_t=\unuwg{B}+\int_0^t \bvt^{(B)}_u\, d\bS_u,\] can be
interpreted as the optimal risk-monitoring strategy for the writer
of the claim $B$, compensated by $\unuwg{B}$ at the initial time.
The hedging error\begin{equation}
   \label{equ:def-res-risk}
   \begin{split}
 \urrwg{B} = B-X^{(B)}_T
   \end{split}
\end{equation}
is called the \define{(writer's) residual risk}. $\urrwg{B}$
can be interpreted as the risk ``left in $B$'' after the optimal hedging
has been performed.
 Note that $\urrwg{B}=0$, a.s., for all replicable claims
 $B\in\linf$.
  In the conditional case, an analogous discussion
and the decomposition formula (see \eqref{equ:con-to-uncon})
\[ \nuwg{B}=\unuwg{B-\EN}-\unuwg{-\EN},\] allow us to define
the \define{conditional residual risk} $\rrwg{B}$ by
\[ \rrwg{B}=\urrwg{B-\EN}-\urrwg{-\EN},\]
and obtain the following decomposition
\begin{equation}\label{equ:dec-RR}
B= \nuwg{B}+\int_0^T  \bt^{(B|\EN)}_t\, d\bS_t+ \rrwg{B}
\end{equation}
where $\bt^{(B|\EN)}_t=\bt^{(B-\EN)}_t-\bt^{(-\EN)}_t$, $t\in
[0,T]$. The process $\prf{\bt^{(B|\EN)}_t}$, as well as the
decomposition (\ref{equ:dec-RR}), could have been derived
equivalently using the optimization problems used to define the
conditional indifference prices. All of the above concepts have
natural analogues when seen from the buyer's side. Namely, we
define the (buyer's) residual risk by
$R^{(b)}(B;\gamma)=\urrwg{-B}$ and by
$R^{(b)}(B;\gamma|\EN)=R^{(b)}(B+\EN;\gamma)-R^{(b)}(\EN;\gamma)$
in the conditional case.
\begin{remark}
  Using decomposition (\ref{equ:dec-RR}) and the Proposition
  \ref{pro:risk-measures}, we observe
  that \[\nubg{R^{(w)}(B;\gamma|\EN)}=\nubg{B}-\nuwg{B}.\] Hence, the
  agreement condition (\ref{agreementcondition}) can also be written
  as \[B\in\agset\Leftrightarrow\nubg{R^{(w)}(B;\gamma|\EN)}\geq 0.\]
  In particular, $\nubg{R^{(w)}(B;\gamma|\EN)}< 0$ for all
  $\EN\in\mathcal{R}^{\infty}$, and
  $B\in\linf\setminus\mathcal{R}^{\infty}$.  We also note that
  $\nuwg{\rrwg{B}}=0$, for all  $\EN,B\in\linf$.
\end{remark}
The following proposition gives
a characterization of mutually
agreeable contingent claims in terms of their residual risk.

\begin{proposition}
\label{pro:agree-res-risk} For
$B\in\linf$, the following statements are equivalent:
\begin{enumerate}
\item $B\in\agset$,
\item the inequality
\begin{equation}
\label{equ:dif-res}
\EE^{\QQ}\left[ \rrw{\gamma_1}{B}{\EN_1} \right]+\EE^{\QQ}\left[ \rrb{\gamma_2}{B}{\EN_2} \right]
\geq 0
\end{equation}
holds for some $\QQ\in \mathcal{M}_{e,f}$.
\item the inequality
\begin{equation}
\label{equ:dif-res-two}
  \EE^{\QQ}\left[ \urrw{\gamma_1}{B-\EN_1}-\urrw{\gamma_1}{-\EN_1}
  \right]
+
 \EE^{\QQ}\left[ \urrw{\gamma_2}{-B-\EN_2}-\urrw{\gamma_2}{-\EN_2}
  \right]\geq 0
\end{equation}
holds for some $\QQ\in \mathcal{M}_{e,f}$.
\item the inequality \eqref{equ:dif-res} holds for all
  $\QQ\in\MM_{e,f}$.
\item the inequality \eqref{equ:dif-res-two} holds for all
  $\QQ\in\MM_{e,f}$.
\end{enumerate}
\end{proposition}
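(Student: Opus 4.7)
The plan is to show that statements $(2)$ and $(3)$ are literally the same (as are $(4)$ and $(5)$), and that the common left-hand side of the inequalities does not depend on the choice of $\QQ\in\MM_{e,f}$. Once the $\QQ$-independence is established, all four of $(2)$--$(5)$ collapse to the single scalar inequality $\nubtwo{B}-\nuwone{B}\geq 0$, which is the agreement criterion from Proposition~\ref{pro:agr-prices}.

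First I would unwind definitions: by construction,
\[
\rrw{\gamma_1}{B}{\EN_1}=\urrw{\gamma_1}{B-\EN_1}-\urrw{\gamma_1}{-\EN_1}\quad\text{and}\quad \rrb{\gamma_2}{B}{\EN_2}=\urrw{\gamma_2}{-B-\EN_2}-\urrw{\gamma_2}{-\EN_2},
\]
so the expressions appearing in $(2)$ and $(3)$ are identical, and similarly for $(4)$ and $(5)$. Thus the content of the proposition reduces to the equivalence of $(1)$, $(2)$ and $(4)$.

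The key computation is to exploit the writer-side decomposition~\eqref{equ:dec-RR},
\[
B=\nuwone{B}+\int_0^T \bt^{(B|\EN_1)}_t\,d\bS_t+\rrw{\gamma_1}{B}{\EN_1},
\]
together with the analogous buyer-side decomposition obtained by applying~\eqref{equ:dec-RR} to $-B$ with risk-aversion $\gamma_2$ and endowment $\EN_2$ (and using $\nubtwo{B}=-\nuw{\gamma_2}{-B}{\EN_2}$). Since $\bt^{(B|\EN_1)},\bt^{(-B|\EN_2)}\in\bT$, the definition~\eqref{strategies} of $\bT$ guarantees that the stochastic integrals $(\bt^{(B|\EN_1)}\cdot\bS)$ and $(\bt^{(-B|\EN_2)}\cdot\bS)$ are true $\QQ$-martingales starting from $0$ for every $\QQ\in\MM_{e,f}$. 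In particular, for any such $\QQ$ both residual risks are in $\lone(\QQ)$ (because $B\in\linf$) and taking $\QQ$-expectations yields
\[
\EE^{\QQ}\left[\rrw{\gamma_1}{B}{\EN_1}\right]=\EE^{\QQ}[B]-\nuwone{B},\qquad \EE^{\QQ}\left[\rrb{\gamma_2}{B}{\EN_2}\right]=\nubtwo{B}-\EE^{\QQ}[B].
\]
Adding these identities, the $\EE^{\QQ}[B]$ terms cancel and one obtains
\[
\EE^{\QQ}\left[\rrw{\gamma_1}{B}{\EN_1}\right]+\EE^{\QQ}\left[\rrb{\gamma_2}{B}{\EN_2}\right]=\nubtwo{B}-\nuwone{B},
\]
which is manifestly independent of $\QQ\in\MM_{e,f}$.

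This independence makes the quantifiers ``for some'' and ``for all'' in $(2)$--$(5)$ interchangeable, so all four statements are equivalent to $\nubtwo{B}\geq\nuwone{B}$. Proposition~\ref{pro:agr-prices} identifies this last inequality with $B\in\agset$, closing the loop. The only point requiring a little care is verifying the martingale (and integrability) claims for $(\bt^{(B|\EN_i)}\cdot\bS)$ under every $\QQ\in\MM_{e,f}$, but this is built into the very definition of the admissibility class $\bT$ and does not require any additional argument. Accordingly, I do not anticipate a genuine obstacle; the content of the proposition is essentially a bookkeeping consequence of~\eqref{equ:dec-RR} together with the martingale property defining $\bT$.
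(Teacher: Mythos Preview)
Your proposal is correct and follows essentially the same route as the paper's proof: the paper, too, reduces everything to the observation that the differences $B-\nuwone{B}-\rrw{\gamma_1}{B}{\EN_1}$ and $B-\nubtwo{B}+\rrb{\gamma_2}{B}{\EN_2}$ are stochastic integrals with integrands in $\bT$ (hence have zero $\QQ$-expectation for every $\QQ\in\MM_{e,f}$), together with the definitional identity between the conditional residual risks and the differences of unconditional ones. Your write-up is in fact slightly more explicit than the paper's, spelling out the two individual identities for $\EE^{\QQ}[\rrw{\gamma_1}{B}{\EN_1}]$ and $\EE^{\QQ}[\rrb{\gamma_2}{B}{\EN_2}]$ and the cancellation of $\EE^{\QQ}[B]$, but the underlying argument is identical.
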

\begin{proof}
It suffices to make the following two observations
\begin{itemize}
\item[(a)]  thanks to the definition \eqref{equ:def-res-risk} of
residual risk, the differences $ \nuwone{B} -
\rrw{\gamma_1}{B}{\EN_1}$ and
$\nubtwo{B}-\rrb{\gamma_2}{B}{\EN_2}$ are both of the form
$\int_0^T \bt_t\, d\bS_t$ with $\bt\in\bT$, and \item[(b)] the
following equality holds
\begin{equation}
    \nonumber
   \begin{split}
  \rrw{\gamma_1}{B}{\EN_1}-\rrb{\gamma_2}{B}{\EN_2}&=
\urrw{\gamma_1}{B-\EN_1}-\urrw{\gamma_1}{-\EN_1} \\ &+
\urrw{\gamma_2}{-B-\EN_2}-\urrw{\gamma_2}{-\EN_2}.
   \end{split}
\end{equation}
\end{itemize}
\ \\[-6ex]
\end{proof}
\begin{remark}
  It should be pointed out that it is enough to check the above
  inequalities just for some probability measure in
  $\mathcal{M}_{e,f}$. Also, it follows from the definition of the
  residual risk  that the inequality \eqref{equ:def-res-risk} implies
  that the transaction of claim $B$ at any price $p$ will decrease the
  sum of expected residual risks. If in addition $p\in
  (\nuwone{B},\nubtwo{B})$, each agent's expected residual risk will
  be decreased.
\end{remark}

Under the additional mild assumption of left-continuity for the
filtration $\mathbb{F}$, we can replace the criterion given in
Proposition \ref{pro:agree-res-risk} by the following one (see
Appendix \ref{sec:res-risk} for the additional notation), which sometimes is easier to check.
\begin{proposition}
\label{pro:agree-res-risk-qv}
Suppose that $\mathbb{F}$ is continuous. For $B\in\linf$, the
following two statements are equivalent:
\begin{enumerate}
\item $B\in\agset$, and
\item
\begin{multline*}
\gamma_{1} \EE^{\QZ}\Big[
\Bab{\urrw{\gamma_1}{B-\EN_1}}_T-
\Bab{\urrw{\gamma_1}{-\EN_1}}_T
\Big]+\\
+
\gamma_{2} \EE^{\QZ}\Big[
\Bab{\urrw{\gamma_2}{-B+\EN_2}}_T-
\Bab{\urrw{\gamma_2}{\EN_2}}_T
\Big] \geq 0
\end{multline*}
\end{enumerate}
\end{proposition}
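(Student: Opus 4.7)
The natural starting point is Proposition~\ref{pro:agree-res-risk} applied with $\QQ=\QZ$: the condition $B\in\agset$ is equivalent to the non-negativity of
\[
\EE^{\QZ}\!\big[\urrw{\gamma_1}{B-\EN_1}-\urrw{\gamma_1}{-\EN_1}\big]+\EE^{\QZ}\!\big[\urrw{\gamma_2}{-B-\EN_2}-\urrw{\gamma_2}{-\EN_2}\big].
\]
The task reduces to translating this ``first-moment'' condition into the stated ``second-moment'' (quadratic-variation) one, and the continuity of $\mathbb{F}$ is what enables the translation.

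The key identity I would establish, valid under continuous $\mathbb{F}$, is: for every $C\in\linf$,
\[
\EE^{\QZ}\!\big[\urrw{\gamma}{C}\big]=-\tfrac{\gamma}{2}\,\EE^{\QZ}\!\big[\langle R^{(w)}(C;\gamma)\rangle_T\big].
\]
The argument invokes the dynamic residual risk process $(R^{(w)}_t(C;\gamma))_{t\in[0,T]}$ from Appendix~\ref{sec:res-risk}: in a continuous filtration it is a continuous semimartingale, and the optimality characterization of the hedging strategy $\bvt^{(C)}$, combined with It\^o's formula (which, under continuity, produces only a $\tfrac{1}{2}\,d\langle\cdot\rangle$ correction with no jump-compensator terms), identifies its $\QZ$-drift as $-\tfrac{\gamma}{2}\,d\langle R^{(w)}(C;\gamma)\rangle_t$. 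Taking $\QZ$-expectations of this It\^o decomposition, after upgrading the martingale part from local to true martingale via a BMO bound leveraging $C\in\linf$, yields the identity.

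Plugging the identity into each of the four residual-risk terms of the first-moment inequality produces, up to the outer factors $-\tfrac{\gamma_i}{2}$ and straightforward algebraic rearrangement (and the standard sign/symmetry bookkeeping relating the writer's and buyer's residual risks for agent~$2$), exactly the claimed quadratic-variation inequality.

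\textbf{The main obstacle} is the It\^o analysis of $R^{(w)}$ under $\QZ$. Continuity of $\mathbb{F}$ is indispensable here: in a filtration with jumps, the It\^o formula for $\exp(\gamma R^{(w)})$ picks up compensator terms of the form $\exp(\gamma\Delta R^{(w)})-1-\gamma\Delta R^{(w)}$ that no longer reduce to the quadratic variation, and the clean drift identification above fails. Upgrading the martingale part of $R^{(w)}$ from a local to a true martingale (needed so that $\QZ$-expectations of the It\^o decomposition are meaningful) requires BMO-type estimates, which rely on the $\linf$-boundedness of $C$ (namely $B$ and the $\EN_i$).
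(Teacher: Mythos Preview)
Your approach is correct and essentially the same as the paper's: both start from Proposition~\ref{pro:agree-res-risk} with $\QQ=\QZ$ and then convert each term via the identity $\EE^{\QZ}[\Turrw{\gamma}{C}]=-\tfrac{\gamma}{2}\,\EE^{\QZ}[\langle R^{(w)}(C;\gamma)\rangle_T]$. The only difference is how that identity is justified: you propose deriving it from scratch via It\^o's formula for $\exp(\gamma R^{(w)})$ together with the optimality characterization of $\bvt^{(C)}$ and a BMO upgrade, whereas the paper simply invokes Theorem~\ref{thm:man-sch}(2) (the Mania--Schweizer decomposition $R_t=L_t-\tfrac{\gamma}{2}\langle L\rangle_t$ with $L\in\mathrm{BMO}(\QZ)$), from which $\langle R\rangle=\langle L\rangle$ and hence $R_t+\tfrac{\gamma}{2}\langle R\rangle_t=L_t$ is a $\QZ$-martingale. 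Your route re-proves a slice of that theorem; the paper's route is shorter because the needed decomposition is already packaged in Appendix~\ref{sec:res-risk}.
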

\begin{proof}
  The equivalence  follows from Proposition \ref{pro:agree-res-risk} and
  part (2) of Theorem \ref{thm:man-sch}, which effectively state that
  $\bab{\urrwg{B}}_t=\bab{\ullwg{B}}_t$ for all $t\in [0,T]$, so that
  $\turrwg{B}-\frac{\gamma}{2}\bab{\turrwg{B}}_T$ is a
  $\QZ$-martingale, for any $\gamma>0$ and $B\in\linf$.
\end{proof}

\begin{example}
\label{exa:first}
This example is set in an incomplete financial market similar to the
one considered in \citet{MusZar04a} (see, also, \citet{Hen02}).
The market consists of one risky asset
$S=\left( S_{t}\right) _{t\in [0,T]}$ with dynamics
\begin{eqnarray*}
dS_{t} &=&S_t\big(
\mu (t)\,dt+\sigma (t)\,dW_{t}^{(1)}\big)
\end{eqnarray*}%
and an additional (non-traded) factor
$Y=\left( Y_{t}\right) _{t\in [0,T]}$ which evolves is a unique strong
solution of
\begin{eqnarray*}
dY_{t} &=&b(Y_{t},t)\,dt+a(Y_{t},t)\left( \rho dW_{t}^{(1)}+\rho
^{\prime }dW_{t}^{(2)}\right),
\end{eqnarray*}%
where $W^{(1)}=(W_{t}^{(1)})_{t\in \lbrack 0,T]}$ and
$W^{(2)}=(W_{t}^{(2)})_{t\in \lbrack 0,T]}$ are two standard
independent Brownian Motions defined on a probability space $(\Omega
,\mathbb{F},(\mathcal{F}_{t}),\PP)$. The constant $\rho\label{corr.}
\in (-1,1)$ is the correlation coefficient and $\rho ^{\prime
}\label{rho.prime}:=\sqrt{1-\rho ^{2}}$. We assume that the
deterministic functions $\mu $, $\sigma :[0,T]\rightarrow \R $\ are
uniformly bounded ($\sigma >0$).

By Theorem 3 in \citet{MusZar04a} (in \citet{MusZar04a} $\mu $ and $\sigma $ are
constants, but their  arguments carry over to our setting), we have that
\begin{equation}\label{ex1.formula}
v^{(w)}\left( B;\gamma \right) =\frac{1}{\gamma \left( 1-\rho ^{2}\right) }%
\ln \left\{ \EE^{\QZ}\left( e^{\gamma \left( 1-\rho ^{2}\right)
B}\right) \right\},
\end{equation}
for any payoff
$B\in\linf$, such that $B=g(Y_T)$ for some bounded Borel  function
$g:\R\to\R$,
where the Radon-Nikodym derivative of $\QZ$ is given by
\begin{equation*}
  \frac{d\QZ}{d\PP}=\exp\left(\int_{0}^{T }\frac{1}{2}\lambda ^{2}(t)dt+\int_{0}^{T }\lambda
    (t)dW_{t}^{(1)}\right),
\end{equation*}
and $\lambda (t):=\frac{\mu (t)}{\sigma (t)}$ is the
\textit{Sharpe ratio} of $S$.
\end{example}

Let us further suppose that $\EN_1=g_1(Y_T)$ and
$\EN_2=g_2(Y_T)$ for some Borel bounded functions $g_1$ and $g_2$.
Proposition \ref{pro:agr-prices} and representation
(\ref{ex1.formula}) imply that  that $B=g(Y_T)\in\agset$ if and only if
\begin{equation*}
\left( \frac{\EE^{\QZ}\left( e^{\gamma _{1}\tilde{B}}e^{-\gamma _{1}\tilde{\EN}%
_{1}}\right) }{\EE^{\QZ}\left( e^{-\gamma _{1}\tilde{\EN}_{1}}\right) }\right) ^{%
\frac{\gamma _{2}}{\gamma _{1}}}\leq \frac{\EE^{\QZ}\left( e^{-\gamma _{2}\tilde{%
\EN}_{2}}\right) }{\EE^{\QZ}\left( e^{-\gamma _{2}\tilde{\EN}_{2}}e^{-\gamma _{2}%
\tilde{B}}\right) }
\end{equation*}
where $\tilde{B}=(1-\rho^2)B$ and $\tilde{\EN_i}=(1-\rho^2)\EN_i$,
$i=1,2$.

As we have seen, $\unuwone{B}> \unubtwo{B}$, $\forall B\nsim 0$.  It is
easy to verify that $\nuwone{B}\leq\unuwone{B}$ if and only if
$\cov^{\QZ}(\EN_1,B)\geq 0$, where $\cov^{\QZ}(.,.)$ is the covariance
under the measure $\QZ$. This means that the presence of random
endowment which is positively correlated with the claim payoff,
reduces the writer's indifference price. Similarly,
$\nubtwo{B}\geq\unubtwo{B}$ if and only if $\cov^{\QZ}(\EN_2,B)\leq
0$. Therefore, we infer that a necessary condition for a claim $B$ to be
mutually agreeable is that $\cov^{\QZ}(\EN_1,B)> 0$ or
$\cov^{\QZ}(\EN_2,B)< 0$.

\subsection{An asymptotic approximation of indifference prices}
When the size of the claim whose price is negotiated is small compared
to the sizes of agent's contingent claims (and this is typically the
case in practice), one can use a Taylor-type
expansion of the indifference price around $0$, and obtain
more precise quantitative answers to the agreement question. More
precisely, we assume that claim under consideration has the form
$\ba\cdot \bB$ for a some vector $\bB=(B_1,\dots, B_n)$ in
$(\linf)^n$, where $\ba\in\R^n$ should be interpreted as a small
parameter.
We start with a single agent's point of view and present an
approximation result for the indifference price $\nuwg{\ba\cdot\bB}$
of $\ba\cdot\bB$, where $\EN\in\linf$ and $\gamma>0$. To alleviate the
notation, we shorten $\nuwg{\ba\cdot\bB}$ to $w(\ba)$, for $\ba\in\R^n$.

A straightforward extension of Theorem 5.1 on page 590 in
\citet{IlhJonSir05}, where we use the fact that the conditional
indifference prices are just the conditional ones under a changed
measure, yields the following result:
\begin{proposition}
\label{pro:first-der}
  The function $w$ is continuously differentiable on $\R^n$ with
\[ \nabla w(\ba) = \EE^{\hq{\gamma \ba\cdot\bB-\gamma\EN}} [\bB]=
\big( \EE^{\hq{\gamma \ba\cdot \bB-\gamma\EN}}[B_1],\dots,
\EE^{\hq{\gamma \ba\cdot \bB-\gamma\EN}}[B_n] \big), \
\ba\in\R^n.\]
\end{proposition}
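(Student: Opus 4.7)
The plan is to reduce the conditional problem to an unconditional one via an exponential change of measure, and then invoke Theorem 5.1 of \citet{IlhJonSir05} directly. Set $\tilde{\PP} := \PP_{-\gamma \EN}$; since $\EN \in \linf$, the density $d\tilde{\PP}/d\PP$ is bounded above and away from $0$, so $\tilde{\PP} \sim \PP$, the classes $\MM_a$, $\MM_e$, and $\MM_{e,f}$ are the same whether computed with respect to $\PP$ or to $\tilde{\PP}$, and Assumption \ref{ass:NA} remains in force under $\tilde{\PP}$. The admissibility class $\bT$ is left unchanged as well.

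By the change-of-measure identity recalled just before the statement of the proposition (which is made explicit in Proposition \ref{pro:first-cond}), the conditional writer's indifference price of $\ba\cdot\bB$ under $(\PP,\EN)$ coincides with the \emph{unconditional} writer's indifference price of the same claim under $\tilde{\PP}$ with zero endowment. Hence the function in question becomes $\ba \mapsto \unuw{\gamma}{\ba\cdot\bB}$, computed in the market $(\bS,\tilde{\PP})$, and Theorem 5.1 of \citet{IlhJonSir05} applies verbatim. This yields continuous differentiability of $w$ on $\R^n$ and the formula
\[
\nabla w(\ba) \;=\; \EE^{\tilde{\QQ}(\ba)}[\bB],
\]
where $\tilde{\QQ}(\ba)$ denotes the measure in $\MM_a$ of minimal relative entropy with respect to $\tilde{\PP}_{\gamma \ba\cdot\bB}$.

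To match the right-hand side asserted in the proposition, I would then identify $\tilde{\QQ}(\ba)$ with $\hq{\gamma\ba\cdot\bB-\gamma\EN}$. By the chain rule for Radon--Nikodym derivatives,
\[
\frac{d\tilde{\PP}_{\gamma\ba\cdot\bB}}{d\PP} \;\propto\; e^{\gamma \ba\cdot\bB}\cdot \frac{d\tilde{\PP}}{d\PP} \;\propto\; e^{\gamma\ba\cdot\bB - \gamma\EN},
\]
so $\tilde{\PP}_{\gamma\ba\cdot\bB}=\PP_{\gamma\ba\cdot\bB-\gamma\EN}$. Since the minimal-entropy martingale measure depends only on $\MM_a$ and the reference probability, it follows that $\tilde{\QQ}(\ba)=\hq{\gamma\ba\cdot\bB-\gamma\EN}$, which gives exactly the claimed formula for each coordinate $\EE^{\hq{\gamma\ba\cdot\bB-\gamma\EN}}[B_i]$.

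The only obstacle is a bookkeeping one: verifying that the hypotheses of the Ilhan--Jonsson--Sircar theorem transfer cleanly from $\PP$ to $\tilde{\PP}$. Because both $d\tilde{\PP}/d\PP$ and its reciprocal are bounded, $\linf$-integrability of $\bB$ is preserved, the functionals $\HH(\cdot|\PP)$ and $\HH(\cdot|\tilde{\PP})$ differ by an essentially bounded additive term (so that $\MM_{e,f}$ is identical under both references), and the ``bounded from below'' admissibility structure behind $\bT$ is unaffected. Once these routine compatibility checks are in place, the three-step argument above delivers the proposition, and continuity of $\ba \mapsto \nabla w(\ba)$ is inherited from the continuity statement in Theorem 5.1 of \citet{IlhJonSir05} applied under $\tilde{\PP}$.
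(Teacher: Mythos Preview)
Your proposal is correct and follows exactly the route the paper indicates: the paper does not give a detailed proof but merely states that the result is ``a straightforward extension of Theorem 5.1 \dots\ in \citet{IlhJonSir05}, where we use the fact that the conditional indifference prices are just the [un]conditional ones under a changed measure.'' You have carried out precisely this reduction via $\tilde{\PP}=\PP_{-\gamma\EN}$ and correctly identified $\tilde{\PP}_{\gamma\ba\cdot\bB}=\PP_{\gamma\ba\cdot\bB-\gamma\EN}$, so your argument is a faithful (and more detailed) rendering of what the paper sketches.
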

The concept of minimized variance, defined below, is important for
the study of second derivatives of the function $w$.
\begin{definition} Let $\QQ\in\MM_e$ be an arbitrary martingale
  measure.
\label{def:controlled-variance}
\begin{enumerate}
\item For $B\in\linf$ we define the \define{projected variance}
$\Delta^{\QQ}(B)$ of $B$ under $\QQ$ as :
\begin{equation}
\Delta^{\QQ}(B)=\inf_{\bt\in\bT^2_{\QQ}} \EE^{\QQ}\left[ (
  B-\EE^{\QQ}[B]- (\bt\cdot \bS)_T)^2\right],
\end{equation}
where, $\bT^2_{\QQ}=\{\bt\in L(\bS):(\bt\cdot\bS)\text{ is square
  integrable martingale under }\QQ\}$, so that
$\underset{\QQ\in\mathcal{M}_{e,f}}{\bigcap}\bT^2_{\QQ}\subset\bT$.
\item For $B_1,B_2\in\linf$  we define the \define{projected
covariance} $\Delta^{\QQ}(B_1,B_2)$ of
  $B_1$ and $B_2$ by polarization:
\[ \Delta^{\QQ}(B_1,B_2)= \tot (
\Delta^{\QQ}(B_1+B_2)-\Delta^{\QQ}(B_1)-\Delta^{\QQ}(B_2)).\]
\item For a vector $\bB=(B_1,\dots, B_n)\in (\linf)^n$ and a probability
measure $\QQ\in\MM_e$,  we define the
\define{$\QQ$-projected variance-covariance} matrix $\bD^{\QQ}(\bB)$ by
\[ \bD^{\QQ}_{ij}(\bB)= \Delta^{\QQ} (B_i, B_j),\ i,j=1,\dots, n.\]
\end{enumerate}
\end{definition}

\begin{remark}\
\begin{enumerate}
\item The projected variance $\Delta^{\QQ}(B)$ is the square of the
  $\ltwo(\QQ)$-norm of the projection $P^{\QQ}(B)$ of the random
  variable $B\in\linf\subseteq \ltwo(\QQ)$ onto the closed subspace
  $\R\oplus \sets{ (\bt\cdot\bS)_T}{\bt\in \bT^2_{\QQ}}$ of
  $\ltwo(\QQ)$ (the closeness of $\sets{ (\bt\cdot\bS)_T}{\bt\in
    \bT^2_{\QQ}}$ in $\ltwo(\QQ)$ is an immediate consequence of the
  $\ltwo(d[\bS])$-$\ltwo(\QQ)$ isometry of stochastic
  integration). It follows that the projected covariance
  $\Delta^{\QQ}(B_1,B_2)$ can be represented as
\[ \Delta^{\QQ}(B_1,B_2)= \EE^{\QQ}[ P^{\QQ}(B_1) P^{\QQ}(B_2)].\]
In particular, $\Delta^{\QQ}(\cdot,\cdot)$ is a bilinear functional on
$\linf\times\linf$ and the following equality holds
\begin{equation}
   \label{equ:bilinear}
   \begin{split}
 \Delta^{\QQ} (\ba\cdot \bB)=\ba\cdot \bD^{\QQ}(\bB) \ba=\sum_{i,j=1}^n \alpha_i
\bD^{\QQ}_{ij}(\bB) \alpha_j,
   \end{split}
\end{equation}
for all $\QQ\in\MM_e$, $\bB\in(\linf)^n$ and $\ba=(\alpha_1,
\dots, \alpha_n)\in\R^n$.
\item Details on the notion of the
projected variance, which is closely related to  mean-variance
hedging, can be found in \citet{FolSon86} or \citet{Sch01a}. Note
that the existence of the minimizer in the Definition
\ref{def:controlled-variance} for bounded claims can be established using
Kunita-Watanabe decomposition of the uniformly integrable
$\QQ$-martingale $\left(B_{t}\right)_{t\in \left[
    0,T\right]}$ defined as $B_{t}=\EE^{\QQ}[
  B|\mathcal{F}_{t}]$. For details on the Kunita-Watanabe
decomposition we refer the reader to \citet{AnsStr93b}.
\end{enumerate}
\end{remark}
We remind the reader that a filtration $\mathbb{F}=\prf{\FF_t}$ is said to be
\define{left-continuous} if $\FF_{t}=\sigma(\cup_{s<t} \FF_s)$, for all $t\in
(0,T]$.
\begin{lemma}
\label{lem:sec-der}Suppose
that $n=1$ and that the filtration $\mathbb{F}$ is
left-continuous. Then for $B,\EN \in\linf$, the function
$w:\R\to\R$  is
twice differentiable at any $\alpha\in\R$ and its second
derivative is given by
\begin{equation}
w'' ( \alpha ) =\gamma \Delta^{\hq{\gamma\alpha B-\gamma\EN}}( B).
\end{equation}
\end{lemma}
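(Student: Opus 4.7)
The plan is to differentiate the first-order formula $w'(\alpha)=\EE^{\QQ_\alpha}[B]$ of Proposition~\ref{pro:first-der} (applied with $n=1$, writing $\QQ_\alpha:=\hq{\gamma\alpha B-\gamma\EN}$) once more in $\alpha$. The main tool is the Esscher-type representation of the minimal-entropy martingale measure available under left-continuity: by \citet{DelGraRheSamSchStr02} (see also \citet{KabStr02}),
\[
\frac{d\QQ_\alpha}{d\PP}=\exp\bigl(\gamma\alpha B-\gamma\EN-(\hat{\bt}_\alpha\cdot\bS)_T-c_\alpha\bigr),
\]
with $c_\alpha\in\R$ a normalization constant and $\hat{\bt}_\alpha$ the optimal investment strategy in the exponential utility maximization problem with random endowment $\EN-\alpha B$; in particular $(\hat{\bt}_\alpha\cdot\bS)$ is a $\QQ_\alpha$-martingale.

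Differentiating the density formula in $\alpha$, and using $\EE^{\PP}[d\QQ_\alpha/d\PP]\equiv 1$ to pin down $c'_\alpha=\gamma\EE^{\QQ_\alpha}[B]$ (the $(\hat{\bt}'_\alpha\cdot\bS)_T$-contribution to that normalization vanishes by the martingale property) gives
\[
w''(\alpha)=\EE^{\QQ_\alpha}\!\Bigl[B\bigl(\gamma B-(\hat{\bt}'_\alpha\cdot\bS)_T-\gamma\EE^{\QQ_\alpha}[B]\bigr)\Bigr]=\gamma\var^{\QQ_\alpha}(B)-\cov^{\QQ_\alpha}\!\bigl(B,(\hat{\bt}'_\alpha\cdot\bS)_T\bigr).
\]
To eliminate $\hat{\bt}'_\alpha$, I differentiate the $\QQ_\alpha$-martingale identity $\EE^{\QQ_\alpha}[(\bvt\cdot\bS)_T]=0$, valid for every $\bvt\in\bT^2_{\QQ_\alpha}$: the result is
\[
\gamma\EE^{\QQ_\alpha}\!\bigl[(\bvt\cdot\bS)_T(B-\EE^{\QQ_\alpha}[B])\bigr]=\EE^{\QQ_\alpha}\!\bigl[(\bvt\cdot\bS)_T(\hat{\bt}'_\alpha\cdot\bS)_T\bigr],
\]
which identifies $(\hat{\bt}'_\alpha\cdot\bS)_T$ as $\gamma$ times the $L^2(\QQ_\alpha)$-projection of $B-\EE^{\QQ_\alpha}[B]$ onto the closed subspace $\{(\bvt\cdot\bS)_T:\bvt\in\bT^2_{\QQ_\alpha}\}$. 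Combining this with the Pythagorean decomposition $\var^{\QQ_\alpha}(B)=\|\mathrm{proj}\|^2_{L^2(\QQ_\alpha)}+\Delta^{\QQ_\alpha}(B)$ yields the target $w''(\alpha)=\gamma\Delta^{\QQ_\alpha}(B)$.

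The hard part is justifying the formal manipulations, in particular the existence of $\hat{\bt}'_\alpha$ as an admissible predictable process and the interchanges of $\partial_\alpha$ with expectations. Boundedness of $B$ and $\EN$ and uniform integrability of $\{d\QQ_\alpha/d\PP\}$ on compact parameter sets make dominated convergence straightforward for the density side, but smoothness of the strategy map $\alpha\mapsto\hat{\bt}_\alpha$ is much less obvious. A cleaner route that I would probably take in practice sidesteps $\hat{\bt}'_\alpha$ altogether by working with finite increments: for $h\to 0$, I would expand $\EE^{\QQ_{\alpha+h}}[B]-\EE^{\QQ_\alpha}[B]$ through a quadratic lower bound on $\HH(\QQ_{\alpha+h}|\QQ_\alpha)$ obtained from strict convexity of $\mathcal{H}(\cdot|\tilde\PP)$ at its minimum, divide by $h$, and pass to the limit by invoking the Kunita--Watanabe decomposition of $B$ under $\QQ_\alpha$. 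Left-continuity of $\FFF$ enters precisely at this final step, guaranteeing closedness in $L^2(\QQ_\alpha)$ of the space of stochastic integrals so that the projection and thus $\Delta^{\QQ_\alpha}(B)$ are well-defined.
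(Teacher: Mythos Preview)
Your approach is genuinely different from the paper's, and the gap you already flag is real and not easily closed. You differentiate the Esscher density of $\QQ_\alpha$ in $\alpha$, which requires differentiability of $\alpha\mapsto\hat{\bt}_\alpha$ in a suitable $L^2$ sense; this is exactly the kind of regularity that is hard to establish a priori (it is essentially the sensitivity result of \citet{KraSir06,KraSir07} transplanted to the exponential setting). Your fallback via finite increments and a ``quadratic lower bound on $\HH(\QQ_{\alpha+h}|\QQ_\alpha)$'' is too vague to constitute a proof: you would need matching upper and lower quadratic bounds, and the upper bound is the delicate direction.

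The paper sidesteps all of this by working not with the density but with the \emph{value} $\unuw{\alpha\gamma}{B}$ directly, via the residual-risk machinery of \citet{ManSch05} (Theorem~\ref{thm:man-sch}). After reducing to $\EN=0$, $\alpha=0$, and $\EE^{\QZ}[B]=0$, the scaling property~(\ref{equ:scaling-property}) turns the second-derivative question into the limit $\lim_{\alpha\searrow 0}\alpha^{-1}\unuw{\alpha\gamma}{B}$. Mania--Schweizer gives $\unuw{\alpha\gamma}{B}=\tfrac{\alpha\gamma}{2}\EE^{\QZ}[\Tullw{\alpha\gamma}{B}^2]$ for a family of BMO$(\QZ)$-martingales $\tullw{\alpha\gamma}{B}$ that converge in BMO to the Kunita--Watanabe residual $\tullw{0}{B}$ as $\alpha\searrow 0$; this delivers the limit $\tfrac{\gamma}{2}\EE^{\QZ}[\Tullw{0}{B}^2]=\tfrac{\gamma}{2}\Delta^{\QZ}(B)$ without ever differentiating a strategy. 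Left-continuity enters through Theorem~\ref{thm:man-sch}, not through closedness of the integrand space (that holds for any $\QQ$ by the stochastic-integration isometry, as the paper notes in the remark after Definition~\ref{def:controlled-variance}); your final sentence misplaces the role of the hypothesis.
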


\begin{proof} Without loss of generality we may suppose that $\EN=0$
- otherwise, we just work under the measure
  $\PP_{-\gamma\EN}$. Let us focus on the case $\alpha=0$, where first derivative
  $w'(0)$ is equal to $\EE^{\QZ}[B]$ by Proposition
  \ref{pro:first-der}, so a simple translation by a constant (under
  which $\Delta$ is clearly invariant) allows us
  to assume, in addition, that $\EE^{\QZ}[B]=0$. It will be enough,
  therefore, to show that
\[ \lim_{\alpha\to 0} \abs{ \frac{\unuwg{\alpha B}}{\alpha^2} - \tfrac{\gamma}{2}
  \Delta^{\QZ}(B)}=0.\]
By the sign invariance of $\Delta(\cdot)$ and the scaling property
(\ref{equ:scaling-property}) of the indifference prices, it suffices
to consider only $\alpha>0$, i.e., it is enough to establish that
\begin{equation}
   \label{equ:sec-enough}
   \begin{split}
     \lim_{\alpha\searrow 0} \abs{ \frac{\unuw{\alpha \gamma} {
           B}}{\alpha} - \tfrac{\gamma}{2} \Delta^{\QZ}(B)}=0
   \end{split}
\end{equation}
Theorem \ref{thm:man-sch} and the definition of the residual risk in
Appendix \ref{sec:res-risk} state that
\[ \frac{\unuw{\alpha \gamma}{B}}{\alpha} = \frac{\gamma}{2}
\EE^{\QZ}[ \ab{\ullw{\alpha\gamma}{B}}_T]=\frac{\gamma}{2} \EE^{\QZ}[
  \Tullw{\alpha\gamma}{B} ^2],\]
where $\prf{\tullw{\alpha\gamma}{B}}$ is as in Theorem
\ref{thm:man-sch}. The BMO-convergence of the processes
$\prf{\tullw{\alpha\gamma}{B}}$ from the same theorem, implies, in
particular, the $\ltwo(\QZ)$-convergence of their terminal values,
i.e.,
\[ \Tullw{\alpha\gamma}{B}\to \Tullw{0}{B}\text{ in } \ltwo(\QZ).\]
Therefore, it remains to prove that
\[ \EE^{\QZ}\left[\Tullw{0}{B}^2\right]= \Delta^{\QZ}(B),\ \forall\, B\in\linf.\]
Thanks to the final part of Theorem \ref{thm:man-sch},
$\ullw{0}{B}$ is strongly orthogonal to any process of the form
$(\bt\cdot\bS)$, for $\bt\in \bT^{2}_{\QZ}$. In particular, for
$\bt\in\bT^{2}_{\QZ}$ and $\hat{\bt}^{(B)}$ as in Theorem
\ref{thm:man-sch},  we have
\[ B-(\bt\cdot\bS)_T= \big((\hat{\bt}^{(B)}-\bt)\cdot
\bS\big)_T+\Tullw{0}{B},\]
so that
\[ \EE^{\QZ}\left[ (B-(\bt\cdot\bS)_T)^2\right]= \EE\left[\Big(
  \big((\hat{\bt}^{(B)}-\bt)\cdot \bS\big)_T\Big)^2\right]+\EE\left[
  \Tullw{0}{B}^2\right].\]
Consequently, the minimum in the definition of $\Delta^{\QZ}[B]$
is attained at $\bt=\hat{\bt}^{(B)}$ so that
$\Delta^{\QQ}(B)=\EE^{\QZ}[ \Tullw{0}{B}^2 ]$, which is the
equality we set out to prove.

For $\alpha\neq 0$, we may again suppose that
$w'(\alpha)=\EE^{\QQ^{(\gamma\alpha B)}}[B]=0$. Hence, it is
enough to show that $\lim_{\eps\to
0}\frac{w(\alpha+\eps)-w(\alpha)}{\eps^2}=\frac{\gamma}{2}\Delta^{\QQ^{(\alpha\gamma
B)}}(B)$. For this, we note that
$w(\alpha+\eps)-w(\alpha)=v^{(w)}(\eps B|-\alpha B;\gamma)$ i.e.,
we can rewrite the second derivative at some $\alpha\neq 0$ as the
second derivative at $\alpha =0$ with random endowment. This observation
finishes the proof.
\end{proof}

The case $n>1$ is covered by the following lemma.
\begin{lemma}
\label{lem:taylor}
For $\ba,\bdelta\in\R^n$, $\bB=(B_1,\dots, B_n)\in(\linf)^n$ and
$\EN\in\linf$, we have
\begin{equation}
   \label{equ:limit-delta}
   \begin{split}
 \lim_{\eps\to 0} \frac{w(\ba+\eps \bdelta)-w(\ba)-\eps \nabla w(\ba)\cdot
  \bdelta}{\eps^2}= \tot \sum_{i,j=1}^n \bdelta_i
\bD^{\QQ}_{ij}(\bB) \bdelta_j.
   \end{split}
\end{equation}

\end{lemma}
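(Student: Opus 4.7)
The plan is to reduce the multi-dimensional Taylor expansion to the one-dimensional statement already proved in Lemma \ref{lem:sec-der}. Introduce the scalar function $\varphi(\eps):=w(\ba+\eps\bdelta)=\nuw{\gamma}{(\ba+\eps\bdelta)\cdot\bB}{\EN}$ for $\eps\in\R$; by Proposition \ref{pro:first-der}, $\varphi$ is $C^{1}$ and $\varphi'(0)=\nabla w(\ba)\cdot\bdelta$, so the quotient in \eqref{equ:limit-delta} is nothing but the second-order Taylor remainder of $\varphi$ at the origin. Once $\varphi$ is shown to be twice differentiable at $0$, the limit equals $\tfrac{1}{2}\varphi''(0)$, and the whole task reduces to identifying this single number.

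The key step is to absorb the ``base'' claim $\ba\cdot\bB$ into the endowment. Applying the conditional-to-unconditional identity $\nuw{\gamma}{C}{\EN}=\unuw{\gamma}{C-\EN}-\unuw{\gamma}{-\EN}$ (see \eqref{equ:con-to-uncon}) twice, first at endowment $\EN$ with $C=(\ba+\eps\bdelta)\cdot\bB$ and $C=\ba\cdot\bB$, and then in the reverse direction at the shifted endowment $\tilde\EN:=\EN-\ba\cdot\bB$, the common term $\unuw{\gamma}{-\EN}$ cancels and one obtains the identity
\[
\varphi(\eps)-\varphi(0)=\nuw{\gamma}{\eps(\bdelta\cdot\bB)}{\tilde\EN}.
\]
Since $\bB\in(\linf)^{n}$ we have $\tilde\EN\in\linf$, so the right-hand side is exactly the scalar conditional writer's price of the one-parameter family $\alpha\mapsto\alpha B$ with $B:=\bdelta\cdot\bB\in\linf$ and endowment $\tilde\EN$. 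Lemma \ref{lem:sec-der} therefore applies at $\alpha=0$ and gives that $\varphi$ is twice differentiable at $0$ with
\[
\varphi''(0)=\gamma\,\Delta^{\QQ_{\ba}}(\bdelta\cdot\bB),\qquad \QQ_{\ba}:=\hq{-\gamma\tilde\EN}=\hq{\gamma\ba\cdot\bB-\gamma\EN},
\]
which is precisely the change of measure already used in Proposition \ref{pro:first-der} to identify the first derivative.

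The final step is a one-line algebraic manipulation: the bilinearity of $\Delta^{\QQ_{\ba}}(\cdot,\cdot)$ recorded in \eqref{equ:bilinear} yields $\Delta^{\QQ_{\ba}}(\bdelta\cdot\bB)=\sum_{i,j}\delta_{i}\bD^{\QQ_{\ba}}_{ij}(\bB)\delta_{j}$, and combining this with $\tfrac{1}{2}\varphi''(0)$ produces the right-hand side of \eqref{equ:limit-delta}. The only place where any care is required is the bookkeeping in the absorption step above, i.e., verifying that the two applications of \eqref{equ:con-to-uncon} telescope correctly and that $\tilde\EN$ is indeed in $\linf$; both checks are immediate from $\bB\in(\linf)^{n}$, and no genuinely new analytic estimate beyond Lemma \ref{lem:sec-der} is needed.
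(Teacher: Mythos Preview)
Your proposal is correct and follows essentially the same route as the paper's proof: reduce to the one-dimensional situation by considering $\eps\mapsto w(\ba+\eps\bdelta)$, absorb the base point $\ba\cdot\bB$ into the endowment, invoke Lemma~\ref{lem:sec-der} for the scalar claim $\bdelta\cdot\bB$, and then expand via the bilinearity \eqref{equ:bilinear}. The only difference is cosmetic: you spell out the absorption step explicitly through two applications of \eqref{equ:con-to-uncon}, whereas the paper simply declares that one may assume $\ba=\bz$ by shifting the endowment; your telescoping computation is exactly what justifies that declaration.
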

\begin{proof}
  We can assume that $\ba=(0,\dots, 0)$ by absorbing the term
  $-\ba\cdot \bB$ into the random endowment $\EN$. The left-hand side
  of (\ref{equ:limit-delta}) can now be understood as the second
  derivative at $0$ of the function $\tilde{w}:\R\to \R$ given by
\[ \tilde{w}(\eps)=\nuwg{\eps \bdelta\cdot \bB}.\]
We can finish the proof by employing Lemma \ref{lem:sec-der} and using
the equality (\ref{equ:bilinear}) with $\QQ=\hq{-\gamma\EN}$ and
$\bdelta$ substituted for $\ba$.
\end{proof}

With the above results in  our toolbox,
we can give a second order directional Taylor-type
approximation of the indifference price.
\begin{proposition}
\label{pro:approx}
Choose
 $\bB\in (\linf)^n$, $\ba\in \R^n$, $\gamma>0$ and $\EN\in\linf$, and assume
 that the filtration $\mathbb{F}$ is left-continuous. With the notions
 of projected variance and covariance as in Definition
\ref{def:controlled-variance}, we have the following equality
\begin{equation}
   \label{equ:approx}
   \begin{split}
     \nuwg{\eps\ba\cdot \bB}= \eps\ba \cdot \EE^{\hq{-\gamma\EN}}[ \bB ]+
     \frac{\eps^{2}\gamma}{2}\ba \cdot
     \bD^{\hq{-\gamma\EN}}(\bB)\ba+o(\eps^2),\text{ as }\eps \to 0.
   \end{split}
\end{equation}
\end{proposition}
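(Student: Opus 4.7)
The plan is to deduce Proposition \ref{pro:approx} as a direct specialization of Lemma \ref{lem:taylor} at the origin. Define $w:\R^n\to\R$ by $w(\bdelta)=\nuwg{\bdelta\cdot\bB}$. First, I would note that $w(0)=\nuwg{0}=0$: indeed, for any constant $p$ one has $\ug{p}=e^{-\gamma p}\ug{0}$, and since $\ug{0}<0$, the condition $p\in\ag$ is equivalent to $p\ge 0$, hence $\nuwg{0}=\inf\{p\ge 0\}=0$. Second, applying Proposition \ref{pro:first-der} at $\ba=\mathbf{0}$ gives $\nabla w(0)=\EE^{\hq{-\gamma\EN}}[\bB]$.

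Third, I would apply Lemma \ref{lem:taylor} with base point $\mathbf{0}\in\R^n$, using the vector $\ba$ of the proposition in place of $\bdelta$. Inspection of the proof of that lemma shows that when the base point is $\mathbf{0}$, the reference measure is $\QQ=\hq{-\gamma\EN}$, and the coefficient multiplying the quadratic form is $\gamma/2$ (inherited from the $\gamma$ factor in the second derivative computed in Lemma \ref{lem:sec-der}, combined with the bilinear identity \eqref{equ:bilinear} that expresses $\Delta^{\QQ}(\ba\cdot\bB)$ as $\ba\cdot\bD^{\QQ}(\bB)\ba$). Together with $w(0)=0$ and the value of $\nabla w(0)$ above, this gives
\begin{equation*}
\lim_{\eps\to 0}\frac{w(\eps\ba)-\eps\ba\cdot\EE^{\hq{-\gamma\EN}}[\bB]}{\eps^2}=\frac{\gamma}{2}\,\ba\cdot\bD^{\hq{-\gamma\EN}}(\bB)\,\ba,
\end{equation*}
which is precisely the expansion \eqref{equ:approx}.

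There is no substantive obstacle here: the proposition is essentially a packaging of Lemma \ref{lem:taylor} at the origin, and the only care required is in correctly identifying the reference measure as $\hq{-\gamma\EN}$ and tracking the $\gamma/2$ constant. The left-continuity hypothesis on $\mathbb{F}$ enters solely through Lemma \ref{lem:sec-der} (which underlies Lemma \ref{lem:taylor}), and it has already been imposed in the statement.
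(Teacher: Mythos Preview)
Your proposal is correct and matches the paper's approach exactly: the paper states Proposition \ref{pro:approx} immediately after Lemma \ref{lem:taylor} without a separate proof, treating it as the specialization of that lemma at the origin. You have merely made explicit what the paper leaves implicit, including the (correct) observation that the quadratic coefficient is $\gamma/2$ rather than the $\tfrac{1}{2}$ literally appearing in the display \eqref{equ:limit-delta}, and that the relevant measure at $\ba=\mathbf{0}$ is $\hq{-\gamma\EN}$.
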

Although the asymptotic expansion (\ref{equ:approx}) in Proposition
\ref{pro:approx} above is important in its own right, its main
application is the following criterion for mutual agreement for a
small quantity of a given contingent claim.

\begin{proposition}\label{pro:approx-agree}
Suppose that $\mathbb{F}$ is left-continuous and that the random
endowments $\EN_1,\EN_2\in \linf$ and risk-aversion coefficients
$\gamma_1,\gamma_2>0$ are chosen. Let $B\in\linf$ be a given
contingent claim. The set $\cagset$ contains a segment of the form
\[ \sets{ \alpha B}{\alpha \in (0,\alpha_0)}\text{ for some }
\alpha_0>0 \text{ if and only if }\EE^{\hq{-\gamma_1 \EN_1}}[B]< \EE^{\hq{-\gamma_2 \EN_2}}[B],\]
and similarly, the set $\cagset$ contains a segment of the form
\[ \sets{ \alpha B}{\alpha \in (-\alpha_0,0)}\text{ for some }
\alpha_0>0 \text{ if and only if },\EE^{\hq{-\gamma_1 \EN_1}}[B]> \EE^{\hq{-\gamma_2 \EN_2}}[B].\]
\end{proposition}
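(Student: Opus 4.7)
The plan is to reduce the strict agreement condition to a scalar inequality involving the two conditional writer's indifference prices and then apply the Taylor expansion of Proposition \ref{pro:approx} to read off the leading-order behaviour in $\alpha$. First, by the strict version of Proposition \ref{pro:agr-prices} (see part (1) of Remark \ref{rem:agr-prices}), the claim $\alpha B$ lies in $\cagset$ if and only if $\nuw{\gamma_1}{\alpha B}{\EN_1}<\nub{\gamma_2}{\alpha B}{\EN_2}$. Using the standard identity $\nubg{\cdot}=-\nuw{\gamma}{-\cdot}{\EN}$ (an immediate consequence of the definitions \eqref{equ:nuw-def}-\eqref{equ:nub-def}), this is equivalent to
\begin{equation*}
  F(\alpha) := \nuw{\gamma_1}{\alpha B}{\EN_1}+\nuw{\gamma_2}{-\alpha B}{\EN_2}<0.
\end{equation*}

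Next, I would apply Proposition \ref{pro:approx} in the one-dimensional case $n=1$ (with $\bB=B$) separately to each summand, noting that left-continuity of $\mathbb{F}$ is in force. This gives
\begin{equation*}
  F(\alpha)=\alpha\Big(\EE^{\hq{-\gamma_1\EN_1}}[B]-\EE^{\hq{-\gamma_2\EN_2}}[B]\Big)+\tfrac{\alpha^2}{2}\Big(\gamma_1\Delta^{\hq{-\gamma_1\EN_1}}(B)+\gamma_2\Delta^{\hq{-\gamma_2\EN_2}}(B)\Big)+o(\alpha^2).
\end{equation*}
Denote the first-order coefficient by $D$ and the second-order coefficient by $Q$; crucially, $Q\geq 0$ since $\gamma_i>0$ and the projected variances are non-negative.

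For the \emph{sufficiency} direction, if $D<0$ then $F(\alpha)/\alpha\to D<0$ as $\alpha\searrow 0$, so $F(\alpha)<0$ on some interval $(0,\alpha_0)$ and every such $\alpha B$ lies in $\cagset$; similarly, $D>0$ yields a segment of the form $(-\alpha_0,0)$. For the \emph{necessity} direction, suppose $\{\alpha B:\alpha\in(0,\alpha_0)\}\subseteq\cagset$, so that $F(\alpha)<0$ for all small $\alpha>0$. Dividing by $\alpha$ and letting $\alpha\searrow 0$ forces $D\leq 0$; if equality held, then $F(\alpha)=\tfrac{\alpha^2}{2}Q+o(\alpha^2)$ with $Q\geq 0$, which prevents $F(\alpha)<0$ for small $\alpha>0$ unless $Q=0$ and even then only via the $o(\alpha^2)$ term, contradicting $F(\alpha)<0$ on the whole segment. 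Hence $D<0$ strictly. The symmetric statement for $\alpha\in(-\alpha_0,0)$ follows by replacing $B$ with $-B$ (or by repeating the argument on the other side of the origin).

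The main conceptual obstacle is the equality case $D=0$: the quadratic term $Q$ is non-negative, so a priori one might worry it could produce a segment of strictly agreeable prices of the required form. Ruling this out cleanly is precisely what requires the sign of $Q$; the argument sketched above handles it, but one should double-check that the approximation in Proposition \ref{pro:approx} is uniform enough on a neighbourhood of $0$ to justify the elementary limit arguments. All the remaining steps are direct invocations of previously established results.
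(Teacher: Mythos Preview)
Your approach is essentially the paper's: translate strict agreement into the sign of $F(\alpha)=\nuw{\gamma_1}{\alpha B}{\EN_1}+\nuw{\gamma_2}{-\alpha B}{\EN_2}$ and read off the leading order from Proposition~\ref{pro:approx}. The sufficiency direction and the reduction $D\leq 0$ in the necessity direction are fine.

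There is, however, a genuine gap in your treatment of the borderline case $D=0$. You write that $F(\alpha)=\tfrac{\alpha^2}{2}Q+o(\alpha^2)$ with $Q\geq 0$ ``prevents $F(\alpha)<0$ for small $\alpha>0$ unless $Q=0$ and even then only via the $o(\alpha^2)$ term, contradicting $F(\alpha)<0$ on the whole segment.'' The last clause is not a valid inference: a function of the form $o(\alpha^2)$ can perfectly well be strictly negative on an entire right-neighbourhood of $0$ (e.g.\ $-\alpha^3$), so from $Q=0$ alone you cannot derive a contradiction. What you are missing is the observation that $Q>0$ \emph{strictly}. Indeed, if some $\alpha_0 B\in\cagset$ then $\alpha_0 B\notin\rinf$ (since $\cagset\cap\rinf=\emptyset$ by Proposition~\ref{pro:first-prop-G}(2)), hence $B\notin\rinf$, and therefore $\Delta^{\hq{-\gamma_i\EN_i}}(B)>0$ for $i=1,2$, giving $Q>0$. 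With this in hand the case $D=0$ yields $F(\alpha)/\alpha^2\to Q/2>0$, forcing $F(\alpha)>0$ for small $\alpha>0$ --- the desired contradiction. This is exactly how the paper closes the argument; once you add this one line your proof is complete.
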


\begin{proof}
We first note that the convexity of $\cagset$ implies that if $\exists$ $\alpha_0>0$ such that $\alpha_0B\in\cagset$, then
$\alpha B\in\agset$, $\forall \alpha\in[0,\alpha_0]$.
By equation (\ref{pro:approx}), $\nuwone{\alpha B}= \alpha\EE^{\hq{-\gamma_1\EN_1}}[ B ]+o(\alpha)$ and
$\nubtwo{\alpha B}= \alpha\EE^{\hq{-\gamma_2\EN_2}}[ B ]+o(\alpha)$. Hence, the inequality $\EE^{\hq{-\gamma_1 \EN_1}}[B]< \EE^{\hq{-\gamma_2 \EN_2}}[B]$
yields that there exists $\alpha_0 >0$ small enough such that $\nuwone{\alpha_0 B}<\nubtwo{\alpha_0 B}$, i.e., $\alpha_0 B\in\cagset$.

On the other hand, suppose that there exists $\alpha_0>0$ such that $\alpha_0 B\in\cagset$ and assume that
$\EE^{\hq{-\gamma_1 \EN_1}}[B]\geq \EE^{\hq{-\gamma_2 \EN_2}}[B]$. It is easy to check that $\Delta^{\hq{-\gamma_i\EN_i}}(B)>0$ for $i=1,2$
(since $B\notin\mathcal{R}^{\infty}$) and that by (\ref{pro:approx}) and its buyer's version, we get
\[ \alpha\EE^{\hq{-\gamma_1 \EN_1}}[B]+\frac{\alpha^{2}\gamma_1}{2}\Delta^{\hq{-\gamma_1\EN_1}}(B)<
\alpha\EE^{\hq{-\gamma_2 \EN_2}}[B]-\frac{\alpha^{2}\gamma_2}{2}\Delta^{\hq{-\gamma_2\EN_2}}(B)+o(\alpha^2)\]
for every $\alpha$ close to zero such that $0<\alpha\leq\alpha_0$ (note that thanks to the linearity of $\bT^{2}_{\QQ}$, we have $\Delta^{\QQ}(B)=\Delta^{\QQ}(-B)$, $\forall B\in\linf$).

This gives that for any such $\alpha$
\[\frac{\alpha^2}{2}(\gamma_1\Delta^{\hq{-\gamma_1\EN_1}}(B)
+\gamma_2\Delta^{\hq{-\gamma_2\EN_2}}(B))+o(\alpha^2)<0, \] Dividing
through by $\alpha^2$ and letting $\alpha\to 0$, we get that
$\gamma_1\Delta^{\hq{-\gamma_1\EN_1}}(B)+\gamma_2\Delta^{\hq{-\gamma_2\EN_2}}(B)\leq
0$, which is a contradiction.  The proof of the second argument is
similar and hence omitted.
\end{proof}

\begin{remark}\label{rem:approx-interval}
Proposition \ref{pro:approx-agree} can be used to provide an approximation of the size of the set of
the agreement prices for small number of units. More precisely, if $\EE^{\hq{-\gamma_1 \EN_1}}[B]\neq \EE^{\hq{-\gamma_2 \EN_2}}[B]$, it holds that
\begin{multline}
\label{equ:approx-interval}
\nubtwo{\alpha B}-\nuwone{\alpha B}=\\ =\alpha(\EE^{\hq{-\gamma_2
    \EN_2}}[B]-\EE^{\hq{-\gamma_1 \EN_1}}[B])-
\frac{\alpha^{2}}{2}(\gamma_1\Delta^{\hq{-\gamma_1\EN_1}}(B)+\gamma_2\Delta^{\hq{-\gamma_2\EN_2}}(B))+o(\alpha^2)
\end{multline}
for every $\alpha\in\R$ close to zero such that $\alpha B\in\cagset$. Another application of Proposition \ref{pro:approx-agree}
is given at the first part of Remark \ref{rem:pepq}.
\end{remark}

A second order approximation of the optimal strategy has been
recently given in \citet{KraSir06} for general utilities defined on
the positive real line. For the Example \ref{exa:first}, the
corresponding approximation result is easily obtained for the
unconditional case by the formula (\ref{ex1.formula}) (see, e.g.,
\citet{Hen02}). After changing the measure $\PP$ to
$\PP_{-\gamma\EN}$, it is straightforward to show that
\begin{equation*}
\nuwg{\alpha
B}=\alpha\EE^{\QQ^{(-\gamma\EN)}}[B]+\frac{\alpha^2}{2}\gamma
(1-\rho^2) \Var^{\QQ^{(-\gamma\EN)}}(B)+o(\alpha^2)\text{, }\forall
B\in\linf,
\end{equation*}
where $\var^{\QQ}(B)$ denotes the variance of random variable $B$
under the probability measure $\QQ$.

\begin{example}
\label{exa:second}
In the cases where there is no closed-form expression for the indifference price, the approximation
(\ref{equ:approx}) is rather useful. One of these cases is the
stochastic volatility model studied in \citet{SirZar05} and \citet{IlhJonSir04} (see also \citet{Hed07}). Proposition
\ref{pro:approx-agree} is then a generalization of the approximation results of
\citet{IlhJonSir04}, subsection 4.3.
\end{example}

\bigskip

\section{Partial equilibrium prices}\label{sec:pepq}
This section deals with the existence and the uniqueness of a
partial equilibrium price of a contingent claim in the simplified
two-agent economy. The discussion of
  mutual agreeability
in  previous sections assumed that
 the number of units is fixed and the claim is indivisible.
If, however, the negotiation between agents involves the quantity
traded as well as the price, and if this quantity is not constrained
by quantization, a great deal more can be said about the outcome of
the negotiation. The main advantage is that the methodology of
equilibrium theory  can be applied and a unique price-quantity pair singled
out on the basis of the fundamental economic principle of market
clearing.

\subsection{The partial equilibrium}
A vector $\bB=(B_1,\dots, B_n)\in (\linf)^n$ of contingent claims is
chosen and kept constant throughout this section.  The extended set
$\R\cup\set{\pm\infty}$ of real numbers is denoted by $\eR$.

It will
be notationally convenient to define the ``restrictions''
 $\bU_i:\eR^n\times \R^n \to \R$,
$i\in\set{1,2}$ of the value functions $\uone{\cdot}$ and
$\utwo{\cdot}$ (see (\ref{equ:value-function})) by
\begin{equation}
   \label{equ:bug}
   \begin{split}
\bU_i(\ba;\bpr)=
\begin{cases}
 \ui{\ba\cdot(\bB-\bpr)}, & \ba\in \R^n, \\
\limsup_{\ba'\to\ba,\, \ba'\in\R^n} \bU_i(\ba';\bpr), &
\ba\in \eR^n\setminus \R^n,
\end{cases}
   \end{split}
\qquad i\in\set{1,2},\  \bpr\in\R^n.
\end{equation}
i.e., $\bU_i(\cdot;\bpr)$ is the extension of the continuous function
$\bU_i(\cdot;\bpr)\Big|_{\R^n}$ to $\eR^n$ by upper semi-continuity
and gives the indirect utility of agent $i$ when she holds $\ba$ units
of $\bB$, purchased at price $\bpr$.
\begin{definition}\label{def:demand}
The \define{demand correspondence}
$Z_i: \R^n \to 2^{\eR^n}$, for the agent $i\in\set{1,2}$, is defined by
\begin{equation}
   \label{equ:demand-fct}
   \begin{split}
Z_i(\bpr) =\argmax\sets{ \bU_i(\ba,\bpr) }{ \ba\in \eR^n},\ \bpr\in\R^n.
   \end{split}
\end{equation}
\end{definition}
Intuitively, the elements of $Z_i(\bpr)$ give the numbers of units of
$\bB$ that agent $i$ is willing to purchase at price $\bpr$ (the
numbers of units that maximize her indirect utility).  Using the above
notation and definition, we are ready to introduce the central concept
of the section:
\begin{definition}\label{def:pepq}
A pair $(\bpr,\ba)\in\R^n\times\R^n$ is called a \define{partial-equilibrium
  price-quantity (PEPQ)} if
\begin{equation}
   \label{equ:pepq-condition}
   \begin{split}
\ba\in Z_1(\bpr)\text{ and } -\ba\in Z_2(\bpr).
   \end{split}
\end{equation}
A vector $\bpr\in\R^n$ for which  there exists $\ba\in\R^n$ such that
$(\bpr,\ba)$ is a PEPQ is called a \define{partial-equilibrium price (PEP)}.
\end{definition}
In other words, the PEP is the price-vector of the contingent claim
$\bB$ at which the quantity that one agent is willing to sell is equal
to the quantity which the other agent is wants to buy.

When there exists $\ba\in\R^m$ such that the
 contingent claim $\ba\cdot \bB$ is replicable,
any PEP $\bpr$ must have the property that $\ba\cdot \bpr= p_{NA}$,
where $p_{NA}$ is the replication price of $\ba\cdot\bB$; this will
hold no matter what the characteristics of the agents are.
 It is, therefore,
only reasonable to assume that such claims do not enter the
negotiation, i.e., we enforce
the following assumption for the remainder of the section:
\begin{assumption}
\label{ass:no-repl}
  There exists no $\ba\in\R^n\setminus\set{\bz}$
such that $\ba\cdot
  \bB\sim 0$.
\end{assumption}

\subsection{Properties of the demand functions}
Let $\Pna\subseteq \R^n$ be the set of all arbitrage-free price-vectors
of the contingent claims $\bB$, i.e.,
\[ \Pna=\sets{ \EE^{\QQ}[\bB] }{ \QQ\in\MM_e},\]
where, as usual, $\EE^{\QQ}[\bB]=(\EE^{\QQ}[B_1], \dots ,
\EE^{\QQ}[B_n])\in\R^n$.
To simplify the notation in the sequel, we introduce two
$n$-dimensional families of measures in $\MM_e$, parametrized by
$\ba\in\R^n$:
\[ \hqi{\ba}= \hq{\gamma_i\ba\cdot\bB-\gamma_i\EN_i},\ \ba\in\R^n,\ i=1,2.\]
As we will see below (see Proposition \ref{pro:equ-char}), when we
are looking for partial equilibrium prices, we can restrict
ourselves to the sets
\[ \Pui=\sets{ \EE^{\hq{\ba}_i}[\bB]}{\ba\in\R ^n},\text{ for } i=1,2.\]

In general, $\Pui\subseteq \Pna$, for $i=1,2$.  The equality holds
when $\EN_i\sim 0$ (see \citet{IlhJonSir05}, Lemma 7.1). For
future use, we define the function $\bu_i
:\R^n\to\overline{\R}$ by  $\bu_i(\bpr)=\underset{\ba\in\bar{\R}^{n}}{\sup
}\left\{ \bU_{i}(\ba ;\bpr)\right\}$, for $i\in\set{1,2}$.
Building on the notation of Section \ref{sec:approximation}, we
also introduce the following two shorthands:
\begin{equation}
   \label{equ:w-and-b}
   \left.\begin{split}
w_i(\ba)&=\nuwi{\ba\cdot \bB}\\
b_i(\ba)&=\nubi{\ba\cdot \bB}
   \end{split}\right\}
\quad \ba\in\R^n,\, i\in\set{1,2}.
\end{equation}
\begin{lemma}
  \label{lem:strict-conv} For $i=1,2$,
  $w_i$ is strictly convex and $b_i$ is strictly concave.
\end{lemma}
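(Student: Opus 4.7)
Convexity of $w_i$ and concavity of $b_i$ are inherited, via composition with the linear map $\ba \mapsto \ba \cdot \bB$, from the convexity of the conditional writer's indifference price $\nuwi{\cdot}$ and the concavity of the conditional buyer's indifference price $\nubi{\cdot}$ respectively (both stated in Proposition \ref{pro:risk-measures}). The task therefore reduces to upgrading convexity to strict convexity of $w_i$; strict concavity of $b_i$ will then follow from the elementary identity $\nub{\gamma}{B}{\EN} = -\nuw{\gamma}{-B}{\EN}$ (immediate from Definitions \eqref{equ:nuw-def} and \eqref{equ:nub-def}), which yields $b_i(\ba) = -w_i(-\ba)$.

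To establish strict convexity, I would compute the Hessian of $w_i$ via the second-order Taylor expansion in Proposition \ref{pro:approx} (or, equivalently, Lemma \ref{lem:taylor}). Applied at a generic $\ba \in \R^n$ and direction $\bdelta \in \R^n$, this yields
\begin{equation*}
w_i(\ba + \eps \bdelta) - w_i(\ba) - \eps \nabla w_i(\ba) \cdot \bdelta = \tfrac{\eps^2 \gamma_i}{2}\, \bdelta \cdot \bD^{\hqi{\ba}}(\bB)\, \bdelta + o(\eps^2),
\end{equation*}
and identity \eqref{equ:bilinear} rewrites the quadratic coefficient as $\tfrac{\gamma_i}{2} \Delta^{\hqi{\ba}}(\bdelta \cdot \bB)$. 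Combined with the already-established convexity of $w_i$, strict positivity of this quadratic form in every direction $\bdelta \neq \bz$ implies strict convexity: a convex function whose second-order expansion is strictly positive at every point in every direction cannot be affine on any line segment.

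The key remaining step is therefore to verify that $\Delta^{\hqi{\ba}}(\bdelta \cdot \bB) > 0$ whenever $\bdelta \neq \bz$. By Definition \ref{def:controlled-variance}, $\Delta^{\QQ}(C) = 0$ is equivalent to $C$ being a constant plus the terminal value of a square-integrable $\QQ$-martingale of the form $(\bt \cdot \bS)$. In our situation $\bdelta \cdot \bB \in \linf$, and using local boundedness of $\bS$ together with boundedness of the terminal value across the equivalent measures in $\MM_{e,f}$, one can promote such $\bt$ to an element of $\bT$, forcing $\bdelta \cdot \bB \in \rinf$ and contradicting Assumption \ref{ass:no-repl}. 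This identification of the kernel of $\bD^{\hqi{\ba}}(\bB)$ with replicable directions is, I expect, the most technical ingredient of the proof; the rest is direct invocation of the structural results already established, together with the left-continuity of $\mathbb{F}$ that is needed to invoke Proposition \ref{pro:approx}.
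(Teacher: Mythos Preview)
Your approach via the second-order expansion is internally sound: under left-continuity of $\mathbb{F}$, the positive-definiteness of $\bD^{\hqi{\ba}}(\bB)$ does force strict convexity, and your sketch of why $\Delta^{\QQ}(\bdelta\cdot\bB)>0$ for $\bdelta\neq\bz$ can be made rigorous (if $\bdelta\cdot\bB=c+(\bt\cdot\bS)_T$ with $\bt\in\bT^2_\QQ$, then $(\bt\cdot\bS)$ is a \emph{bounded} process, hence a true martingale under every $\QQ'\in\MM_{e,f}$, so $\bt\in\bT$). The gap is that invoking Proposition~\ref{pro:approx} imports the assumption that $\mathbb{F}$ is left-continuous, and this hypothesis is not in force anywhere in Section~\ref{sec:pepq}; in particular Theorem~\ref{thm:main-equ}, which relies on the present lemma for uniqueness, is stated without it. So your argument proves strictly less than the lemma asserts.

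The paper's route is filtration-agnostic and avoids the Taylor machinery. After reducing to $\EN_i=0$ by passing to $\PP_{-\gamma_i\EN_i}$, one assumes equality in the convexity inequality at some $\ba_1\neq\ba_2$, $\ld\in(0,1)$, and uses the scaling property~\eqref{equ:scaling-property} to rewrite it as exactly the equality case of Lemma~\ref{lem:B-1-2} (with $B_1=\ld\,\ba_1\cdot\bB$, $B_2=(1-\ld)\,\ba_2\cdot\bB$ and risk aversions $\gamma_i/\ld$, $\gamma_i/(1-\ld)$). That lemma --- whose engine is the strict convexity of the entropy penalty $h$ on $\MM_{e,f}$, not any differentiability of $w_i$ --- then forces $\ba_1\cdot\bB\sim\ba_2\cdot\bB$, contradicting Assumption~\ref{ass:no-repl}.
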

\begin{proof}
A change of measure argument (where we replace $\PP$ by
$\PP_{-\gamma_i\EN_i}$) can be employed to justify no loss of
generality if we assume that $\EN_i=0$ in this proof.
The fact that $w_i(\cdot)$
is convex follows from the convexity of the indifference price. In
order to establish that the convexity is, in fact, strict, we
assume, to the contrary, that there exist $\ba_1,\ba_2\in \R^n$
with $\ba_1\neq\ba_2$ and $\ld\in (0,1)$ such that
\[ w_i(\ld \ba_1+(1-\ld)\ba_2)=\ld w_i(\ba_1)+(1-\ld) w_i(\ba_2).\]
Equivalently, we have
\begin{equation}
    \nonumber
   \begin{split}
\unuw{\gamma_i}{(\ld\ba_1+(1-\ld) \ba_2) \cdot \bB}&=
\unuw{\tfrac{\gamma_1}{\ld}}{\ld \ba_1\cdot \bB}+
\unuw{\tfrac{\gamma_1}{1-\ld}}{(1-\ld) \ba_2\cdot \bB}
   \end{split}
\end{equation}
Since
$
(\tfrac{\gamma_i}{\ld})^{-1}+
(\tfrac{\gamma_i}{1-\ld})^{-1}=(\gamma_i)^{-1}$,
we can use Lemma \ref{lem:B-1-2} to conclude that
\[  \ba_1\cdot \bB\sim \ba_2\cdot \bB, \text{ i.e. } \ba\cdot \bB\sim
0,\text{ where $\ba=\ba_1-\ba_2\not= 0\in\R^n$,}\] a contradiction
with Assumption \ref{ass:no-repl}. A similar argument can be
employed to prove strict concavity of $b_i$, $i\in\set{1,2}$.
\end{proof}

\begin{proposition}\label{pro:demand}
For $i\in\set{1,2}$, the  functions
  $\bu_i(\cdot)$ and $Z_i(\cdot)$ have
  the following properties
  \begin{enumerate}
  \item The maximum in \eqref{equ:demand-fct} is always attained,
    i.e. $Z_i(\bpr)\not=\emptyset$, for all $\bpr\in\R^n$.
  \item For $\bpr\in \R^n$, we have
\begin{equation}\label{equ:arg-conj}
   \begin{split}
 Z_i(\bpr)=\argmax_{\ba\in\R^n} \{
\nubi{ \ba\cdot \bB}-\ba\cdot \bpr \}.
   \end{split}
\end{equation}
  \item \label{ite:dich}
Either $Z_i(\bpr)=\set{\ba}$ for some $\ba\in\R^n$ or
    $Z_i\subseteq \eR^n\setminus \R^n$.
\item \label{ite:last}
$Z_i(\bpr)=\set{\ba}$ if and only if
$\EE^{\hqi{\ba}}[\bB]=\bpr$ (in particular, $\bpr\in\Pui$).
  \end{enumerate}
\end{proposition}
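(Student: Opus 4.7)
The proof of all four claims rests on a single identity. For $\ba\in\R^n$, cash-invariance of the exponential utility together with the defining property of the conditional buyer's price yields
\[
\bU_i(\ba;\bpr)
=e^{\gamma_i\,\ba\cdot\bpr}\,\ufun{\gamma_i}{0}{\EN_i+\ba\cdot\bB}
=e^{-\gamma_i(b_i(\ba)-\ba\cdot\bpr)}\,\ufun{\gamma_i}{0}{\EN_i},
\]
where the final step uses $\ufun{\gamma_i}{0}{\EN_i+\ba\cdot\bB}=e^{-\gamma_i\nubi{\ba\cdot\bB}}\ufun{\gamma_i}{0}{\EN_i}$. Since $\ufun{\gamma_i}{0}{\EN_i}<0$, maximising $\bU_i(\cdot;\bpr)$ over $\R^n$ is equivalent to maximising the (by Lemma \ref{lem:strict-conv}) strictly concave function $b_i(\ba)-\ba\cdot\bpr$ over $\R^n$; this is the reduction I would use throughout.

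For (1), the extreme value theorem applies: $\eR^n$ is compact in its natural product topology, and $\bU_i(\cdot;\bpr)$ is upper semi-continuous on $\eR^n$ by its very construction in (\ref{equ:bug}). For (2) and (3), the identity above shows $Z_i(\bpr)\cap\R^n=\argmax_{\ba\in\R^n}(b_i(\ba)-\ba\cdot\bpr)$, and strict concavity gives at most one finite maximiser. A standard uniform-decay estimate for strictly concave functions with a finite maximiser (via compactness of the unit sphere applied to $v\mapsto b_i(\ba^*+v)-\ba^*\cdot\bpr$) implies $b_i(\ba)-\ba\cdot\bpr\to-\infty$ as $|\ba|\to\infty$; hence, if a finite maximiser $\ba^*$ exists, $\bU_i(\ba;\bpr)\to-\infty$ at infinity and $Z_i(\bpr)=\set{\ba^*}$. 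If no finite maximiser exists, part (1) forces $Z_i(\bpr)\subseteq\eR^n\setminus\R^n$. This proves (2) and (3).

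For (4), Proposition \ref{pro:first-der} makes $w_i$ continuously differentiable on $\R^n$. Combined with the standard buyer-writer duality $b_i(\ba)=-w_i(-\ba)$ and the chain rule, $b_i$ becomes $C^1$ with gradient expressible as a $\bB$-expectation under the appropriate member of the family $\set{\hqi{\ba}}_{\ba\in\R^n}$. The strictly concave, smooth function $b_i(\ba)-\ba\cdot\bpr$ therefore attains its (unique) $\R^n$-maximiser at $\ba$ if and only if the first-order condition $\nabla b_i(\ba)=\bpr$ holds, which after unwinding the expression for $\nabla b_i$ becomes exactly $\EE^{\hqi{\ba}}[\bB]=\bpr$; the membership $\bpr\in\Pui$ is then immediate from the definition.

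The main obstacle is the dichotomy in (3): one must rule out a finite maximiser being tied with an accumulation at infinity, which is where strict (as opposed to merely ordinary) concavity of $b_i$ is essential. A secondary bookkeeping check in (4) is matching the $\pm\ba$ signs carried through the buyer-writer duality against the definition of $\hqi{\ba}$.
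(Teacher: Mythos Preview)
Your proof is correct and follows the paper closely for parts (1), (2) and (4): compactness plus upper semi-continuity for (1), the cash-invariance identity linking $\bU_i$ to $b_i(\ba)-\ba\cdot\bpr$ for (2), and the first-order condition via Proposition~\ref{pro:first-der} together with $b_i(\ba)=-w_i(-\ba)$ for (4). Your flag about the $\pm\ba$ bookkeeping in (4) is well placed: the gradient computation gives $\nabla b_i(\ba)=\EE^{\hqi{-\ba}}[\bB]$, so the first-order condition literally reads $\EE^{\hqi{-\ba}}[\bB]=\bpr$; the paper's own proof records it this way as well.

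The one genuine methodological difference is in part (3). The paper argues via convexity of $Z_i(\bpr)$: a finite and an infinite maximiser would force infinitely many finite maximisers, contradicting strict concavity of $b_i$. You instead show directly that a strictly concave function with a finite maximiser decays to $-\infty$ along every ray, uniformly over directions by compactness of the unit sphere; hence $\bU_i(\cdot;\bpr)\to-\infty$ at infinity and no boundary point of $\eR^n$ can tie with the finite maximiser. Your argument is a bit longer but arguably more transparent, since it avoids having to make sense of convexity of the $\argmax$ set in the compactified space $\eR^n$; the paper's one-line convexity appeal works morally but leaves the passage from an infinite point back to infinitely many \emph{finite} maximisers implicit. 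Either route is fine.
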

\begin{proof}\
\begin{enumerate}
\item It follows for the fact that the function $Z_i$ is upper
semi-continuous on the compact space $\eR^n$.
\item It suffices to observe that
(\ref{equ:nub-def}) implies that
\begin{equation}
   \label{equ:repr-sup}
   \begin{split}
 \bu_i(\bpr)=-\exp\{-\gamma_i\underset{\ba \in \bar{R}^{n}}{\sup
}(\nubi{\ba\cdot\bB}-\ba\cdot\bpr)\}\cdot\big(-\ui{0}\big),\text{ for all }\bpr\in\R^n.
   \end{split}
\end{equation}
\item The set $Z_i(\bpr)$ is convex, so if it contains a point in
  $\R^n$ and a point in $\eR^n\setminus \R^n$, it must contain
  infinitely many points in $\R^n$. This is in contradiction with the
  strict concavity of $ b_i$ on $\R^n$.
\item Proposition \ref{pro:first-der} states that $b_i$ is
continuously differentiable on $\R^n$ and that
$\nabla b_i(\ba)= \EE^{\hq{-\ba}_i}[\bB]$.
Therefore, $\nubi{\ba\cdot\bB}-\ba\cdot\bpr$ is a
concave and differentiable function of $\ba\in\R^n$ and
its derivative is given by
$\EE^{\hq{-\ba}_i}[\bB]-\bpr$. Consequently,
$\nubi{\ba\cdot\bB}-\ba\cdot\bpr$ attains its maximum on $\R^n$ if
and only if $\EE^{\hq{-\ba}_i}[\bB]=\bpr$ has a solution
$\ba\in\R^n$. In that case, $Z_i(\bpr)=\set{\ba}$.
\end{enumerate}
\end{proof}

\begin{proposition}
\label{pro:equ-char}
  A pair $(\hat{\bpr},\hat{\ba})$
is a PEPQ if and only if $\hat{\bpr}\in \Pu_1\cap \Pu_2$, $\ba\in\R^n$ and
\begin{equation}\label{equ:pepq}
\EE^{\hqone{\hat{\ba}}}[\bB]=\EE^{\hqtwo{-\hat{\ba}}}[\bB]=\hat{\bpr}.
\end{equation}
\end{proposition}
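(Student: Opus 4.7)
The plan is to deduce both directions of the equivalence directly from Proposition \ref{pro:demand}, which already encodes all the analytic work (differentiability of $b_i$, strict concavity of $b_i$, identification of the gradient of $b_i$ with an expectation under $\hqi{\cdot}$).

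For the forward direction, suppose $(\hat{\bpr},\hat{\ba})$ is a PEPQ, so $\hat{\ba}\in Z_1(\hat{\bpr})$ and $-\hat{\ba}\in Z_2(\hat{\bpr})$ with $\hat{\ba}\in\R^n$ by Definition \ref{def:pepq}. Since $Z_1(\hat{\bpr})\cap\R^n$ is non-empty (it contains $\hat{\ba}$), the dichotomy in part (\ref{ite:dich}) of Proposition \ref{pro:demand} forces $Z_1(\hat{\bpr})=\{\hat{\ba}\}$. Part (\ref{ite:last}) then yields $\EE^{\hqone{\hat{\ba}}}[\bB]=\hat{\bpr}$, and in particular $\hat{\bpr}\in\Pu_1$. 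Exactly the same argument applied to agent $2$ gives $Z_2(\hat{\bpr})=\{-\hat{\ba}\}$, $\EE^{\hqtwo{-\hat{\ba}}}[\bB]=\hat{\bpr}$ and $\hat{\bpr}\in\Pu_2$.

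For the converse, assume $\hat{\ba}\in\R^n$ and $\EE^{\hqone{\hat{\ba}}}[\bB]=\EE^{\hqtwo{-\hat{\ba}}}[\bB]=\hat{\bpr}$. By the ``if'' part of Proposition \ref{pro:demand}(\ref{ite:last}) applied to agent $1$ with $\ba=\hat{\ba}$, we obtain $Z_1(\hat{\bpr})=\{\hat{\ba}\}$, and applied to agent $2$ with $\ba=-\hat{\ba}$, we obtain $Z_2(\hat{\bpr})=\{-\hat{\ba}\}$. These two memberships are precisely the market-clearing condition \eqref{equ:pepq-condition}, so $(\hat{\bpr},\hat{\ba})$ is a PEPQ, and the requirement $\hat{\bpr}\in\Pu_1\cap\Pu_2$ is satisfied by construction of the sets $\Pui$.

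Thus no real obstacle is expected: all heavy lifting (existence of an interior maximizer, strict concavity, and the first-order identification of $\nabla b_i$) has been completed in Lemma \ref{lem:strict-conv} and Proposition \ref{pro:demand}. The only subtle point worth flagging is the use of the dichotomy (\ref{ite:dich}) in the forward direction, which rules out the possibility that $Z_1(\hat{\bpr})$ contains $\hat{\ba}\in\R^n$ together with some boundary point in $\eR^n\setminus\R^n$; without this, one could not immediately identify $\hat{\ba}$ as the unique maximizer and apply the first-order condition to read off $\hat{\bpr}=\EE^{\hqone{\hat{\ba}}}[\bB]$.
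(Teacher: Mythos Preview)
Your proof is correct and follows essentially the same approach as the paper: both directions are deduced directly from parts (\ref{ite:dich}) and (\ref{ite:last}) of Proposition \ref{pro:demand}, using the dichotomy to identify the singleton $Z_i(\hat{\bpr})$ in the forward direction and the ``if'' part of (\ref{ite:last}) in the converse.
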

\begin{proof}
If $(\hat{\bpr},\hat{\ba})$ is a PEPQ, then
$Z_i(\hat{\bpr})\cap\R^n\not=\emptyset$ and, so, by
Proposition \ref{pro:demand}, part (\ref{ite:dich}),
 we must have
$Z_i(\hat{\bpr})=\set{\ba_i}$, for some $\ba_i\in\R^n$ and
$\hat{\bpr}\in\Pui$, for $i=1,2$.
By (\ref{equ:pepq-condition}), we have $\ba_1=-\ba_2$.
 The equalities in (\ref{equ:pepq}), with $\hat{\ba}=\ba_1$
follow directly from part (\ref{ite:last}) of Proposition
\ref{pro:demand}.

Conversely, suppose that (\ref{equ:pepq}) holds. Then, by part
(\ref{ite:last}) of Proposition \ref{pro:demand}, we have
$Z_1(\bpr)=\set{\hat{\ba}}$ and
$Z_2(\bpr)=\set{-\hat{\ba}}$, which, in turn, implies
 (\ref{equ:pepq-condition}).
\end{proof}
We have also shown the following result which will be used shortly:
\begin{corollary}\label{cor:pepq}
  A pair $(\hat{\bpr},\hat{\ba})\in (\Pu_1\cap \Pu_2)\times \R^n$
is a PEPQ if and only if
\[ w_1(\hat{\ba})-b_2(\hat{\ba}) \leq w_1(\ba)-b_2(\ba)\text{ for any
}\ba\in\R^n,\text{ and $\hat{\bpr}=\nabla w_1(\hat{\ba})$.}\]
\end{corollary}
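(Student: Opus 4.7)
The plan is to convert the corollary into a first-order optimality statement for the function $\Phi(\ba)=w_1(\ba)-b_2(\ba)$ on $\R^n$ and then match the resulting condition with the characterization of a PEPQ already obtained in Proposition \ref{pro:equ-char}.

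First I would record two gradient formulas. By Proposition \ref{pro:first-der} applied to the first agent,
\[
\nabla w_1(\ba)=\EE^{\hqone{\ba}}[\bB],\qquad \ba\in\R^n.
\]
For the second agent, the identity $\nubtwo{C}=-\nuwtwo{-C}$, which is immediate from Definitions \eqref{equ:nuw-def}--\eqref{equ:nub-def}, yields $b_2(\ba)=-w_2(-\ba)$ and hence, by the chain rule combined once more with Proposition \ref{pro:first-der},
\[
\nabla b_2(\ba)=\nabla w_2(-\ba)=\EE^{\hqtwo{-\ba}}[\bB].
\]
Thus the equilibrium condition \eqref{equ:pepq} of Proposition \ref{pro:equ-char} is exactly $\nabla w_1(\hat{\ba})=\nabla b_2(\hat{\ba})=\hat{\bpr}$, i.e.\ $\nabla\Phi(\hat{\ba})=0$ together with $\hat{\bpr}=\nabla w_1(\hat{\ba})$.

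Next I would invoke Lemma \ref{lem:strict-conv}: $w_1$ is strictly convex and $b_2$ is strictly concave, so $\Phi=w_1-b_2$ is a strictly convex, continuously differentiable function on $\R^n$. For such a function the vanishing of the gradient at a point is equivalent to that point being the (unique) global minimizer. This single equivalence gives both directions of the corollary.

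Concretely, for the forward implication I assume $(\hat{\bpr},\hat{\ba})$ is a PEPQ; Proposition \ref{pro:equ-char} then delivers $\nabla\Phi(\hat{\ba})=0$ and $\hat{\bpr}=\nabla w_1(\hat{\ba})$, and strict convexity of $\Phi$ upgrades this to the global inequality $\Phi(\hat{\ba})\le\Phi(\ba)$ for all $\ba\in\R^n$. For the reverse direction I start from the global minimality of $\hat{\ba}$ and the assumption $\hat{\bpr}=\nabla w_1(\hat{\ba})$; differentiability of $\Phi$ gives $\nabla w_1(\hat{\ba})=\nabla b_2(\hat{\ba})$, which rewrites as $\EE^{\hqone{\hat{\ba}}}[\bB]=\EE^{\hqtwo{-\hat{\ba}}}[\bB]=\hat{\bpr}$, so that $\hat{\bpr}$ automatically lies in $\Pu_1\cap\Pu_2$ and Proposition \ref{pro:equ-char} identifies $(\hat{\bpr},\hat{\ba})$ as a PEPQ. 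There is no real obstacle here; the only thing to be careful about is the sign computation for $\nabla b_2$, which is why I prefer to derive it once and for all from the $-w_2(-\cdot)$ identity rather than redo the whole duality argument.
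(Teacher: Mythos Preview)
Your argument is correct and is exactly the route the paper has in mind: the corollary is stated there without a separate proof, as an immediate consequence of Proposition \ref{pro:equ-char} together with the gradient formula from Proposition \ref{pro:first-der} (already recorded in the proof of Proposition \ref{pro:demand}\,(4)) and the strict convexity of $w_1-b_2$ from Lemma \ref{lem:strict-conv}. Your derivation of $\nabla b_2(\ba)=\EE^{\hqtwo{-\ba}}[\bB]$ via $b_2(\ba)=-w_2(-\ba)$ is clean and matches the paper's computation.
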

The main result of this Section is presented in the following Theorem:
\begin{theorem}
\label{thm:main-equ} Let $\EN_1,\EN_2\in\linf$,
$\gamma_1,\gamma_2>0$ and $\bB\in (\linf)^{n}$ be arbitrary, and
suppose that the Assumption \ref{ass:no-repl} is satisfied. Then,
there exists a unique partial equilibrium price-quantity
  $(\ba,\bpr)\in\R^n\times \R^n$. Moreover, $\bpr\in\Pu_1\cap \Pu_2$.
\end{theorem}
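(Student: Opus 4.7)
The plan is to reduce the PEPQ problem to a single-variable convex optimization through Corollary~\ref{cor:pepq}. Define
\[
\Phi(\ba) := w_1(\ba) - b_2(\ba), \qquad \ba \in \R^n,
\]
so that, by Corollary~\ref{cor:pepq}, a pair $(\hat{\bpr},\hat{\ba})\in(\Pu_1\cap\Pu_2)\times\R^n$ is a PEPQ if and only if $\hat{\ba}$ minimizes $\Phi$ on $\R^n$ and $\hat{\bpr}=\nabla w_1(\hat{\ba})$. Using the elementary identity $\nubg{C}=-\nuwg{-C}$, I would also record $b_i(\ba)=-w_i(-\ba)$, so that by Proposition~\ref{pro:first-der},
\[
\nabla b_i(\ba)=\EE^{\hqi{-\ba}}[\bB], \qquad \nabla\Phi(\ba)=\EE^{\hqone{\ba}}[\bB]-\EE^{\hqtwo{-\ba}}[\bB].
\]

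Next I would check the easy structural properties of $\Phi$: it is continuous on $\R^n$, of class $C^1$ (by Proposition~\ref{pro:first-der} and the identity above), and strictly convex since $w_1$ is strictly convex and $-b_2$ strictly convex by Lemma~\ref{lem:strict-conv}. Strict convexity already gives the uniqueness half of the theorem once a minimizer is produced, and then uniqueness of $\hat{\bpr}$ follows from $\hat{\bpr}=\nabla w_1(\hat{\ba})$.

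The crux is coercivity: $\Phi(\ba)\to +\infty$ as $|\ba|\to\infty$. The argument I would use exploits positive homogeneity on the unit sphere. Since $\Phi(0)=0$ and $\Phi$ is convex, the map $t\mapsto\Phi(t\hat{\ba})/t$ is nondecreasing on $(0,\infty)$ for each unit vector $\hat{\ba}$, with pointwise limit the recession slope $\Phi_\infty(\hat{\ba})$. Invoking the standard exponential-utility per-unit asymptotics $\lim_{t\to\infty}\nuwi{tB}/t=\sup_{\QQ\in\MM_{e,f}}\EE^{\QQ}[B]$ (and its buyer counterpart), one obtains
\[
\Phi_\infty(\hat{\ba})=\sup_{\QQ\in\MM_{e,f}}\hat{\ba}\cdot\EE^{\QQ}[\bB]-\inf_{\QQ\in\MM_{e,f}}\hat{\ba}\cdot\EE^{\QQ}[\bB].
\]
Assumption~\ref{ass:no-repl} forces $\hat{\ba}\cdot\bB$ to be non-replicable for $\hat{\ba}\neq 0$, so $\Phi_\infty>0$ on the unit sphere. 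By continuity of $\Phi_\infty$ and compactness, $c_0:=\min_{|\hat{\ba}|=1}\Phi_\infty(\hat{\ba})>0$, and Dini's theorem upgrades the monotone pointwise convergence $\Phi(t\hat{\ba})/t\nearrow\Phi_\infty(\hat{\ba})$ to uniform convergence on the sphere. Hence there is $t_0>0$ with $\Phi(t\hat{\ba})/t\ge c_0/2$ for all $t\ge t_0$, giving $\Phi(\ba)\ge (c_0/2)|\ba|$ for $|\ba|\ge t_0$, which is the coercivity. The main obstacle is precisely this step: identifying the recession slope and upgrading the pointwise limit to a uniform one, where Dini's theorem is the right tool.

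With coercivity established, $\Phi$ is a continuous, strictly convex, coercive function on $\R^n$, so it attains a unique minimizer $\hat{\ba}\in\R^n$. The first-order condition $\nabla\Phi(\hat{\ba})=\mathbf{0}$ reads $\EE^{\hqone{\hat{\ba}}}[\bB]=\EE^{\hqtwo{-\hat{\ba}}}[\bB]$; calling this common vector $\hat{\bpr}$, we automatically have $\hat{\bpr}\in\Pu_1\cap\Pu_2$, and Proposition~\ref{pro:equ-char} confirms that $(\hat{\bpr},\hat{\ba})$ is a PEPQ. Uniqueness of any PEPQ reduces, via Corollary~\ref{cor:pepq}, to uniqueness of the minimizer, already secured by strict convexity.
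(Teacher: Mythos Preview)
Your proof is correct and follows essentially the same route as the paper's: reduce via Corollary~\ref{cor:pepq} to minimizing the strictly convex function $\Phi=w_1-b_2$, and establish coercivity through the large-$t$ asymptotics $\sup_{\QQ}\EE^{\QQ}[\hat{\ba}\cdot\bB]-\inf_{\QQ}\EE^{\QQ}[\hat{\ba}\cdot\bB]>0$ on the unit sphere (Assumption~\ref{ass:no-repl}). The only difference is in the packaging of the compactness step: the paper argues by contradiction, takes minimizers $\ba_m$ on balls of radius $m$, extracts a convergent subsequence of $\ba_m/m$, and invokes the joint continuity result of Proposition~\ref{pro:joint-cont} to identify the limit; you instead compute the recession slope $\Phi_\infty$ directly and use Dini's theorem to upgrade the monotone pointwise convergence $\Phi(t\hat{\ba})/t\nearrow\Phi_\infty(\hat{\ba})$ to uniform convergence on the sphere. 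Both arguments also dispose of the random endowments $\EN_i$ in the same way (they contribute $O(1)$, hence vanish after dividing by $t$), though you absorb this into the phrase ``standard per-unit asymptotics'' rather than spelling it out.
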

\begin{proof}
If the PEPQ $(\ba,\bpr)$ exists, then $\ba$ globally minimizes the
strictly concave function $w_1-b_2$, so it must be unique. To
establish existence, it will be enough to solve the equation
$\nabla f=0$, where $f=w_1-b_2$. Assume, to the contrary, that
$\nabla f(\ba) \not =0$, for all $\ba\in\R^n$. Continuity of $f$
implies that for each $m\in\N$ there exists $\ba_m\in
\overline{B}_m=\sets{ \ba\in\R^n}{ \sum_{i=1}^n
  \abs{\alpha_i} \leq m}$ such that $f(\ba_m) \leq f(\ba)$ for all
$\ba\in \overline{B}_m$. Thanks to strict convexity of $f$ and the
fact that $\nabla f\not = 0 $ on $\overline{B}_m$, we must have
$\norm{\ba_m}_1=m$, where $\norm{\ba}_1=\sum_{i=1}^n
\abs{\alpha_i}$.
In order to reach a contradiction, it will be enough to show that
\begin{equation}
   \label{equ:limit}
   \begin{split}
\liminf_{m\to\infty} \tfrac{f(\ba_m)}{m}>0.
   \end{split}
\end{equation}
Indeed, (\ref{equ:limit}) would provide the following coercivity
condition
\[\liminf_{m\to\infty} \inf
\sets{ \frac{f(\ba)}{\norm{\ba}_1}}{\ba\in
  \overline{B}_m\setminus\set{\bz}}>0,\] which, in turn, would guarantee
existence of a global minimizer $\ba_0\in\R^n$ for $f$ (see
Chapter 1 of \citet{BorLew00}), at which $\nabla f(\ba_0)=0$ holds.

The first step in the proof of (\ref{equ:limit}) uses the
representation (\ref{equ:con-to-uncon}) and the risk-measure
properties of $\nuwg{\cdot}$ to obtain the following:
\begin{equation}
   \label{equ:something-one}
   \begin{split}
 \liminf_{m\to\infty} \frac{f(\alpha_m)}{m} & =
\liminf_{m\to\infty} \frac{1}{m} \Big( \unuw{\gamma_1}{\ba_m\cdot
  \bB-\EN_1}+
\unuw{\gamma_2}{-\ba_m\cdot \bB-\EN_2}
\Big) \\
&\geq
\liminf_{m\to\infty} \frac{1}{m} \Big( \unuw{\gamma_1}{\ba_m\cdot
  \bB}-\norm{\EN_1}_{\linf}+
\unuw{\gamma_2}{-\ba_m\cdot \bB}-\norm{\EN_2}_{\linf}
\Big)\\
&=
\liminf_{m\to\infty} \Big(
\unuw{m\gamma_1}{\tfrac{1}{m}\ba_m\cdot  \bB}
+\unuw{m\gamma_2}{-\tfrac{1}{m}\ba_m\cdot  \bB}
\Big)\\
   \end{split}
\end{equation}
Any subsequence of $\N$ through which the last limit
inferior in (\ref{equ:something-one}) above is realized admits a
further subsequence $\sq{k}{m_k}$ such that
the sequence $\tfrac{1}{m_k} \ba_{m_k}$ converges
to some $\ba_0\in\R^n$ with $\norm{\ba_0}_1=1$; indeed,
the full sequence $\sq{m}{\tfrac{1}{m} \ba_m}$
takes values in the compact set $\sets{\ba\in\R^n}{\norm{\ba}_1=1}$.
Proposition \ref{pro:joint-cont}
implies that
\begin{equation}
   \label{equ:lims}
   \begin{split}
 \unuw{m_k\gamma_1}{\tfrac{1}{m_k}\ba_{m_k}\cdot\bB} & \to
\sup_{\QQ\in\MM_e} \EE^{\QQ}[\ba_0\cdot \bB], \text{ and}\\
 \unuw{m_k\gamma_2}{-\tfrac{1}{m_k}\ba_{m_k}\cdot\bB} & \to
-\inf_{\QQ\in\MM_e} \EE^{\QQ}[\ba_0\cdot \bB],
   \end{split}
\end{equation}
as $k\to\infty$. Therefore,
\[ \liminf_{m\to\infty} \tfrac{1}{m} f(\ba_m)=\sup_{\QQ\in\MM_e}
\EE^{\QQ}[\ba_0\cdot \bB]-\inf_{\QQ\in\MM_e} \EE^{\QQ}[ \ba_0\cdot
\bB].\]
It remains to note that the equality $\sup_{\QQ\in\MM_e}
\EE^{\QQ}[\ba_0\cdot \bB]=\inf_{\QQ\in\MM_e} \EE^{\QQ}[ \ba_0\cdot
\bB]$ cannot hold; if it did, Assumption \ref{ass:no-repl} would be
violated.
\end{proof}
\begin{remark}\label{rem:pepq}\
  \begin{enumerate}
  \item When $n=1$, the proof above can be simplified considerably;
    one can show that
\[
\lim_{\alpha\to\infty} w_1'(\alpha) >
\lim_{\alpha\to\infty} b_2'(\alpha)\text{ and }
\lim_{\alpha\to-\infty} w_1'(\alpha) <
\lim_{\alpha\to-\infty} b_2'(\alpha),\]
and deduce the existence of the solution of the equation
$w_1'(\alpha)=b_2'(\alpha)$ directly.

In addition, by Remark \ref{rem:approx-interval}, we easily get that the quantity $\tilde{\alpha}=\frac{\EE^{\hq{-\gamma_2 \EN_2}}[B]-\EE^{\hq{-\gamma_1 \EN_1}}[B]}
{\gamma_1\Delta^{\hq{-\gamma_1\EN_1}}(B)+\gamma_2\Delta^{\hq{-\gamma_2\EN_2}}(B)}$ minimizes the second order approximation
of the difference $w_1(\alpha)-b_2(\alpha)$. In view of Corollary \ref{cor:pepq}, we can heuristically consider $\tilde{\alpha}$ as
an approximation of the partial equilibrium quantity (PEQ), provided that $\tilde{\alpha}$ is close to zero.
\item
   Corollary \ref{cor:agreement} and the discussion preceding it show
   that when
  $\frac{\gamma _{1}}{\gamma_2}\EN_{1}\sim\EN_{2}$,
  the unique PEPQ must be of the form $(0,\bpr)$,
  where $\bpr=\EE^{\hq{-\gamma_1\EN_1}}[\bB]=\EE^{\hq{-\gamma_2\EN_2}}[\bB]$
  for every $\bB$ which satisfies the Assumption \ref{ass:no-repl}.
In such cases $\bpr$ should not be interpreted as a price of $\bB$,
  since no
  transaction actually occurs. Furthermore, the strict agreement (in
  the sense of Definition \ref{def:agreement}) can then be reached for no
  contingent claim of the from $\ba\cdot\bB$, $\ba\in\R^n$.

Even when $\frac{\gamma
_{1}}{\gamma_2}\EN_{1}\nsim\EN_{2}$, there might exist claims for
which the PEPQ is of the form $(0,\bpr)$. In fact, PEPQ is of the form $(0,\bpr)$ if and only if
$\EE^{\QQ^{(-\gamma_{1}\mathcal{E}_{1})}}[\bB]=\EE^{\QQ^{(-\gamma_{2}\mathcal{E}_{2})}}[\bB]$ (see Proposition \ref{pro:approx-agree}).
As an example, consider a
claim $\bB$ which is independent of the the stochastic
process $\bS$, as well as the two random endowments.
The partial equilibrium price is then simply a certainty equivalent
$\bpr=\EE[\bB]=\EE^{\QQ^{(-\gamma
      _{1}\mathcal{E}_{1})}}[\bB]=\EE^{\QQ^{(-\gamma
      _{2}\mathcal{E}_{2})}}[\bB].$

If a vector of claims $\bB$ satisfies the Assumption
\ref{ass:no-repl} and its PEPQ is of the form $(0,\bpr)$, then
$\nuwone{\ba\cdot\bB} -\nubtwo{\ba\cdot\bB} >0$, for every $\ba
\in \mathbb{R}^{n}\setminus\{0\}$, i.e. $\ba \cdot \bB\notin\agset$ for
all $\ba\neq 0$. In other words, any trade in a nontrivial linear
combination $\ba\cdot \bB$ must make at least one of the agents
strictly worse off.
\end{enumerate}
\end{remark}

\appendix

\section{Conditional Indifference Prices}\label{sec:cond-price}

The subject of this Section is the conditional indifference price
and some of its properties. The results stated below are
not only very useful for our analysis mutually agreeability, they
may also be seen as interesting in their own right
 since they describe some of the aspects
of  indifference evaluation under the presence of random
endowment. Some new results about
the unconditional indifference price (see Lemma \ref{lem:B-1-2},
Propositions \ref{pro:alpha}, \ref{pro:monotone} and
\ref{pro:joint-cont}), as well as several generalizations of existing
results in the case of the conditional price (see Theorem
\ref{thm:dual-rep}, Propositions \ref{pro:lsc-h} and
\ref{pro:con-asym}) are exhibited.

\subsection{First properties}
We remind the reader that the \define{writer's and buyer's
conditional (relative) indifference prices} $\nuwg{B}$ and
$\nuwg{B}$, for $B\in\linf$, are defined as
\[
\nuwg{B}=\inf\sets{p\in\R}{p-B\in \ag},\
\nubg{B}=\sup\sets{p\in\R}{B-p\in \ag},
\] where $
\ag=\sets{B\in\linf}{ \ug{B} \geq \ug{0}}$, with the notation
introduced on page \pageref{sec:agree} at the beginning of
Section \ref{sec:agree}. Proposition \ref{pro:first-cond} collects
some basic properties of the indifference prices and its proof is
standard.
\begin{proposition}\
\label{pro:first-cond}
\begin{enumerate}
\item $\nubg{B}=-\nuwg{-B}$, for $B\in\linf$. \item When
$\EN\in\rinf$ (in particular, when $\EN$ is constant)
$\nuwg{\cdot}$ and $\nubg{\cdot}$ coincide with their
unconditional versions $\unuwg{\cdot}$ and $\unubg{\cdot}$. \item
More generally, we have
$\nuw{\gamma}{\cdot}{\EN}=\nuw{\gamma}{\cdot}{\EN'}$ and
$\nub{\gamma}{\cdot}{\EN}=\nub{\gamma}{\cdot}{\EN'}$ as soon as
$\EN\sim\EN'$.
\end{enumerate}
\end{proposition}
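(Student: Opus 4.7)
The proof splits naturally along the three claims, and each is essentially a direct manipulation of the definitions.

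For part (1), I would start from the definitions \eqref{equ:nuw-def} and \eqref{equ:nub-def}, and perform the change of variable $q=-p$:
\[
\nuwg{-B}=\inf\sets{p\in\R}{p+B\in\ag}=\inf\sets{-q\in\R}{B-q\in\ag}=-\sup\sets{q\in\R}{B-q\in\ag}=-\nubg{B}.
\]
No ingredient beyond the definitions is needed here.

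For parts (2) and (3) the plan is to show that, whenever $\EN\sim\EN'$, the value functions $\ufun{\gamma}{0}{\EN+\cdot}$ and $\ufun{\gamma}{0}{\EN'+\cdot}$ differ only by a strictly positive multiplicative constant, so that the acceptance sets $\ag$ and $\afun{\gamma}{\EN'}$ coincide (this is exactly (\ref{equ:AA_1=AA_2})), which in turn forces equality of the writer's and buyer's conditional prices. The key computation is the following: write $\EN-\EN'=c+(\bvt^{*}\cdot\bS)_{T}$ for some $c\in\R$ and $\bvt^{*}\in\bT$ (using the definition of $\sim$), and plug this into \eqref{equ:value-function-B}. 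Factoring out $e^{-\gamma c}$ and performing the substitution $\tilde{\bvt}=\bvt+\bvt^{*}$ — which is a bijection of $\bT$ onto itself because $\bT$ is a vector space — gives
\[
\ug{B}=e^{-\gamma c}\,\ufun{\gamma}{0}{\EN'+B}\qquad\text{for every }B\in\linf.
\]
Since the right-hand side is a positive multiple of $\ufun{\gamma}{0}{\EN'+B}$, the inequalities defining $\ag$ and $\afun{\gamma}{\EN'}$ are equivalent, so the two acceptance sets coincide; plugging this into \eqref{equ:nuw-def} and \eqref{equ:nub-def} yields the price equalities of part (3). Part (2) is the special case $\EN'=0$ (using that $\EN\in\rinf$ means exactly $\EN\sim 0$).

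There is no real obstacle; the only step worth highlighting is the use of the vector-space structure of $\bT$ to justify that $\tilde{\bvt}=\bvt+\bvt^{*}$ sweeps out all of $\bT$ as $\bvt$ does, which is what allows the supremum to be rewritten without a loss. Everything else is bookkeeping with the exponential utility.
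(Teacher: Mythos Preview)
Your proposal is correct and is precisely the standard argument the paper has in mind; the paper itself omits the proof, stating only that it ``is standard,'' and your verification via the definitions \eqref{equ:nuw-def}--\eqref{equ:nub-def} together with the translation-invariance of the supremum over the vector space $\bT$ (which is exactly \eqref{equ:AA_1=AA_2}) is the natural way to fill it in.
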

When $\EN$ is constant or, more generally, when $\EN\in\rinf$,
$\nuwg{\cdot}$ and
$\nubg{\cdot}$ are usually denoted by $\unuwg{\cdot}$ and
$\unubg{\cdot}$, and are called the (writer's and buyer's)
\define{unconditional indifference prices}.

\subsection{Conditional indifference prices as convex risk
measures}\label{subsec:Conditional} With the notation from
subsection \ref{sse:special-measures}, the conditional
indifference price $\nuwg{\cdot}$ can be understood as an
unconditional indifference price computed under the probability
measure $\PP_{-\gamma\EN}$. In particular, using the terminology
of \citet{FolSch04}, Section 4.8, the
following statement holds:
\begin{proposition}
\label{pro:risk-measures}
  Maps $B\mapsto \nuwg{-B}$ and $B\mapsto-\nubg{B}$
are replication-invariant convex risk measures  on $\linf$, where
replication-invariance refers to the following property
\[ \nuwg{B+(\bt\cdot\bS)_T}=\nuwg{B},\text{ for all } \bt\in\bT.\]
\end{proposition}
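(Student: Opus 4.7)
The plan is to observe that, by Proposition \ref{pro:first-cond}(1), the two maps $B\mapsto\nuwg{-B}$ and $B\mapsto-\nubg{B}$ are literally the same function, so it suffices to verify the four required properties (monotonicity, cash invariance, convexity, replication invariance) for $\rho(B):=\nuwg{-B}=\inf\sets{p\in\R}{p+B\in\ag}$. Each property will be read off directly from the corresponding property of the acceptance set $\ag$ established in Proposition \ref{pro:agree-set} together with the definition \eqref{equ:nuw-def}.

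For \emph{monotonicity}, if $B_1\le B_2$ and $p+B_1\in\ag$, then $p+B_2\ge p+B_1$ so by monotonicity of $\ag$ we also have $p+B_2\in\ag$; taking infima yields $\rho(B_2)\le\rho(B_1)$. \emph{Cash invariance} is a change-of-variable argument: $\rho(B+m)=\inf\sets{p}{(p+m)+B\in\ag}=\rho(B)-m$. \emph{Convexity} is also direct from convexity of $\ag$: if $p_i+B_i\in\ag$ for $i=1,2$ and $\lambda\in[0,1]$, then $\lambda(p_1+B_1)+(1-\lambda)(p_2+B_2)\in\ag$, so $\rho(\lambda B_1+(1-\lambda)B_2)\le\lambda p_1+(1-\lambda)p_2$; minimizing over $p_1,p_2$ gives the required inequality.

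For \emph{replication invariance}, I would argue that for any $\bt\in\bT$, the definition \eqref{equ:value-function-B} together with the fact that $\bT$ is a vector space yields, via the substitution $\bvt\mapsto\bvt-\bt$ inside the supremum,
\[
\ug{B+(\bt\cdot\bS)_T}=\ug{B},\qquad\text{for all }B\in\linf.
\]
Hence $B\in\ag\iff B+(\bt\cdot\bS)_T\in\ag$, and plugging this into the definition of $\nuwg{\cdot}$ gives the stated identity.

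There is no real obstacle here: the proof is a sequence of short verifications. The only point worth being careful about is the reduction to the unconditional price under $\PP_{-\gamma\EN}$ alluded to in the paragraph preceding the proposition; we do not actually need it for this statement, since all four properties transfer cleanly from $\ag$ to $\nuwg{-\cdot}$ via the infimum representation \eqref{equ:nuw-def} without invoking any exponential structure beyond what is already encoded in $\ag$.
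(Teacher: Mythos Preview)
Your proof is correct. The identification of the two maps via Proposition~\ref{pro:first-cond}(1) is right, and each of the four axioms follows from the corresponding property of $\ag$ exactly as you describe; the replication-invariance step is the only place where you need the vector-space structure of $\bT$, and your substitution $\bvt\mapsto\bvt-\bt$ handles it.

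The paper does not give a detailed proof of this proposition. It simply remarks, in the sentence preceding the statement, that the conditional indifference price is the unconditional one computed under the tilted measure $\PP_{-\gamma\EN}$, and then defers implicitly to the standard theory (\citet{FolSch04}, Section~4.8). Your route is different and more self-contained: rather than invoking the measure change and an outside reference, you read everything off the acceptance-set representation and Proposition~\ref{pro:agree-set}. Both arguments are short; yours makes the role of $\ag$ explicit and avoids any appeal to the exponential structure beyond what is already built into $\ag$, while the paper's one-line reduction has the merit of linking the conditional price to the existing literature on unconditional exponential indifference prices. There is no circularity in your use of Proposition~\ref{pro:agree-set}, since its (standard, unwritten) proof rests only on concavity and monotonicity of the value function $\ug{\cdot}$ and does not use Proposition~\ref{pro:risk-measures}.
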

 Moreover, these measures admit a robust dual representation, as
 stated in the following theorem, which follows from Theorem 2.2 in \citet{DelGraRheSamSchStr02} and Theorem 2.1 in \citet{KabStr02}:
\begin{theorem}(Delbaen F., Grandits P., Rheinl\"{a}nder T., Samperi D., Schweitzer M. and Stricker C., 2002, Kabanov Y. and Stricker C.,
2002)\\
\label{thm:dual-rep} For $B\in\linf$, we have
\begin{equation}
\label{equ:dual-representation}
\nuwg{B} =\underset{\QQ\in \mathcal{M}_{a}}{\sup }%
\left\{ \mathbb{E}_{\QQ}\left( B\right) -\frac{1}{\gamma
}h_{-\gamma\EN}(\QQ)\right\},
\end{equation}
where, for $C\in\linf$, we define the map
$h_{C}:\lone\mapsto[0,+\infty]$ as
\[h_{C} (\QQ)\label{pen.funct.}=\begin{cases}
\mathcal{H}(\QQ|\PP_C)-\mathcal{H}(\QQ^{(C)}|\PP_C) & \text{when
}\QQ\in \mathcal{M}_{a},\\ +\infty & \text{otherwise.}
\end{cases}
\]
The supremum in (\ref{equ:dual-representation}) is uniquely
attained by the measure $\QQ^{(-\gamma\EN+\gamma B)}$, which
belongs in $\mathcal{M}_{e,f}$ and its Radon-Nikodym derivative
with respect to $\PP_{-\gamma\EN+\gamma B}$ can be written as
\begin{equation}
   \label{equ:RN-6}
   \begin{split}
\frac{d\QQ^{(-\gamma\EN+\gamma B)}}{d\PP_{-\gamma\EN+\gamma
B}}=ke^{(-\gamma\bt ^{(-\gamma\EN+\gamma B)}\cdot \bS)_{T}},
   \end{split}
\end{equation}
where $\mathbf{\bt}^{(-\gamma\EN+\gamma B)}\in\bT$ is the maximizer of
the control problem associated with the value function $\ug{-B}$.
\end{theorem}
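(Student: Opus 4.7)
The plan is to reduce the statement to the unconditional dual representation, which is the content of the cited results of Delbaen et al.~(2002) and Kabanov--Stricker (2002), by a change of the reference probability measure. Since $\EN\in\linf$, the measure $\PP_{-\gamma\EN}$ defined in subsection~\ref{sse:special-measures} is equivalent to $\PP$ with bounded Radon--Nikodym density. A direct computation gives
\[
\ug{B}=\EE[e^{-\gamma\EN}]\sup_{\bvt\in\bT}\EE^{\PP_{-\gamma\EN}}\!\bigl[-\exp\bigl(-\gamma((\bvt\cdot\bS)_T+B)\bigr)\bigr],
\]
and because $\EE[e^{-\gamma\EN}]>0$ does not depend on $B$ or $\bvt$, the acceptance set $\ag$ coincides with the unconditional $\gamma$-acceptance set computed under $\PP_{-\gamma\EN}$ with zero random endowment. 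Consequently $\nuwg{B}$ equals the unconditional writer's indifference price computed with reference measure $\PP_{-\gamma\EN}$ in place of $\PP$.

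Next, I would apply the unconditional dual representation to this transformed problem under $\PP_{-\gamma\EN}$ to obtain
\[
\nuwg{B}=\sup_{\QQ\in\MM_a}\!\Bigl\{\EE^{\QQ}[B]-\tfrac{1}{\gamma}\bigl(\HH(\QQ|\PP_{-\gamma\EN})-\HH(\QQ^{*}|\PP_{-\gamma\EN})\bigr)\Bigr\},
\]
where $\QQ^{*}$ is the minimal-entropy element of $\MM_a$ with respect to $\PP_{-\gamma\EN}$; the domain $\MM_a$ may be taken under either $\PP$ or $\PP_{-\gamma\EN}$ by equivalence of the two measures. By the definition of $\QQ^{(\cdot)}$ in subsection~\ref{sse:special-measures} applied to $-\gamma\EN$, we have $\QQ^{*}=\QQ^{(-\gamma\EN)}$, so the penalty is precisely $h_{-\gamma\EN}$ and the dual identity~\eqref{equ:dual-representation} follows.

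For the optimizer and its Radon--Nikodym derivative, the unconditional theorem produces a unique maximizer in $\MM_e$ with finite entropy relative to $\PP_{-\gamma\EN}$, namely the minimal-entropy measure with respect to the exponentially tilted measure $(\PP_{-\gamma\EN})_{\gamma B}$. Multiplying densities shows $(\PP_{-\gamma\EN})_{\gamma B}=\PP_{-\gamma\EN+\gamma B}$, so the maximizer is $\QQ^{(-\gamma\EN+\gamma B)}$ and lies in $\MM_{e,f}$. The identity~\eqref{equ:RN-6} then comes from the closed-form expression, also part of the cited theorems, for the density of the minimal-entropy measure as $k\exp(-\gamma(\hat{\bvt}\cdot\bS)_T)$, once one identifies the primal optimizer $\hat{\bvt}$ with $\bvt^{(-\gamma\EN+\gamma B)}$; this identification is immediate because, after the change of measure, the control problem defining $\ug{-B}$ and the pure utility-maximization problem under $\PP_{-\gamma\EN+\gamma B}$ with no endowment differ only by the positive multiplicative constant $\EE[e^{-\gamma\EN+\gamma B}]$. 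The main technical obstacle is this final bookkeeping: carefully tracking the two tilted measures and verifying that the admissibility class $\bT$ is preserved under the passage from $\PP$ to $\PP_{-\gamma\EN+\gamma B}$, so that the primal optimizers of the two equivalent control problems genuinely coincide (rather than agreeing only up to admissibility); once this is secured, everything else is essentially measure-theoretic routine riding on the known unconditional duality.
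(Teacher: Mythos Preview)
Your proposal is correct and follows exactly the route the paper itself indicates: the paper does not give a detailed proof but states that the theorem ``follows from Theorem 2.2 in \citet{DelGraRheSamSchStr02} and Theorem 2.1 in \citet{KabStr02}'', having already remarked at the start of subsection~\ref{subsec:Conditional} that the conditional indifference price is simply the unconditional one computed under $\PP_{-\gamma\EN}$. Your write-up supplies precisely the change-of-measure bookkeeping that this reduction requires; the only point worth tightening is your final caveat about $\bT$ being preserved, which follows because the bounded density $d\PP_{-\gamma\EN}/d\PP$ implies $\HH(\QQ|\PP)<\infty\iff\HH(\QQ|\PP_{-\gamma\EN})<\infty$, so that $\MM_{e,f}$ and hence $\bT$ are unchanged.
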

\begin{corollary}
\label{cor:nu-sc}
The maps $B\mapsto \nuwg{B}$ and $B\mapsto \nubg{B}$ are,
respectively,
lower and upper semi-continuous with respect to
the weak-* topology $\sigma(\linf,\lone)$.
\end{corollary}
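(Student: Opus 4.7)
The plan is to read the corollary straight out of the dual representation supplied by Theorem \ref{thm:dual-rep}. Since $\MM_a$ consists of measures absolutely continuous with respect to $\PP$, for every $\QQ\in\MM_a$ the Radon--Nikodym derivative $\tfrac{d\QQ}{d\PP}$ belongs to $\lone$, and so the linear functional
\[
\Lambda_{\QQ}:\linf\to\R,\qquad \Lambda_{\QQ}(B)=\EE^{\QQ}[B]=\EE^{\PP}\!\left[B\,\tfrac{d\QQ}{d\PP}\right]
\]
is, by the very definition of the weak-$*$ topology $\sigma(\linf,\lone)$, continuous. The penalty term $\tfrac{1}{\gamma}h_{-\gamma\EN}(\QQ)$ does not involve $B$, so each map $B\mapsto \Lambda_{\QQ}(B)-\tfrac{1}{\gamma}h_{-\gamma\EN}(\QQ)$ is a weak-$*$ continuous affine functional on $\linf$.

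First I would invoke (\ref{equ:dual-representation}) to write
\[
\nuwg{B}=\sup_{\QQ\in\MM_a}\Big(\Lambda_{\QQ}(B)-\tfrac{1}{\gamma}h_{-\gamma\EN}(\QQ)\Big),
\]
and then use the standard fact that a pointwise supremum of a family of continuous (or merely lower semi-continuous) functions is lower semi-continuous. This immediately yields weak-$*$ lower semi-continuity of $\nuwg{\cdot}$.

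For the buyer's price, I would then appeal to Proposition \ref{pro:first-cond}(1), which gives $\nubg{B}=-\nuwg{-B}$. Since $B\mapsto -B$ is (linearly) weak-$*$ continuous and the negation of a lower semi-continuous function is upper semi-continuous, the composition $B\mapsto -\nuwg{-B}$ is upper semi-continuous in $\sigma(\linf,\lone)$, proving the second assertion.

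There is essentially no obstacle here: the only point worth emphasizing is that the supremum is indexed over $\MM_a\subset\{\QQ\ll\PP\}$, which is precisely what makes each $\Lambda_{\QQ}$ weak-$*$ continuous; had we needed to take the supremum over a larger set of finitely additive measures, we would only have been able to conclude lower semi-continuity in a stronger topology (e.g.\ the Mackey or norm topology on $\linf$).
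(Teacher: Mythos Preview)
Your proof is correct and follows exactly the same approach as the paper: both derive lower semi-continuity of $\nuwg{\cdot}$ directly from the dual representation \eqref{equ:dual-representation} as a supremum of $\sigma(\linf,\lone)$-continuous affine functionals. Your treatment is simply more detailed, spelling out why each $\Lambda_{\QQ}$ is weak-$*$ continuous and handling the buyer's price explicitly via $\nubg{B}=-\nuwg{-B}$.
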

\begin{proof}
It suffices to note that (\ref{equ:dual-representation}) represents
$\nuwg{\cdot}$ as a supremum of $\sigma(\linf,\lone)$-continuous and
linear functionals on $\linf$.
\end{proof}
The function $h_{-\gamma\EN}(\cdot)$ in Theorem \ref{thm:dual-rep}
is sometimes called the
\define{penalty function} for the indifference price $\nuwg{\cdot}$,
and is clearly convex (strictly convex on its effective domain $\MM_e,f$). It is well known (see,
e.g., \citet{FolSch04}, Lemma 3.29) that the  conjugate
representation,
\[ \EE[ X \log X]=\sup_{Y\in\linf} \left( \EE[ YX ] - \log
  \EE[e^Y]\right),\]
where we use the convention that $x\log(x)=+\infty$, for $x<0$, is
valid for all $X\in\lone$. Using this representation and
the natural identification of finite
measures equivalent to $\PP$ with their Radon-Nikodym derivatives in $\lone$, we can
readily establish the following properties of the penalty function
$h$:
\begin{proposition}
\label{pro:lsc-h}
  For $C\in\linf$, $h_C:\lone\mapsto [0,+\infty]$ is convex (strictly on its effective domain) and
  $\sigma(\lone,\linf)$-lower semicontinuous.
\end{proposition}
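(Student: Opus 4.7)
The strategy is to reduce $h_C$ to the ordinary relative entropy $\HH(\cdot|\PP)$ modulo an affine correction and a constant, and then apply the variational formula for $x\log x$ quoted just above. Let $Z_C = e^C/\EE[e^C] \in \linf$ denote the density of $\PP_C$ with respect to $\PP$. For $\QQ \in \MM_a$ with Radon--Nikodym density $X = d\QQ/d\PP \in \lone$, a direct change-of-measure computation gives
\[
\HH(\QQ|\PP_C) = \EE\Big[X\log\tfrac{X}{Z_C}\Big] = \HH(\QQ|\PP) - \EE^\QQ[C] + \log\EE[e^C].
\]
Since $\HH(\hq{C}|\PP_C)$ and $\log\EE[e^C]$ are finite constants depending only on $C$, this yields the identity
\[
h_C(\QQ) = \HH(\QQ|\PP) - \EE^\QQ[C] + \mathrm{const}(C), \qquad \QQ \in \MM_a,
\]
so it suffices to show that the map $X \mapsto \HH(\QQ|\PP)$, extended by $+\infty$ to the complement (in $\lone$) of the set of densities of elements of $\MM_a$, is convex, strictly convex on its effective domain, and $\sigma(\lone,\linf)$-lower semicontinuous; the correction $X \mapsto -\EE[XC]$ is $\sigma(\lone,\linf)$-continuous because $C \in \linf$ and preserves all three properties.

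The conjugate formula $\EE[X\log X] = \sup_{Y\in\linf}(\EE[YX] - \log\EE[e^Y])$ (with the convention $x\log x = +\infty$ for $x < 0$) exhibits the map $X \mapsto \EE[X\log X]$ as a pointwise supremum of $\sigma(\lone,\linf)$-continuous affine functionals on $\lone$. Hence this map is convex and $\sigma(\lone,\linf)$-l.s.c.~on all of $\lone$, and its strict convexity on its effective domain follows at once from the strict convexity of $x\log x$ on $(0,\infty)$.

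It remains to show that $\MM_a$, identified with a subset of $\lone$ via Radon--Nikodym densities, is convex and $\sigma(\lone,\linf)$-closed, so that its convex indicator is weak-$*$ lower semicontinuous. Convexity is immediate. For closedness I would use local boundedness of $\bS$ to fix a sequence of stopping times $\tau_n \uparrow T$ with each component of $\bS^{\tau_n}$ bounded; then $\QQ \in \MM_a$ is equivalent to the conjunction of $X \geq 0$ a.s., $\EE[X] = 1$, and the martingale constraints
\[
\EE\big[X\,(\bS^{\tau_n}_t - \bS^{\tau_n}_s)\,\ind{A}\big] = 0, \qquad n \in \N,\ 0 \leq s \leq t \leq T,\ A \in \FF_s,
\]
each of which is a $\sigma(\lone,\linf)$-closed affine condition because the test random variable in each bracket is bounded. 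Intersecting these closed sets produces a $\sigma(\lone,\linf)$-closed convex set, and adding its indicator to the convex weak-$*$ l.s.c.~function $X \mapsto \EE[X\log X] - \EE[XC]$ preserves both properties and yields all three claims about $h_C$.

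The main obstacle I anticipate is this last step: phrasing the local martingale property of $\bS$ under $\QQ$ in terms of $\sigma(\lone,\linf)$-continuous test functionals. This is precisely where local boundedness of $\bS$ (which is part of the standing assumptions) is used, since without it the (local) martingale requirement could not be reduced to a family of bounded test random variables and the weak-$*$ closedness of $\MM_a$ would be substantially more delicate. Everything else amounts to standard convex-analytic bookkeeping.
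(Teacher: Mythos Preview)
Your proposal is correct and follows essentially the same route as the paper: the paper's proof is only a one-line pointer to the variational formula $\EE[X\log X]=\sup_{Y\in\linf}(\EE[YX]-\log\EE[e^Y])$, and you simply make explicit the two steps the paper leaves implicit --- the affine change-of-measure identity $\HH(\QQ|\PP_C)=\HH(\QQ|\PP)-\EE^\QQ[C]+\log\EE[e^C]$ and the $\sigma(\lone,\linf)$-closedness of $\MM_a$ via local boundedness of $\bS$. Your handling of the latter (reducing the local-martingale constraint to bounded test functionals through a fixed localizing sequence for the bounded stopped processes $\bS^{\tau_n}$) is exactly the right way to close that gap, and is the one place where the standing local-boundedness assumption is genuinely used.
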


An immediate corollary of Proposition \ref{pro:lsc-h} and the
Hahn-Banach Theorem in the separation form (see \citet{FolSch04}
for details on convex analysis and \citet{JouSchTou06}, Theorem
2.1) is the following result:
\begin{proposition}\label{pro:penalty}
The map $h_{-\gamma\EN}$ is the minimal penalty function for
$\nuwg{\cdot}$, i.e.
\[h_{-\gamma\EN}(\QQ)\leq \tilde{h}(\QQ),\text{ for all }
\QQ\in\MM_a,\]
 whenever
the function $\tilde{h}$ satisfies
\[ \nuwg{B} =\underset{\QQ\in \mathcal{M}_{a}}{\sup }%
\Big( \mathbb{E}_{\QQ}\left( B\right) -\frac{1}{\gamma
}\tilde{h}(\QQ)\Big),\text{ for all $B\in\linf$. }\]
Moreover, we have the following, dual, conjugate representation
\[ \frac{1}{\gamma} h_{-\gamma\EN}(\QQ)=\sup_{B\in\linf}
\left(\EE^{\QQ}[B]- \nuwg{B}\right),\ \forall\,
\QQ\in\lone(\Omega,\FF_{T},\PP_{-\gamma\EN}).
\]
\end{proposition}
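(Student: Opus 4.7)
I would establish the conjugate representation first, since minimality will then follow from a one-line rearrangement. Set $\psi(\QQ):=\gamma\sup_{B\in\linf}\bigl(\EE^{\QQ}[B]-\nuwg{B}\bigr)$ for $\QQ\in\lone(\Omega,\FF_T,\PP_{-\gamma\EN})$, identifying $\QQ$ with its density with respect to $\PP_{-\gamma\EN}$. The goal is to show $\psi=h_{-\gamma\EN}$ (with both sides equal to $+\infty$ outside $\MM_a$), after which the minimality claim is immediate.

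For $\psi(\QQ)\le h_{-\gamma\EN}(\QQ)$ on $\MM_a$, I would apply Theorem~\ref{thm:dual-rep} directly: it gives $\nuwg{B}\ge \EE^\QQ[B]-\tfrac{1}{\gamma}h_{-\gamma\EN}(\QQ)$ for every $B\in\linf$ and $\QQ\in\MM_a$, so rearranging and taking the supremum over $B$ yields the bound.

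For the reverse inequality I would invoke the Fenchel--Moreau biconjugate theorem applied to $\rho:=\tfrac{1}{\gamma}h_{-\gamma\EN}$ regarded as a function on $\lone$ (extended by $+\infty$ outside $\MM_a$, in accordance with the definition in Theorem~\ref{thm:dual-rep}). By Proposition~\ref{pro:lsc-h}, $\rho$ is convex and $\sigma(\lone,\linf)$-lower semicontinuous, and it is proper because Theorem~\ref{thm:dual-rep} exhibits $\QQ^{(-\gamma\EN)}\in\MM_{e,f}$ with finite relative entropy. With the natural pairing $\langle B,\QQ\rangle := \EE^\QQ[B]$, the representation \eqref{equ:dual-representation} is exactly the statement $\rho^*(B)=\nuwg{B}$ on $\linf$; hence the biconjugate identity $\rho=\rho^{**}$ delivers
\[
\tfrac{1}{\gamma}h_{-\gamma\EN}(\QQ)=\sup_{B\in\linf}\bigl(\EE^\QQ[B]-\nuwg{B}\bigr)
\]
for every $\QQ\in\lone$, simultaneously handling the case $\QQ\notin\MM_a$, where both sides equal $+\infty$.

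With the conjugate representation in hand, the minimality is a one-liner: if $\tilde h$ also satisfies the supremum representation, then for every $\QQ\in\MM_a$ and every $B\in\linf$ we have $\tilde h(\QQ)\ge \gamma(\EE^\QQ[B]-\nuwg{B})$; taking the supremum over $B$ and invoking the identity just proved yields $\tilde h(\QQ)\ge h_{-\gamma\EN}(\QQ)$. The principal technical obstacle is bookkeeping rather than conceptual: one must carefully check that the extended penalty remains proper and $\sigma(\lone,\linf)$-lsc on all of $\lone$, and that the abstract Fenchel pairing is correctly identified with the measure-theoretic expectation so that Fenchel--Moreau applies without gaps.
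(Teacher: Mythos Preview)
Your proposal is correct and follows essentially the same route as the paper: the paper states the result as an immediate corollary of Proposition~\ref{pro:lsc-h} together with the Hahn--Banach theorem in separation form (citing \citet{FolSch04} and \citet{JouSchTou06}), which is precisely the Fenchel--Moreau biconjugation argument you spell out. Your version simply makes explicit the bookkeeping that the paper leaves to the references.
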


\subsection{Some auxiliary results}
Using the linearity of the set $\bT$ of the admissible trading
strategies and the properties of the exponential utility, one can
deduce (see \citet{Bec01a}, Chapter 1) that the following scaling
property holds true:
\begin{equation}
   \label{equ:scaling-property}
   \begin{split}
 \alpha \unuw{\alpha \gamma}{B}=
 \unuw{\gamma}{\alpha B},
\text{ for $B\in\linf$, $\gamma,\alpha >0$}.
   \end{split}
\end{equation}
The following Lemma (which is used several times in the present
paper) states that the risk measures induced by the indifference
price has a certain subadditive property, with true  additivity
holding only in exceptional cases.
\begin{lemma}
\label{lem:B-1-2}
For $B_1,B_2\in \mathbb{L}^{\infty }$ and $\gamma_1,\gamma_2>0$, let
$\tgamma>0$ be given by
$\frac{1}{\tgamma}=\frac{1}{\gamma_1}+\frac{1}{\gamma_2}$. Then,
\begin{itemize}
\item[(a)]
$\unuwone{B_1} +\unuwtwo{B_2} \geq  \unuw{\tgamma}{B_1+B_2}$,
  and
\item[(b)]
the following two conditions are equivalent
\begin{enumerate}
\item  $\unuwone{B_1} +\unuwtwo{B_2} =  \unuw{\tgamma}{B_1+B_2}$,
\item  $\frac{\gamma _{1}}{\tgamma}B_1\sim
\frac{\gamma_{2}}{
\tgamma }B_2$.
\end{enumerate}
\end{itemize}
\end{lemma}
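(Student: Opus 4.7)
The natural tool is the dual (robust) representation from Theorem \ref{thm:dual-rep} specialized to $\EN=0$, which reads
\[ \unuw{\gamma}{B}=\sup_{\QQ\in\MM_a}\Big\{\EE^{\QQ}[B]-\tfrac{1}{\gamma}\big(\HH(\QQ|\PP)-\HH(\QZ|\PP)\big)\Big\}, \]
with a unique maximizer $\QQ^{(\gamma B)}\in\MM_{e,f}$ whose density is given by \eqref{equ:RN-6}. The inequality in (a) then becomes a one-line Fenchel/Young-type argument: for any common test measure $\QQ\in\MM_a$, bounding each unconditional price from below by its integrand at $\QQ$ and summing gives
\[ \unuwone{B_1}+\unuwtwo{B_2}\geq \EE^{\QQ}[B_1+B_2]-\Big(\tfrac{1}{\gamma_1}+\tfrac{1}{\gamma_2}\Big)\big(\HH(\QQ|\PP)-\HH(\QZ|\PP)\big), \]
and $\tfrac{1}{\gamma_1}+\tfrac{1}{\gamma_2}=\tfrac{1}{\tgamma}$ lets me take the supremum over $\QQ$ on the right to recover $\unuw{\tgamma}{B_1+B_2}$.

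For the equivalence in (b), I would use the strict convexity of $\QQ\mapsto\HH(\QQ|\PP)$ on its effective domain (Proposition \ref{pro:lsc-h}) to argue that equality in the inequality $\sup(f_1+f_2)\leq \sup f_1+\sup f_2$, where $f_i(\QQ)=\EE^{\QQ}[B_i]-\tfrac{1}{\gamma_i}(\HH(\QQ|\PP)-\HH(\QZ|\PP))$, forces the (unique) maximizers of $f_1$ and $f_2$ to coincide, i.e.\ $\QQ^{(\gamma_1 B_1)}=\QQ^{(\gamma_2 B_2)}$. Feeding this equality into the explicit formula \eqref{equ:RN-6}, combined with $\tfrac{d\PP_{\gamma B}}{d\PP}\propto e^{\gamma B}$, converts the identity of densities into
\[ \gamma_1 B_1-\gamma_1(\bt^{(\gamma_1 B_1)}\cdot\bS)_T=\gamma_2 B_2-\gamma_2(\bt^{(\gamma_2 B_2)}\cdot\bS)_T+c, \]
for some constant $c\in\R$. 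Since $\bT$ is a vector space, rearranging shows $\gamma_1 B_1-\gamma_2 B_2\in\rinf$; dividing by $\tgamma$ yields exactly $\tfrac{\gamma_1}{\tgamma}B_1\sim\tfrac{\gamma_2}{\tgamma}B_2$.

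The converse direction of (b) is essentially the same computation run backwards: if $\tfrac{\gamma_1}{\tgamma}B_1-\tfrac{\gamma_2}{\tgamma}B_2=c+(\bt\cdot\bS)_T$, I would use replication-invariance of $\unuwg{\cdot}$ (Proposition \ref{pro:risk-measures}) and the scaling relation \eqref{equ:scaling-property} to directly check that the candidate common optimizer reproduces both $f_1$-sup and $f_2$-sup, so (a) must hold with equality.

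The main potential obstacle is the \emph{strict}-convexity step: one needs that on $\{\QQ:\HH(\QQ|\PP)<\infty\}$ the entropy is truly strictly convex and that each $f_i$ admits a genuine (not merely approximate) unique maximizer. Both facts are supplied by Theorem \ref{thm:dual-rep} and Proposition \ref{pro:lsc-h}, so no extra work is required beyond citing them.
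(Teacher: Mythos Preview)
Your proposal is correct and follows essentially the same route as the paper: both use the dual representation (Theorem \ref{thm:dual-rep}) to reduce (a) to the elementary inequality $\sup f_1+\sup f_2\geq\sup(f_1+f_2)$, then for (b) invoke strict convexity of the entropic penalty to force the dual maximizers to coincide and read off replicability from the explicit density \eqref{equ:RN-6}, with the converse handled via the scaling relation \eqref{equ:scaling-property} and replication-invariance. The only cosmetic difference is that the paper carries out $(2)\Rightarrow(1)$ by a direct primal computation rather than by exhibiting a common dual optimizer, but this is the same idea.
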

\begin{proof}\
\begin{enumerate}
\item[(a)] Using the dual representation (\ref{equ:dual-representation}),
 the inequality in (a) above  is equivalent to the following inequality
\begin{multline}
   \label{equ:sum-sups}
\sup_{\QQ\in\MM_a} \Big(\EE^{\QQ}[B_1]-\frac{1}{\gamma_1} h(\QQ) \Big)+
\sup_{\QQ\in\MM_a} \Big(\EE^{\QQ}[B_2]-\frac{1}{\gamma_1} h(\QQ) \Big)
\geq
\sup_{\QQ\in\MM_a} \Big(\EE^{\QQ}[B_1+B_2]-\frac{1}{\tgamma} h(\QQ) \Big),
\end{multline}
which always holds for elementary reasons. \item[(b)] $(1) \Rightarrow
(2)$. If the equality in (1) above holds, then it also holds in
(\ref{equ:sum-sups}). By strict convexity of the function
$h(\cdot)$ in this effective domain, i.e. on $\MM_{e,f}$, and the
scaling property (\ref{equ:scaling-property}), this is equivalent
to  equality of dual minimizers
\[
\hq{\frac{\gamma_1}{\tgamma}B_1}=\hq{\frac{\gamma_2}{\tgamma}B_2}=\hq{
B_1+B_2}.
\]
By the representation (\ref{equ:RN-6}) of the Radon-Nikodym
derivatives of the above measures, we get
\begin{equation}
   \label{equ:}
   \nonumber
   \begin{split}
k_1 e^{(\bt ^{(\frac{\gamma_1}{\tgamma}B_1)}\cdot \bS)_{T}}
e^{\frac{\gamma_1}{\tgamma} B_1}&=
 \frac{d\QQ^{(\frac{\gamma_1}{\tgamma}B_1)}}
{d\PP_{\frac{\gamma_1}{\tgamma} B_1}}
\frac{d\PP_{\frac{\gamma_1}{\tgamma}}B_1}{d\PP}
=\frac{d\QQ^{(\frac{\gamma_1}{\tgamma}B_1)}}{d\PP}=\\
&=\frac{d\QQ^{(\frac{\gamma_2}{\tgamma}B_2)}}{d\PP}
=
 \frac{d\QQ^{(\frac{\gamma_2}{\tgamma}B_2)}}
{d\PP_{\frac{\gamma_2}{\tgamma} B_2}}
\frac{d\PP_{\frac{\gamma_2}{\tgamma}}}{d\PP}= k_2 e^{(\bt
^{(\frac{\gamma_2}{\tgamma} B_2)}\cdot \bS)_{T}}
e^{\frac{\gamma_2}{\tgamma} B_2},
   \end{split}
\end{equation}
and so $\frac{\gamma_1}{\tgamma} B_1-\frac{\gamma_2}{\tgamma} B_2=
(\bt\cdot \bS)_T+k$, where $k=\log(k_2)-\log(k_1)$ and $\bt=\bt
^{(\frac{\gamma_2}{\tgamma} B_2)}-\bt ^{(\frac{\gamma_1}{\tgamma}
  B_1)}$.

$(2)\Rightarrow (1)$. Conversely, suppose that
$\frac{\gamma_1}{\tgamma} B_1-\frac{\gamma_2}{\tgamma} B_2=
(\bt\cdot \bS)_T+k$, for some $k\in\R$ and $\bt\in\bT$.
Using the scaling property (\ref{equ:scaling-property}), the equality
in (1) is equivalent to
\begin{equation}
   \label{equ:equivalent-nus}
   \begin{split}
\frac{1}{\gamma_1} \unuwb{ \frac{\gamma_1}{\bar{\gamma}} B_1 }
+\frac{1}{\gamma_2} \unuwb{ \frac{\gamma_2}{\bar{\gamma}} B_2 }
=
\frac{1}{\bar{\gamma}} \unuwb{  B_1+B_2 }
   \end{split}
\end{equation}
By the risk equivalence between $\frac{\gamma _{1}}{\tgamma}B_1$
and $\frac{\gamma_{2}}{ \tgamma }B_2$ and the replication
invariance of $\unuwb{\cdot}$, we have
\begin{equation}
   \label{equ:equivalent-nus-2}
   \begin{split}
     \frac{1}{\gamma_1} \unuwb{ \frac{\gamma_1}{\bar{\gamma}} B_1 }
     +\frac{1}{\gamma_2} \unuwb{ \frac{\gamma_2}{\bar{\gamma}} B_2 } &=
     \frac{1}{\gamma_1} \unuwb{ \frac{\gamma_1}{\bar{\gamma}} B_1 }
     +\frac{1}{\gamma_2} \unuwb{ \frac{\gamma_1}{\bar{\gamma}}
       B_1+k+(\bt\cdot\bS)_T } \\
     &=\frac{1}{\bar{\gamma}} \unuwb{ \frac{\gamma_1}{\bar{\gamma}}
       B_1}+\frac{k}{\gamma_2}.
   \end{split}
\end{equation}
On the other hand,
\begin{equation}
   \label{equ:equivalent-nus-3}
   \begin{split}
     \frac{1}{\bar{\gamma}} \unuwb{ B_1+B_2 } &=
     \frac{1}{\bar{\gamma}}
     \unuwb{B_1+\frac{\gamma_1}{\gamma_2}B_1+\frac{\bar{\gamma}}{\gamma_2}
       (k+(\bt\cdot \bS)_T) }= \frac{1}{\bar{\gamma}}\unuwb{
       \frac{\gamma_1}{\bar{\gamma}} B_1 }+ \frac{k}{\gamma_2}.
   \end{split}
\end{equation}
The equality in (\ref{equ:equivalent-nus}) now follows directly from
(\ref{equ:equivalent-nus-2}) and (\ref{equ:equivalent-nus-3}).
\ \\[-8ex]
\end{enumerate}
\end{proof}

The conjugacy between (affine transformations of) $\nuwg{\cdot}$
and $h(\cdot)$, as displayed in Theorem \ref{thm:dual-rep} and
Proposition \ref{pro:penalty}, yields directly the following
auxiliary result:
\begin{lemma}
\label{lem:aux-nuw}
For $\EN,\tilde{\EN}\in\linf$, $\gamma>0$,
 the following two statements are equivalent
\begin{enumerate}
\item $\nuw{\gamma}{B}{\EN}\geq \nuw{\gamma}{B}{\tilde{\EN}}$,
for all $B\in\linf$,
\item $h_{-\gamma\EN}(\QQ)\leq h_{-\gamma\tilde{\EN}}(\QQ)$,
for all $\QQ\in\MM_{a}$.
\end{enumerate}
\end{lemma}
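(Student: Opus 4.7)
The plan is to obtain this as a direct application of the conjugate duality between the indifference price $\nuwg{\cdot}$ and the penalty $h_{-\gamma\EN}(\cdot)$ provided by Theorem \ref{thm:dual-rep} and Proposition \ref{pro:penalty}. Both implications amount to monotonicity of the pointwise supremum with respect to the data on the other side of the Legendre-type conjugacy, so no new estimates are needed.

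For the implication $(2)\Rightarrow (1)$, I would start from the dual representation
\[ \nuwg{B}=\sup_{\QQ\in\MM_a}\Big(\EE^{\QQ}[B]-\tfrac{1}{\gamma}h_{-\gamma\EN}(\QQ)\Big),\]
of Theorem \ref{thm:dual-rep}. If $h_{-\gamma\EN}(\QQ)\leq h_{-\gamma\tilde{\EN}}(\QQ)$ for every $\QQ\in\MM_a$, then for each fixed $B\in\linf$ the map $\QQ\mapsto \EE^{\QQ}[B]-\tfrac{1}{\gamma}h_{-\gamma\EN}(\QQ)$ dominates $\QQ\mapsto \EE^{\QQ}[B]-\tfrac{1}{\gamma}h_{-\gamma\tilde{\EN}}(\QQ)$, and taking the supremum over $\QQ\in\MM_a$ yields $\nuw{\gamma}{B}{\EN}\geq \nuw{\gamma}{B}{\tilde{\EN}}$.

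For the implication $(1)\Rightarrow (2)$, I would invoke the conjugate representation from Proposition \ref{pro:penalty},
\[ \tfrac{1}{\gamma}h_{-\gamma\EN}(\QQ)=\sup_{B\in\linf}\big(\EE^{\QQ}[B]-\nuw{\gamma}{B}{\EN}\big),\]
valid for every probability measure $\QQ$ whose Radon-Nikodym derivative lies in $\lone(\Omega,\FF_T,\PP_{-\gamma\EN})$, and analogously for $\tilde{\EN}$. Under hypothesis (1), for every $B\in\linf$ we have $\EE^{\QQ}[B]-\nuw{\gamma}{B}{\EN}\leq \EE^{\QQ}[B]-\nuw{\gamma}{B}{\tilde{\EN}}$, and taking the supremum over $B\in\linf$ gives $h_{-\gamma\EN}(\QQ)\leq h_{-\gamma\tilde{\EN}}(\QQ)$. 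For $\QQ\in\MM_a$ but with Radon-Nikodym derivative not in the stated $\lone$ space (equivalently, when it is not absolutely continuous with respect to $\PP_{-\gamma\EN}\sim\PP$, a situation that does not actually occur since $\PP_{-\gamma\EN}\sim\PP$), the conclusion is trivial because both sides equal $+\infty$ outside of $\MM_a$ and agree on it through the same representation applied to $\tilde{\EN}$.

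There is no real obstacle here; the only care required is to verify that the conjugate formula in Proposition \ref{pro:penalty} is applicable at every $\QQ\in\MM_a$, which follows from the equivalence $\PP_{-\gamma\EN}\sim\PP$ and the convention $h_{-\gamma\EN}\equiv +\infty$ off $\MM_a$. Accordingly I expect the proof to amount to a few lines invoking the two representation results in sequence.
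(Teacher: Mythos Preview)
Your proposal is correct and is exactly the argument the paper has in mind: the paper states that the lemma ``yields directly'' from the conjugacy displayed in Theorem~\ref{thm:dual-rep} and Proposition~\ref{pro:penalty}, and you have simply spelled out the two monotonicity-of-suprema steps that this entails. Your remark that the conjugate formula from Proposition~\ref{pro:penalty} applies at every $\QQ\in\MM_a$ because $\PP_{-\gamma\EN}\sim\PP$ (so $\lone(\PP_{-\gamma\EN})=\lone(\PP)$ as sets) is the right justification for the only point that needed checking.
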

\noindent We use Lemma \ref{lem:aux-nuw} in the proof of the following
proposition:
\begin{proposition}\label{nonagree}
For $\EN\in\linf$ and $\gamma>0$, the following statements are
equivalent:
\end{proposition}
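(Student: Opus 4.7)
The plan is to use Lemma \ref{lem:aux-nuw} as the principal tool, since it sits immediately before the proposition and is explicitly flagged as the ingredient to be used in the proof. That lemma transports any order relation $\nuw{\gamma}{B}{\EN}\geq \nuw{\gamma}{B}{\tilde{\EN}}$ holding uniformly in $B\in\linf$ into a reverse order relation $h_{-\gamma\EN}(\QQ)\leq h_{-\gamma\tilde{\EN}}(\QQ)$ holding uniformly in $\QQ\in\MM_a$, and vice versa. Thus the strategy is to reduce statements about indifference prices to statements about penalty functions, and then pass to statements about $\EN$ itself by exploiting the explicit structure of $h_{-\gamma\EN}$ from Theorem \ref{thm:dual-rep}.

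First, for any ``easy'' direction of the equivalence (say, starting from $\EN\in\rinf$, or more generally $\EN\sim\tilde{\EN}$), I would invoke Proposition \ref{pro:first-cond}(3) to collapse the conditional prices to a common object (the unconditional price, or the price conditioned on $\tilde{\EN}$). This takes care of one implication almost automatically, and provides the benchmark against which the other implications are measured.

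Second, for the converse — deducing a structural property of $\EN$ from a uniform inequality or identity on prices — I would apply Lemma \ref{lem:aux-nuw} (in both directions if the hypothesis is an equality) to translate into the penalty-function language, then unfold
\[
h_{-\gamma\EN}(\QQ)
=\HH\bigl(\QQ\,\big|\,\PP_{-\gamma\EN}\bigr)-\HH\bigl(\QQ^{(-\gamma\EN)}\,\big|\,\PP_{-\gamma\EN}\bigr),
\]
together with the elementary identity $\tfrac{d\PP_{-\gamma\EN}}{d\PP}=\tfrac{e^{-\gamma\EN}}{\EE[e^{-\gamma\EN}]}$, to rewrite everything under the common reference measure $\PP$. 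The algebra should turn the hypothesized (in)equality on $h_{-\gamma\EN}(\cdot)-h_{-\gamma\tilde{\EN}}(\cdot)$ into a condition on the $\QQ$-expectation of $\EN-\tilde{\EN}$ that holds uniformly over $\MM_a$, and a duality argument (or direct bipolar-type reasoning with $\rinf$) will then force $\EN-\tilde{\EN}\in\rinf$, i.e.\ $\EN\sim\tilde{\EN}$.

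The step I expect to be the main obstacle is exactly this last rigidity passage: extracting replicability of $\EN-\tilde{\EN}$ from a uniform identity over all $\QQ\in\MM_a$. This is where one must lean on the strict convexity of $h$ on its effective domain $\MM_{e,f}$ (Proposition \ref{pro:lsc-h}), the uniqueness of the minimizers $\QQ^{(-\gamma\EN)}$ and $\QZ$ (Theorem \ref{thm:dual-rep} together with Assumption \ref{ass:NA}), and the conjugate/bipolar representation of $\nuwg{\cdot}$ from Proposition \ref{pro:penalty}. An efficient way to finish is to observe that if the penalty functions agree on $\MM_a$, then so do the dual representations of the corresponding prices, which applied to $B=\pm(\EN-\tilde{\EN})$ combined with the replication-invariance in Proposition \ref{pro:risk-measures} yields $\EN-\tilde{\EN}\in\rinf$.
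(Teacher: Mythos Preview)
Your overall architecture matches the paper's: the implication $(3)\Rightarrow(2)$ is indeed immediate from Proposition~\ref{pro:first-cond}(3), and the hard direction is correctly identified as passing through Lemma~\ref{lem:aux-nuw} to obtain a pointwise relation between the penalty functions $h_{-\gamma\EN}$ and $h_0$ on $\MM_a$. However, your plan has a genuine gap in the step you yourself flag as the main obstacle.

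Your ``efficient way to finish'' presumes that the penalty functions \emph{agree} on $\MM_a$; but hypothesis~(1) is only the inequality $\unuwg{B}\geq\nuwg{B}$, which via Lemma~\ref{lem:aux-nuw} gives only $h_{-\gamma\EN}(\QQ)\geq h_0(\QQ)$ for all $\QQ\in\MM_a$. Unfolding this as you suggest yields $\gamma\,\EE^{\QQ}[\EN]+C\geq 0$ uniformly in $\QQ$, i.e.\ a one-sided bound $\inf_{\QQ\in\MM_a}\EE^{\QQ}[\EN]\geq -C/\gamma$, which by itself carries no rigidity and certainly does not force $\EN\in\rinf$. Your bipolar argument, and the evaluation at $B=\pm\EN$, both need the two-sided equality of penalties, so as written your plan closes $(2)\Rightarrow(3)$ but not $(1)\Rightarrow(3)$.

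The paper resolves this not by algebra on $\EE^{\QQ}[\EN]$ but by a single well-chosen evaluation: plug $\QQ=\hq{-\gamma\EN}$ into the inequality $h_{-\gamma\EN}\geq h_0$. The left side vanishes (it is the minimum of $h_{-\gamma\EN}$), so $h_0(\hq{-\gamma\EN})\leq 0$; since $h_0\geq 0$, equality holds, and by uniqueness of the minimizer of $\HH(\cdot|\PP)$ one obtains $\hq{-\gamma\EN}=\hq{0}$. This is precisely statement~(4), which you do not mention but which the paper uses as the pivotal intermediate; from there, $(4)\Rightarrow(3)$ follows by reading off the Radon--Nikodym formula~\eqref{equ:RN-6} exactly as in the proof of Lemma~\ref{lem:B-1-2}. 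So the missing idea is this evaluation trick, and the missing node in your cycle is statement~(4).
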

\begin{enumerate}
\item $\unuwg{B}\geq \nuwg{B}$, for all $B\in\linf$,
\item $\unuwg{B}= \nuwg{B}$, for all $B\in\linf$,
\item $\EN\in\rinf$, and
\item $\hq{0}=\hq{-\gamma\EN}$.
\end{enumerate}
\begin{proof}
$(4)\Rightarrow (3)$\ Just like in the proof of implication
$(1)\Rightarrow (2)$ in Lemma \ref{lem:B-1-2}, we can use the
equation (\ref{equ:RN-6}) in Theorem \ref{thm:dual-rep} to show
that (4) implies (3).

$(3)\Rightarrow (2)$\ Follows immediately from statement (3) in
Proposition \ref{pro:first-cond}.

$(2)\Rightarrow (1)$\ Clearly, (1) is  weaker than (2).

$(1)\Rightarrow (4)$ By Lemma \ref{lem:aux-nuw}, the equality in (2)
implies that
$h_{-\gamma\EN}(\QQ)\geq  h(\QQ)$, for all $\QQ\in\MM_a$, i.e.
\[
\HH(\QQ|\PP_{-\gamma\EN})-\HH(\hq{-\gamma\EN}|\PP_{-\gamma_\EN})
\geq \HH(\QQ|\PP)-\HH(\hq{0}|\PP),\ \forall\, \QQ\in\MM_a.
 \]
In particular, for $\QQ=\hq{-\gamma\EN}$, we get
\[
\HH(\hq{-\gamma\EN}|\PP)\leq \HH(\hq{0}|\PP).
\]
Therefore, $\hq{-\gamma\EN}=\hq{0}$, by the strict convexity of
the relative entropy $\HH(\cdot|\PP)$ on its effective domain.
\end{proof}

Considered as convex risk measure, the indifference price is not homogeneous.
In fact, the homogeneity holds only for replicable claims as the
following proposition states.
\begin{proposition}\label{pro:alpha}
For $B,\EN\in\linf$ and $\gamma>0$, the following statements are
equivalent:
\begin{enumerate}
\item $\nuwg{\alpha B}= \alpha \nuwg{B}$, for some $\alpha\in
  \R\setminus\set{0,1}$,
\item $B\in\rinf$.
\end{enumerate}
\end{proposition}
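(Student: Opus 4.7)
The direction $(2) \Rightarrow (1)$ is essentially by construction: if $B = c + (\bt \cdot \bS)_T \in \rinf$, then $\alpha B$ is also replicable with replication price $\alpha c$, and the conditional indifference price of a replicable claim coincides with its unique arbitrage-free price. Thus $\nuwg{\alpha B} = \alpha c = \alpha \nuwg{B}$ — in fact for every $\alpha \in \R$, a much stronger statement than the hypothesis requires.

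For the substantive direction $(1) \Rightarrow (2)$, the plan is to argue by contraposition and reduce to the strict convexity already packaged in Lemma \ref{lem:strict-conv}. Define $w : \R \to \R$ by $w(\alpha) = \nuwg{\alpha B}$. Assume $B \notin \rinf$: then the scalar instance $n = 1$, $\bB = (B)$ of Assumption \ref{ass:no-repl} is satisfied, so Lemma \ref{lem:strict-conv} yields strict convexity of $w$ on $\R$. The normalization $w(0) = 0$ follows from the translation invariance of $\nuwg{\cdot}$ (Proposition \ref{pro:risk-measures}) together with the fact that constants are replicable.

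It then suffices to check a purely geometric fact: a strictly convex function $w : \R \to \R$ with $w(0) = 0$ cannot satisfy $w(\alpha) = \alpha w(1)$ for any $\alpha \in \R \setminus \{0, 1\}$. A short case analysis does the job — strict convexity applied to the chord from $0$ to $1$ disposes of $\alpha \in (0,1)$ (yielding $w(\alpha) < \alpha w(1)$), while writing $1 = \tfrac{1}{\alpha} \alpha + (1-\tfrac{1}{\alpha}) \cdot 0$ for $\alpha > 1$, and writing $0 = \tfrac{1}{1-\alpha} \alpha + \tfrac{-\alpha}{1-\alpha} \cdot 1$ for $\alpha < 0$, dispose of the remaining cases and yield $w(\alpha) > \alpha w(1)$. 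Each strict inequality contradicts the hypothesis, forcing $B \in \rinf$.

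The main obstacle is essentially absent: the analytic content is carried by Lemma \ref{lem:strict-conv} (itself a consequence of the equality-rigidity statement in Lemma \ref{lem:B-1-2}), and the remainder is an elementary three-case verification about strictly convex scalar functions.
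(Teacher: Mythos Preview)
Your argument is correct. The contrapositive reduction to strict convexity of $w(\alpha)=\nuwg{\alpha B}$ via Lemma~\ref{lem:strict-conv} is legitimate (no circularity: that lemma rests only on Lemma~\ref{lem:B-1-2}, which precedes Proposition~\ref{pro:alpha} and does not invoke it), and your three-case chord analysis for a strictly convex function through the origin is clean and complete.

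The paper takes a different route. Rather than packaging the rigidity into strict convexity, it works directly with the dual representation~\eqref{equ:dual-representation}: for $\alpha>0$ (after reducing to $\EN=0$), equality of $\unuwg{\alpha B}$ and $\alpha\unuwg{B}$ forces the penalty $h$ to vanish at the dual optimizers $\hq{\gamma B}$ and $\hq{\alpha\gamma B}$, hence both coincide with $\hq{0}$, and Proposition~\ref{nonagree} then gives $B\in\rinf$; the case $\alpha<0$ is reduced to $\alpha=-1$ and handled by a similar entropy argument. The paper's approach stays entirely within Appendix~A and extracts more structural information (it identifies the dual optimizer explicitly), whereas your approach is shorter and conceptually cleaner once Lemma~\ref{lem:strict-conv} is available, at the cost of a forward reference from the appendix into Section~\ref{sec:pepq}. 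If you wanted to keep the argument self-contained in the appendix, you could bypass Lemma~\ref{lem:strict-conv} and invoke Lemma~\ref{lem:B-1-2} directly to obtain the same strict convexity in one line.
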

\begin{proof}
We assume, for simplicity, that $\EN=0$ (otherwise, we simply
change the underlying probability to $\PP_{-\gamma\EN}$).

$(2)\Rightarrow (1)$\ If $B\in\rinf$, then $\alpha B\in\rinf$, so (1)
follows from the replication-invariance of $\unuwg{\cdot}$.

$(1)\Rightarrow (2)$\ Suppose, first, that (1) holds with $\alpha>0$.
Then
\[ \sup_{\QQ\in\MM_a}\left( \EE^{\QQ}[B]-\frac{1}{\gamma}
h(\QQ) \right)= \sup_{\QQ\in\MM_a}\left(
\EE^{\QQ}[B]-\frac{1}{\alpha \gamma} h(\QQ) \right).\] The two
maximized functions are strictly concave, ordered and agree only
for $\QQ$ such that $h(\QQ)=0$. Therefore,  the equality of their
(attained) suprema forces the relation $h(\hq{\gamma
B})=h(\hq{\alpha\gamma B})=0$, which, in turn, implies that
$\hq{\gamma B}=\hq{\alpha\gamma B} =\hq{0}$. We can conclude that
$B\in\rinf$ by using the implication $(4)\Rightarrow (3)$ in
Proposition \ref{nonagree}.

It remains to treat the case $\alpha<0$. By considering the random
variable $\abs{\alpha} B$ instead of $B$, it is clear that we can
safely assume that $\alpha=-1$, i.e., $\unuwg{-B}=-\unuwg{B}$.
Equivalently, we have
 \[ \inf_{\QQ\in\MM_a}\left( \EE^{\QQ}[B]+\frac{1}{\gamma}
h(\QQ) \right)=
\sup_{\QQ\in\MM_a}\left( \EE^{\QQ}[B]-\frac{1}{ \gamma}
h(\QQ) \right),\]
which, by positivity of $h(\cdot)$, implies that $h(\hq{\gamma
  B})=0$. We continue as above to conclude that $B\in\rinf$.
\end{proof}
\subsection{Relationship between conditional and unconditional
  indifference prices}

It has been observed (see Remark 1.3.2 in \citet{Bec01a}) that the
conditional indifference price can be written as difference of two
unconditional ones:
\begin{equation}
\label{equ:con-to-uncon}
 \nuwg{B}=\unuwg{B-\EN}-\unuwg{-\EN}=\unuwg{B-\EN}+\unubg{\EN}.
\end{equation}
A similar relationship, namely $\unubg{B}=\unubg{B+\EN}-\unubg{\EN}=\unubg{B+\EN}+\unuwg{-\EN}$,
holds for the buyer's conditional indifference prices.

\subsection{$\nuwg{B}$  as a function of $\gamma $}

It is a known property of the {\em unconditional} indifference price
that the mappings $\gamma\mapsto \unuwg{B}$ and $\gamma\mapsto
-\unubg{B}$ are non-decreasing. In fact, we have the following, more
precise, statement
\begin{proposition}
\label{pro:monotone}
 For $\gamma>0$ and $B\in\linf$,
the mapping $\gamma\mapsto \unuwg{B}$ ($\gamma\mapsto \unubg{B}$) is
\begin{enumerate}
\item constant and equal to the value $\EE^{\QQ}[B]$,
  constant over $\QQ\in\MM_a$, when $B\in\rinf$, and
\item strictly increasing (decreasing), otherwise.
\end{enumerate}
\end{proposition}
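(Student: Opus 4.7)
\medskip
\noindent\emph{Proof plan.} The plan is to combine three ingredients: the dual representation of Theorem \ref{thm:dual-rep}, the scaling property \eqref{equ:scaling-property}, and the rigidity result of Proposition \ref{pro:alpha}. Claim (1) is essentially a restatement of replication invariance, claim (2) splits into a monotonicity half (immediate from duality) and a strictness half (the only real content).

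First, I would dispatch claim (1). If $B\in\rinf$ then $B=c+(\bvt\cdot \bS)_T$ for some $c\in\R$ and $\bvt\in\bT$. The replication invariance of $\unuwg{\cdot}$ recorded in Proposition \ref{pro:risk-measures} immediately gives $\unuwg{B}=\unuwg{c}=c$ for every $\gamma>0$, while admissibility of $\bvt$ ensures that $\EE^{\QQ}[(\bvt\cdot\bS)_T]=0$ for every $\QQ\in\MM_{e,f}$, so $\EE^{\QQ}[B]=c$ and the two quantities coincide. The corresponding statement for $\unubg{\cdot}$ is identical.

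For the monotonicity half of (2), I would specialize Theorem \ref{thm:dual-rep} to the unconditional setting $\EN=0$ to write
\[
\unuwg{B}=\sup_{\QQ\in\MM_a}\Big(\EE^{\QQ}[B]-\tfrac{1}{\gamma}h_0(\QQ)\Big).
\]
Since $h_0(\QQ)\geq 0$ and $\gamma\mapsto 1/\gamma$ is strictly decreasing, each term in the supremum is non-decreasing in $\gamma$, and so is $\unuwg{B}$. The non-increasing monotonicity of $\unubg{\cdot}$ follows directly from the identity $\unubg{B}=-\unuwg{-B}$ of Proposition \ref{pro:first-cond}.

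The core step is strictness when $B\notin\rinf$, and here I would argue by contradiction. Suppose $\unuw{\gamma_1}{B}=\unuw{\gamma_2}{B}$ for some $0<\gamma_1<\gamma_2$, and set $\alpha=\gamma_2/\gamma_1>1$. Applying the scaling property \eqref{equ:scaling-property} to $\alpha B$ at risk-aversion $\gamma_1$ yields
\[
\unuw{\gamma_1}{\alpha B}=\alpha\,\unuw{\alpha\gamma_1}{B}=\alpha\,\unuw{\gamma_2}{B}=\alpha\,\unuw{\gamma_1}{B}.
\]
Since $\alpha\notin\{0,1\}$, Proposition \ref{pro:alpha} forces $B\in\rinf$, contradicting the hypothesis. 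The buyer's version transfers via $\unubg{B}=-\unuwg{-B}$ together with $-B\in\rinf\Leftrightarrow B\in\rinf$. No individual step should pose a serious obstacle; the only real insight is that the scaling identity converts a would-be plateau of $\gamma\mapsto\unuwg{B}$ into exactly the positive-homogeneity relation that Proposition \ref{pro:alpha} rules out for non-replicable $B$.
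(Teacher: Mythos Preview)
Your proposal is correct and follows essentially the same route as the paper: replication invariance for part (1), the dual representation \eqref{equ:dual-representation} for non-decrease, and then the scaling identity \eqref{equ:scaling-property} to convert an equality $\unuw{\gamma_1}{B}=\unuw{\gamma_2}{B}$ into the homogeneity relation $\unuw{\gamma_1}{\alpha B}=\alpha\,\unuw{\gamma_1}{B}$ with $\alpha=\gamma_2/\gamma_1>1$, which Proposition \ref{pro:alpha} rules out for $B\notin\rinf$. Your write-up is in fact slightly cleaner than the paper's, which contains a couple of minor typos in the indices.
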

\begin{proof} We only deal with the writer's price $\unuwg{B}$. The
  case of the buyer's price is parallel.
\begin{enumerate}
\item By the replication invariance of the $\unuwg{\cdot}$, the
value of $\unuwg{B}$ equals to the value $\EE^{\QQ}[B]$,
$\QQ\in\MM_a$, when $B\in\rinf$. \item Suppose now that
$\unuw{\gamma_1}{B}\leq \unuw{\gamma_2}{B}$, for some
$0<\gamma_1<\gamma_2$. By the dual representation
(\ref{equ:dual-representation}), we have
$\unuw{\gamma_1}{B}=\unuw{\gamma_2}{B}$, and using the scaling
property (\ref{equ:scaling-property}), we get
\[ \alpha \unuw{\gamma_2}{B}=
\unuw{\gamma_2}{ \alpha B},\]
where $\alpha=\gamma_2/\gamma_1>1$. By Proposition \ref{pro:alpha},
$B\in\rinf$.
\end{enumerate}
\end{proof}

A similar proposition in the conditional case fails. Indeed, here is a
simple example. Pick $\EN\not\in\rinf$, and set $B=\EN$, Then
$\nuwg{\EN}= \unubg{\EN}$ -  a {\em strictly decreasing} function of
$\gamma$. An even more instructive example in which the dependence of
$\gamma$ ceases to be monotone {\em at all} is given below.

\begin{example}
  We adopt the setting of Example \ref{exa:first}, and assume that the
  coefficients $b$ and $a$ are chosen in such a way that the distribution of the
  random variable $Y_T$ is diffuse (under $\PP$, and, therefore, under every
  equivalent martingale measure). Let $\QZ$ be the minimal-entropy
  martingale measure and let $g_i:\R\to\R$,
  $i=1,2$ be two bounded
  Borel-measurable functions.
  We set $\EN=-g_1(Y_T)$ and $B=g_2(Y_T)-g_1(Y_T)$, and compute the
conditional indifference price $\nuwg{B}$ as a difference
$\nuwg{B}=\unuwg{B-\EN}-\unuwg{-\EN}$. By the expression
(\ref{equ:con-to-uncon}) and the formula (\ref{ex1.formula}), we
have
\begin{equation}
   \label{equ:example-non-monotonw}
   \begin{split}
     \nuwg{B}&=\unuwg{g_2(Y_T)}-\unuwg{g_1(Y_T)} \\
&=
\frac{1}{\gamma (1-\rho^2)}\Big(
\ln \EE^{\QZ}[ \exp(\gamma (1-\rho^2) g_1(Y_T))] -
\ln \EE^{\QZ}[ \exp(\gamma (1-\rho^2)
g_2(Y_T))]\Big) .
   \end{split}
\end{equation}

The intervals of monotonicity of the mapping $\gamma\mapsto \nuwg{B}$
therefore coincide with the intervals of monotonicity of the function
$f:(0,\infty)\to\R$ given by
\[  f(\gamma)=\frac{1}{\gamma} \left(
\ln \EE^{\QZ}[ X_1^{\gamma}]- \ln \EE^{\QZ}[ X_2^{\gamma}]\right),\]
where the bounded and positive random variables $X_i$, are given by
$X_i=\exp((1-\rho^2) g_i(Y_T))$, $i=1,2$. It is clear that, thanks to
the assumption of diffusivity of the random variable $Y_T$, any
pair of probability distributions with compact support in  $(0,\infty)$
can be chosen for $X_1$ and $X_2$ by the appropriate choice of the
functions $g_1$ and $g_2$.

Thanks to the boundedness of $X_1$ and $X_2$, we can easily obtain the
following asymptotic expansion for the function $f$ around $\gamma=0$:
\[ f(\gamma)= \EE^{\QZ}[X_1]-\EE^{\QZ}[X_2]+\tot \gamma
(\Var_{\QZ}[X_1]-\Var_{\QZ}[X_2])+o(\gamma).\] In a similar
manner, we have
\[ \lim_{\gamma\to\infty} f(\gamma)= \ln
\norm{X_1}_{\linf}-\ln\norm{X_2}_{\linf}.\]
Therefore,  if $X_1$ and $X_2$  satisfy
\begin{enumerate}
\item $\EE^{\QZ}[X_1]<\EE^{\QZ}[X_2]$, and
\item $\Var_{\QZ}[X_1]<\Var_{\QZ}[X_2]$,
\end{enumerate}
the function $f$ is strictly decreasing and negative in a neighborhood of
$\gamma=0$. If, in addition
\begin{enumerate}
\setcounter{enumi}{2}
\item $\norm{X_1}_{\linf}>\norm{X_2}_{\linf}$,
\end{enumerate}
 holds, this trend   cannot continue
 for all $\gamma$
since
$f(+\infty)=\ln
\left(\norm{X_1}_{\linf}/\norm{X_2}_{\linf}\right)>0
>\EE[X_1]-\EE[X_2]=f(0+).$ The straightforward
construction of examples of
the random variables $X_1$
and $X_2$
having the above properties is left to the reader.
\end{example}

\subsection{Asymptotics of the conditional indifference prices}
The asymptotics of the unconditional indifference prices in the
risk-aversion parameter $\gamma$  are well-known (see, for instance,
Corollary 5.1 in \citet{DelGraRheSamSchStr02} or Proposition 1.3.4 in
\citet{Bec01a}):
\begin{equation}
   \label{equ:uncon-asym}
   \begin{split}
 \lim_{\gamma\to 0} \unuwg{B} & = \EE^{\QZ}[B],\ \lim_{\gamma\to +\infty}
 \unuwg{B}=\sup_{\QQ\in\MM_{e,f}} \EE^{\QQ}[B],\\
 \lim_{\gamma\to 0} \unubg{B} & = \EE^{\QZ}[B],\ \lim_{\gamma\to +\infty}
 \unubg{B}=\inf_{\QQ\in\MM_{e,f}} \EE^{\QQ}[B].
   \end{split}
\end{equation}
Using the decomposition (\ref{equ:con-to-uncon}),
these are easily extended to the conditional case:
\begin{proposition}
\label{pro:con-asym}
For $B,\EN\in\linf$, we have
\begin{equation}
   \label{equ:con-asym}
   \begin{split}
 \lim_{\gamma\to 0} \nuwg{B} & = \EE^{\QZ}[B],\
 \lim_{\gamma\to +\infty} \nuwg{B}=\sup_{\QQ\in\MM_{e,f}} \EE^{\QQ}[B-\EN]+\inf_{\QQ\in\MM_{e,f}}
\EE^{\QQ}[\EN], \text{ and } \\
\lim_{\gamma\to 0} \nubg{B} & = \EE^{\QZ}[B],\
 \lim_{\gamma\to +\infty} \nubg{B}=\inf_{\QQ\in\MM_{e,f}} \EE^{\QQ}[B-\EN]+\sup_{\QQ\in\MM_{e,f}}
\EE^{\QQ}[\EN], \text{ and }
   \end{split}
\end{equation}
\end{proposition}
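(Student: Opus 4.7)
The proof is essentially a direct application of the decomposition identity (\ref{equ:con-to-uncon}) combined with the known unconditional asymptotics (\ref{equ:uncon-asym}). My plan is to reduce each of the four limits in (\ref{equ:con-asym}) to a pair of unconditional limits.

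For the small risk-aversion case $\gamma \to 0$, I would write
\[
\nuwg{B}=\unuwg{B-\EN}-\unuwg{-\EN}
\]
and apply the first line of (\ref{equ:uncon-asym}) to each summand separately. Since $\QZ$ does not depend on $\gamma$, the right-hand side converges to $\EE^{\QZ}[B-\EN]-\EE^{\QZ}[-\EN]=\EE^{\QZ}[B]$. The analogous computation for $\nubg{B}$ proceeds via $\nubg{B}=\unubg{B+\EN}-\unubg{\EN}$ (the buyer's analogue of (\ref{equ:con-to-uncon})) and yields $\EE^{\QZ}[B]$.

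For the large risk-aversion case $\gamma \to \infty$, I would use the alternative form of the decomposition, namely
\[
\nuwg{B}=\unuwg{B-\EN}+\unubg{\EN},
\]
which is immediate from $\unubg{\EN}=-\unuwg{-\EN}$. Applying (\ref{equ:uncon-asym}) termwise produces
\[
\lim_{\gamma\to\infty}\nuwg{B}=\sup_{\QQ\in\MM_{e,f}}\EE^{\QQ}[B-\EN]+\inf_{\QQ\in\MM_{e,f}}\EE^{\QQ}[\EN],
\]
which is the desired formula. The buyer's case is handled identically, writing $\nubg{B}=\unubg{B+\EN}+\unuwg{-\EN}=\unubg{B+\EN}-\unubg{\EN}$ and using $\lim_{\gamma\to\infty}\unubg{\cdot}=\inf_{\QQ}\EE^{\QQ}[\cdot]$ together with $-\lim_{\gamma\to\infty}\unubg{\EN}=\sup_{\QQ}\EE^{\QQ}[-\EN]=-\inf_{\QQ}\EE^{\QQ}[\EN]$, then recombining to obtain $\inf_{\QQ}\EE^{\QQ}[B-\EN]+\sup_{\QQ}\EE^{\QQ}[\EN]$.

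There is essentially no obstacle here: the only thing to check is that the two unconditional limits can be taken independently, which is justified by the fact that both $\unuwg{B-\EN}$ and $\unubg{\EN}$ (or $\unuwg{-\EN}$) converge as $\gamma\to 0$ or $\gamma\to\infty$ to finite limits, so the sum of limits equals the limit of the sum. Since $B,\EN\in\linf$, every quantity involved is finite, and (\ref{equ:uncon-asym}) applies directly without any additional integrability hypothesis.
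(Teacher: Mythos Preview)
Your approach is exactly the one the paper indicates: the proposition is stated immediately after the sentence ``Using the decomposition (\ref{equ:con-to-uncon}), these are easily extended to the conditional case,'' and your proof simply spells out that remark by applying (\ref{equ:uncon-asym}) termwise to $\nuwg{B}=\unuwg{B-\EN}+\unubg{\EN}$ and its buyer's analogue.

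One slip to flag: in the buyer's $\gamma\to\infty$ case your intermediate computation correctly produces
\[
\lim_{\gamma\to\infty}\nubg{B}=\inf_{\QQ\in\MM_{e,f}}\EE^{\QQ}[B+\EN]-\inf_{\QQ\in\MM_{e,f}}\EE^{\QQ}[\EN],
\]
but your final ``recombining'' step to $\inf_{\QQ}\EE^{\QQ}[B-\EN]+\sup_{\QQ}\EE^{\QQ}[\EN]$ is not a valid algebraic identity (the two expressions differ in general). What the decomposition actually delivers is the formula with $B+\EN$, which is also what one obtains from $\nubg{B}=-\nuwg{-B}$ and the already-established writer's limit; the discrepancy points to a typo in the stated buyer's formula rather than to any defect in your method.
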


One can, further, establish the continuous differentiability of the
map $\gamma\mapsto\nuwg{B}$, for $\gamma\in (0,\infty)$, by noting
the fact that \begin{equation}
   \label{equ:rep-gamma}
   \begin{split}
 \nuwg{B}= \frac{1}{\gamma }\left( \unuw{1}{\gamma (
    B-\EN) } - \unuw{1}{-\gamma \EN}\right)
   \end{split}
\end{equation}
 and using it together with
the
result of Theorem 5.3 in \citet{IlhJonSir05}
which states
that the function $\gamma\mapsto \unuw{1}{\gamma C}$ is continuously
differentiable on $(0,\infty)$ for $C\in\linf$.

For $n\in\N$, let $(\linf)^n$ denote the set of all $n$-tuples
$\bB=(B_1, \dots, B_n)$ of elements of $\linf$, with
$\norm{\bB}_{(\linf)^n}=\max_{k\leq n} \norm{B_k}_{\linf}$.  For
$\ba=(\alpha_1,\alpha_2,\dots, \alpha_n)\in\R^n$, we write
$\ba\cdot\bB=\sum_{k=1}^n \alpha_k B_k\in\linf$ and set
$\abs{\ba}=\max_{k\leq n} \abs{\alpha_k}$.
\begin{proposition}
\label{pro:joint-cont} For $\EN\in\linf$ and $\bB\in(\linf)^n$, the
function
$w:\R^n\times (0,\infty] \to \R$ given by
\begin{equation}
   \label{equ:def-bld}
   \begin{split}
 w(\ba,\gamma)=
\begin{cases}
  \nuwg{\ba\cdot\bB},& \gamma<\infty \\
 \sup_{\QQ\in\MM_{e,f}} \EE^{\QQ}[\ba\cdot\bB-\EN]+\inf_{\QQ\in\MM_{e,f}}
 \EE^{\QQ}[\EN],& \gamma=+\infty,
\end{cases}
   \end{split}
\end{equation}
is jointly continuous, and Lipschitz continuous on every subset $D$ of
its domain of the form
$D= [\gamma_0,\infty)\times\R^n$, $\gamma_0>0$.
\end{proposition}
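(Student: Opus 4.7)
The strategy is to decouple $\ba$ and $\gamma$, bound each dependence separately, and recombine by the triangle inequality.

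For the $\ba$-variable, I first note that for every fixed $\gamma\in(0,\infty)$, the map $B\mapsto\nuwg{B}$ is a cash-invariant, monotone convex risk measure (Proposition~\ref{pro:risk-measures}) and therefore $1$-Lipschitz with respect to $\|\cdot\|_{\linf}$. Taking $B=\ba\cdot\bB$ yields
\[
|w(\ba_1,\gamma)-w(\ba_2,\gamma)|\leq\|(\ba_1-\ba_2)\cdot\bB\|_{\linf}\leq K|\ba_1-\ba_2|,\quad K:=\sum_{k=1}^n\|B_k\|_{\linf},
\]
uniformly in $\gamma\in(0,\infty)$. The same estimate persists at $\gamma=\infty$: the only $\ba$-dependent piece of $w(\cdot,\infty)$ is $\ba\mapsto\sup_{\QQ\in\MM_{e,f}}\EE^{\QQ}[\ba\cdot\bB-\EN]$, and $B\mapsto\sup_{\QQ}\EE^{\QQ}[B]$ is sublinear and $1$-Lipschitz in $\linf$.

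For the $\gamma$-variable I would, for each fixed $\ba$, use the identity (\ref{equ:rep-gamma}) to rewrite $w(\ba,\gamma)$ as a sum of two expressions of the form $\gamma^{-1}\unuw{1}{\gamma C}$, $C\in\linf$. On $(0,\infty)$, each such expression is continuously differentiable in $\gamma$ by Theorem~5.3 of \citet{IlhJonSir05} (cited directly after (\ref{equ:rep-gamma})), and continuity at the endpoint $\gamma=+\infty$ is exactly the content of Proposition~\ref{pro:con-asym}. Joint continuity at an arbitrary $(\ba_0,\gamma_0)\in\R^n\times(0,\infty]$ then follows from
\[
|w(\ba,\gamma)-w(\ba_0,\gamma_0)|\leq K|\ba-\ba_0|+|w(\ba_0,\gamma)-w(\ba_0,\gamma_0)|,
\]
with each summand vanishing as $(\ba,\gamma)\to(\ba_0,\gamma_0)$.

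To upgrade to Lipschitz continuity on $D=[\gamma_0,\infty)\times\R^n$, it remains only to control the $\gamma$-dependence, as the $\ba$-estimate above is already uniform. Differentiating the representation in (\ref{equ:rep-gamma}) and using the identity $(d/d\gamma)\unuw{1}{\gamma C}=\EE^{\hq{\gamma C}}[C]$ together with the bounds $|\EE^{\hq{\gamma C}}[C]|\leq\|C\|_{\linf}$ and $|\unuw{1}{\gamma C}|\leq\gamma\|C\|_{\linf}$, one obtains an estimate on $|\partial_\gamma w(\ba,\gamma)|$ that is uniform for $\gamma\in[\gamma_0,\infty)$ (with the prefactor $1/\gamma_0$ arising from the $1/\gamma^2$ and $1/\gamma$ terms), giving the Lipschitz conclusion on $D$. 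The main obstacle is precisely the behaviour of $\partial_\gamma w$ as $\gamma\to\infty$: one must exploit the cancellation between the $\gamma^{-2}\unuw{1}{\gamma C}$ and $\gamma^{-1}(d/d\gamma)\unuw{1}{\gamma C}$ contributions, and this is exactly the point at which the hypothesis $\gamma_0>0$ becomes essential --- without it, the $\gamma^{-2}$ factor would fail to tame $\unuw{1}{\gamma C}$ near $\gamma=0$.
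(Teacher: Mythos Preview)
Your approach is essentially the paper's: decouple $\ba$ and $\gamma$, use the $1$-Lipschitz property of $B\mapsto\nuwg{B}$ for the former (the paper does exactly this, obtaining the bound (\ref{equ:Lip-ld}), and handles $\gamma=\infty$ by passing to the limit rather than arguing directly as you do), and use the representation (\ref{equ:rep-gamma}) for the latter. The only substantive difference is in how you extract the $\gamma$-Lipschitz estimate from (\ref{equ:rep-gamma}): you differentiate and bound $\partial_\gamma w$ via $(d/d\gamma)\unuw{1}{\gamma C}=\EE^{\hq{\gamma C}}[C]$, whereas the paper simply reuses the $1$-Lipschitz property of $\unuw{1}{\cdot}$ on the increment $\unuw{1}{\gamma_1 C}-\unuw{1}{\gamma_2 C}$, obtaining $|\gamma_1-\gamma_2|\,\|C\|_{\linf}$ directly (see (\ref{equ:cont-gam})). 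The paper's route is more elementary --- it requires no differentiability input from \citet{IlhJonSir05} --- but both arrive at the same $O(1/\gamma_0)$ constant.

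Your closing paragraph is confused, however. There is no ``obstacle'' as $\gamma\to\infty$ and no cancellation is needed there: your own bounds $|\unuw{1}{\gamma C}|\leq\gamma\|C\|_{\linf}$ and $|\EE^{\hq{\gamma C}}[C]|\leq\|C\|_{\linf}$ already give
\[
\bigl|\partial_\gamma\bigl(\gamma^{-1}\unuw{1}{\gamma C}\bigr)\bigr|\leq 2\|C\|_{\linf}/\gamma\leq 2\|C\|_{\linf}/\gamma_0
\]
term by term on $[\gamma_0,\infty)$. (If anything, it is near $\gamma=0$ that the two terms individually diverge like $1/\gamma$ and happen to cancel at leading order --- but the hypothesis $\gamma\geq\gamma_0$ makes this irrelevant.)
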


\begin{proof}
The functional $B\mapsto \nuwg{B}$ is positive and coincides with
identity on constants, so for $\gamma\in (0,\infty)$,
\begin{equation}
   \label{equ:Lip-ld}
   \begin{split}
 \abs{\nuwg{\ba_1\cdot\bB}-\nuwg{\ba_2\cdot \bB}} \leq
\norm{(\ba_1-\ba_2)\cdot \bB}_{\linf} \leq
\abs{\ba_1-\ba_2}\norm{\bB}_{(\linf)^n}.
   \end{split}
\end{equation}
For $\gamma=+\infty$, the validity of (\ref{equ:Lip-ld}) follows
by passing to the limit $\gamma\to\infty$. On the other hand, by
(\ref{equ:rep-gamma}), for $B\in\linf$, $\gamma_0>0$ and
$\gamma_1,\gamma_2 \in [\gamma_0,\infty)$,  we have
\begin{equation}
   \label{equ:cont-gam}
   \begin{split}
 \abs{\nuw{\gamma_1}{B}{\EN}-\nuw{\gamma_2}{B}{\EN}}&\leq
\frac{1}{\gamma_0}
\left(
  \abs{\unuw{1}{\gamma_1 (B-\EN) }-\unuw{1}{\gamma_2 (B-\EN) }}
\right. \\
&\ \left. \qquad + \abs{\unuw{1}{-\gamma_1 \EN}-\unuw{1}{-\gamma_2\EN}}
\right)\\
&\leq \frac{1}{\gamma_0}(\norm{B-\EN}_{\linf}+\norm{\EN}_{\linf})
\abs{\gamma_1-\gamma_2}.
   \end{split}
\end{equation}
Therefore, for each $\gamma_0>0$,
there exists a constant $C=C(\gamma_0)>0$ such that
\[ \abs{w(\ba_1,\gamma_1)-w(\ba_2,\gamma_2)}\leq C \left(
  \abs{\gamma_1-\gamma_2} + \abs{\ba_1-\ba_2} \right),\text{ for }
\gamma_1,\gamma_2 \in [\gamma_0,\infty),\ \ba_1,\ba_2\in \R^n.\]
The existence of the limit $w(\ba,\infty)=\lim_{\gamma\to\infty}
w(\ba,\gamma)$ is the final ingredient in the proof.
\end{proof}

\section{The residual risk process}\label{sec:res-risk}
In this Section, we deal with the  the notion of the residual risk in
the dynamics setting, i.e., the residual-risk process.
We first recall the definition of dynamic version of
the
(conditional) indifference price.
\subsection{A dynamic version of the indifference price}
In addition to the study of the indifference prices $\nuwg{B}$ and
$\nubg{B}$ defined at time $t=0$, one can restrict attention to
any subinterval $[t,T]$ of $[0,T]$, and consider the filtered
probability space $(\Omega,\FF, \set{\FF_u}_{u\in [t,T]},\PP)$ and
  the stock-price process $\set{\mathbf{S}_u}_{u\in [t,T]}$. The (conditional)
indifference
  price of the contingent claim $B$, defined on this restricted model,
  is denoted by $\tnuwg{B}$. More precisely (see \citet{ManSch05}, Proposition 12, page 2127 for details),
  $\tnuwg{B}$ can be defined to
  be the a.s.-unique solution of the following equation
\begin{multline}
   \label{equ:dynam-indif}
     \esssup_{\bvt\in\bT} \EE\Big[ -\exp\Big(-\gamma\big( \EN+\int_t^T
       \bvt_u \, d\bS_u +\tnuwg{B}-B \big) \Big)\Big|\FF_t\Big]\\
=
     \esssup_{\bvt\in\bT} \EE\Big[ -\exp\Big(-\gamma\big( \EN+\int_t^T
       \bvt_u \, d\bS_u \big) \Big)\Big|\FF_t\Big].
\end{multline}
One can show using standard dynamic-programming methods (see e.g.
\citet{ManSch05}) that,
  when seen as a stochastic process, $\prf{\tnuwg{B}}$ admits a c\'
  adl\' ag modification. The process $\prf{\tnuwg{B}}$, modified so as
  to become c\' adl\' ag, is called the \define{writer's indifference
    price process} for the claim $B$. A natural analogue
corresponding to the buyer's price can be introduced in a similar fashion.

\subsection{The residual risk process}
Having defined the dynamic version $\prf{\tnubg{B}}$ of the
indifference price process, one can render the notion of the residual
risk introduced in Section \ref{sec:agree}, dynamic, too. More
precisely, the writer's residual risk process $\prf{\trrwg{B}}$ is
defined by
\[ \trrwg{B}= \tnuwg{B}-\nuwg{B}-\int_0^t \bt^{(B|\EN)}_u\, d\bS_u.\]
(note that $\Trrw{\gamma}{B}{\EN}=\rrwg{B}$). We can define the
buyer's residual risk process by
$R_t^{(b)}(B;\gamma|\EN)=\trrwg{-B}$. It is straightforward that
\begin{equation}
   \label{equ:dec-R}
   \begin{split}
     \trrwg{B}=\trrwg{B-\EN}-\trrwg{-\EN},\ t\in [0,T].
   \end{split}
\end{equation}
 and that the process $\prf{\trrwg{B}}$ admits a c\' adl\'
ag modification. It has been shown in \citet{ManSch05} (see Theorem
13) that when $\mathbb{F}$ is left-continuous, the residual risk
process admits a representation in terms of a martingale
orthogonal to $\bS$. We state the straightforward extension of
this result to the conditional case below.
\begin{theorem}[Mania M. and Schweizer M. (2005)]
\label{thm:man-sch}
Suppose that the
  filtration $\mathbb{F}$ is continuous, and let the process
  $\prf{\trrwg{B}}$ be as above. Then there exists a process
  $\prf{\tllwg{B}}$ such that
  \begin{enumerate}
    \item $\prf{\tllwg{B}}$ is a $\hq{-\gamma\EN}$-martingale in the space
      $BMO(\hq{-\gamma\EN})$, and
    \item $\trrwg{B}=\tllwg{B}-\frac{\gamma}{2} \ab{ \llwg{B} }_t$.
  \end{enumerate}
When $\EN\sim 0$, the family $\{\ullwg{B}\}_{\gamma>0}$ admits a
limit $\ullw{0}{B}$, as $\gamma\searrow 0$, in $BMO(\QZ)$. The
process $\ullw{0}{B}$ can be identified as a term in the
Kunita-Watanabe decomposition
\begin{equation}
   \label{equ:kun-wat}
   \begin{split}
B_t=E_{\QZ}[B]+\int_0^t \hat{\bt}^{(B)}_u\, d\bS_u+\tullw{0}{B},\ t\in [0,T],
   \end{split}
\end{equation}
of the $\QZ$-martingale $B_t=\EE^{\QZ}[B|\FF_t]$, where $\hat{\bt}^{(B)}$ is
an $\bS$-integrable predictable process for which  $(\hat{\bt}^{(B)}\cdot \bS)$
a $\QZ$-square integrable martingale. In particular,
$\ullw{0}{B}$ is strongly orthogonal to any $\QZ$-local martingale of the
form $(\bt\cdot \bS)$, $\bt\in L(\bS)$.
\end{theorem}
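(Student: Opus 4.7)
The plan is to reduce everything to the unconditional result of \citet{ManSch05}, then stitch the pieces together using the decomposition \eqref{equ:dec-R} of the conditional residual risk.

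First I would argue the change-of-measure reduction. Because $\EN\in\linf$, the measure $\PP_{-\gamma\EN}$ is equivalent to $\PP$ with bounded density, so left-continuity of $\mathbb{F}$ is preserved and the class $\bT$ of admissible strategies is unchanged. Under $\PP_{-\gamma\EN}$ the conditional indifference price becomes unconditional (cf.\ Proposition \ref{pro:first-cond} and subsection \ref{subsec:Conditional}), and by \eqref{equ:con-to-uncon} the conditional residual-risk process splits as $\trrwg{B}=\trrw{\gamma}{B-\EN}{0}-\trrw{\gamma}{-\EN}{0}$. Apply the unconditional Mania--Schweizer theorem to each of the two bounded claims $B-\EN$ and $-\EN$ (computed under $\PP_{-\gamma\EN}$, whose minimal-entropy martingale measure is easily checked to coincide with $\hq{-\gamma\EN}$): this produces BMO$(\hq{-\gamma\EN})$-martingales $\tllw{\gamma}{B-\EN}{0}$ and $\tllw{\gamma}{-\EN}{0}$ such that each residual risk has the form $L-\tfrac{\gamma}{2}\ab{L}$. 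Setting
\[
\tllwg{B}=\tllw{\gamma}{B-\EN}{0}-\tllw{\gamma}{-\EN}{0}
\]
immediately gives part (1); part (2) then follows from the identity
\[
\ab{\llw{\gamma}{B-\EN}{0}}_t-\ab{\llw{\gamma}{-\EN}{0}}_t=\ab{\llwg{B}}_t,
\]
which holds because, after expanding $\ab{L_1-L_2}$ by polarization, the cross and squared terms combine to exactly mirror the arithmetic $\trrw{\gamma}{B-\EN}{0}-\trrw{\gamma}{-\EN}{0}=\trrwg{B}$. I would verify this cancellation carefully, since it is the one place where the unconditional statements do not simply concatenate.

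For the limit claim, one has $\EN\sim 0$, hence $\PP_{-\gamma\EN}=\PP$ and $\hq{-\gamma\EN}=\QZ$. The unconditional Mania--Schweizer result gives BMO$(\QZ)$-convergence of $\ullwg{B}$ to a limit $\ullw{0}{B}$ as $\gamma\searrow 0$. To identify the limit with the Kunita--Watanabe remainder term, I would use the fact that, by \eqref{equ:uncon-asym}, $\unuwg{B}\to\EE^{\QZ}[B]$ as $\gamma\searrow 0$, while the optimal strategy for the exponential problem converges (in the $\bS$-integral sense) to the $\QZ$-Kunita--Watanabe integrand $\hat{\bt}^{(B)}$ appearing in \eqref{equ:kun-wat}. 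Passing to the limit in the decomposition $B=\unuwg{B}+(\bt^{(B)}\cdot\bS)_T+\Tullw{\gamma}{B}$ then yields \eqref{equ:kun-wat} and identifies $\ullw{0}{B}$ as the orthogonal term.

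Strong orthogonality of $\ullw{0}{B}$ to every $(\bt\cdot\bS)$ with $\bt\in L(\bS)$ such that $(\bt\cdot\bS)$ is a $\QZ$-square-integrable martingale is inherited from the defining property of the Kunita--Watanabe decomposition, which I would quote from \citet{AnsStr93b}. The main technical obstacle is the second paragraph above: verifying that the quadratic-variation bracket is additive in the right way to produce $\ab{\llwg{B}}$ rather than some uncontrolled remainder, which amounts to showing that the two $L$-processes arising from $B-\EN$ and $-\EN$ are in fact the components of a single coherent decomposition of $\trrwg{B}$ under $\hq{-\gamma\EN}$, rather than independent outputs of the unconditional theorem.
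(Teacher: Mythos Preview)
The paper does not actually prove this theorem; it is stated as a citation of \citet{ManSch05} together with the remark that the extension to the conditional case is ``straightforward.'' The intended straightforward extension is to apply the unconditional Mania--Schweizer result \emph{once}, directly to the single claim $B$, after passing to the measure $\PP_{-\gamma\EN}$ (under which the conditional problem becomes unconditional and the minimal-entropy measure is $\hq{-\gamma\EN}$). That yields a single $\hq{-\gamma\EN}$-BMO martingale $L$ with $\trrwg{B}=L_t-\tfrac{\gamma}{2}\ab{L}_t$ in one stroke.

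Your route is different, and it has a real gap. You split via \eqref{equ:dec-R} and apply the unconditional theorem separately to $B-\EN$ and $-\EN$, obtaining martingales $L_1,L_2$ with $R_i=L_i-\tfrac{\gamma}{2}\ab{L_i}$. Subtracting gives
\[
\trrwg{B}=(L_1-L_2)-\tfrac{\gamma}{2}\bigl(\ab{L_1}-\ab{L_2}\bigr),
\]
and for this to equal $(L_1-L_2)-\tfrac{\gamma}{2}\ab{L_1-L_2}$ you would need $\ab{L_1}-\ab{L_2}=\ab{L_1-L_2}$, i.e.\ $\ab{L_1-L_2,L_2}=0$. There is no reason for the orthogonal-martingale part attached to the claim $-\EN$ to be strongly orthogonal to the difference; these two processes come from two separate BSDEs and no such relation is available. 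Your polarization remark is circular: it assumes the very structure you are trying to establish. A secondary issue is that the decomposition \eqref{equ:dec-R} expresses $\trrwg{B}$ via unconditional residual risks under $\PP$, whose associated $L$-processes are $\QZ$-martingales, whereas the statement requires a $\hq{-\gamma\EN}$-martingale; mixing in ``computed under $\PP_{-\gamma\EN}$'' does not repair this, because under that measure the decomposition \eqref{equ:dec-R} no longer holds as written. The fix is simply not to split: change measure first, then invoke Mania--Schweizer once. Your treatment of the $\EN\sim 0$ limit and the Kunita--Watanabe identification is fine and matches what the paper quotes.
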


\def\cprime{$'$} \def\cprime{$'$}

\end{document}